\newcommand{\declarecolor}[2]{\definecolor{#1}{RGB}{#2}\expandafter\newcommand\csname #1\endcsname[1]{\textcolor{#1}{##1}}}
\definecolor{mydarkblue}{rgb}{0,0.08,0.45}
\let\poly\relax
\newcommand{\poly}{\mathrm{poly}}
\newcommand{\phireg}{\Phi\text{-}\mathsf{Reg}}
\newcommand{\philinreg}{\Philin\text{-}\mathsf{Reg}}
\newcommand{\mat}[1]{\mathbf{#1}}
\newcommand{\conv}{\text{conv}}
\def\va{{\bm{a}}}
\def\vb{{\bm{b}}}
\def\vc{{\bm{c}}}
\def\vv{{\bm{v}}}
\def\vx{{\bm{x}}}
\def\vy{{\bm{y}}}
\renewcommand\vec\bm
\def\mA{{\mathbf{A}}}
\def\mI{{\mathbf{I}}}
\def\mK{{\mathbf{K}}}
\def\mV{{\mathbf{V}}}
\def\cA{{\mathcal{A}}}
\def\cB{{\mathcal{B}}}
\def\cC{{\mathcal{C}}}
\def\cE{{\mathcal{E}}}
\def\cK{{\mathcal{K}}}
\def\cX{{\mathcal{X}}}
\def\cY{{\mathcal{Y}}}
\newcommand{\delimit}[3]{\newcommand{#1}[1]{\left#2##1\right#3}}
\DeclareMathOperator*{\argmin}{argmin}
\DeclareMathOperator*{\argmax}{argmax}
\let\E\relax
\DeclareMathOperator*{\E}{\mathbb E}
\newcommand{\Phiconst}{\Phi_\mathsf{CON}}
\newcommand{\Philin}{\Phi_\mathsf{LIN}}
\newcommand{\pr}{\mathbb{P}}
\newcommand{\Sym}{\mathfrak{G}}
\newcommand{\eah}{\mathtt{EAH}}
\newcommand{\ger}{\mathtt{GER}}
\newcommand{\sep}{\mathtt{SEP}}
\newcommand{\either}{\mathtt{SEPorGER}}
\newcommand{\vol}{\op{vol}}
\newcommand{\hatF}{\widehat{F}}
\newcommand{\hatvx}{\widehat{\vx}}
\newcommand{\proj}{\Pi}
\newcommand{\VIgap}{\mathrm{VIGap}}
\newcommand{\FIXP}{$\mathsf{FIXP}$}
\renewcommand{\R}{\mathbb R}
\newcommand{\N}{\mathbb N}
\newcommand{\Q}{\mathbb Q}
\let\op\operatorname
\let\eps\epsilon
\let\grad\nabla
\let\ip\ev
\newcommand{\supp}{\op{supp}}
\newcommand{\Id}{\op{Id}}
\newcommand{\defeq}{:=}
\newcommand{\range}[1]{[#1]}
\newcommand{\ie}{{\em i.e.}\xspace}
\newcommand{\eg}{{\em e.g.}\xspace}
\renewcommand{\co}{\op{conv}}
\newcommand{\sgn}{\op{sgn}}
\newcommand{\fix}{\op{FP}}
\newcommand{\vxstar}{\vx^\star}
\newcommand{\xstar}{x^\star}
\newcommand{\jlce}{ALCE\xspace}
\newcommand{\jlces}{ALCEs\xspace}
\newcommand{\Phipl}{(\Phi_i)_{i=1}^n}
\theoremstyle{plain}
\newtheorem{theorem}{Theorem}[section]
\newtheorem{proposition}[theorem]{Proposition}
\newtheorem{lemma}[theorem]{Lemma}
\newtheorem{claim}[theorem]{Claim}
\newtheorem{corollary}[theorem]{Corollary}
\theoremstyle{definition}
\newtheorem{definition}[theorem]{Definition}
\newtheorem{example}[theorem]{Example}
\theoremstyle{remark}
\newtheorem{remark}[theorem]{Remark}
\renewcommand\vec\bm
\renewcommand\grad\nabla
\newcommand*\tcircle[1]{%
  \raisebox{-0.5pt}{%
    \textcircled{\fontsize{7pt}{0}\fontfamily{phv}\selectfont #1}%
  }%
}
\newcommand{\et}[1]{{ {\color{olive}{[ET:~#1]}}}}
\title{Expected Variational Inequalities}
\author[1]{Brian Hu Zhang\thanks{Equal contribution.}}
\author[1]{Ioannis Anagnostides$^*$}
\author[1,2]{Emanuel Tewolde}
\author[1,2]{Ratip Emin Berker}
\author[3]{Gabriele Farina}
\author[1,2,4]{Vincent Conitzer}
\author[1,5]{Tuomas Sandholm}
\affil[1]{Carnegie Mellon University}
\affil[2]{Foundations of Cooperative AI Lab (FOCAL)}
\affil[3]{Massachusetts Institute of Technology}
\affil[4]{University of Oxford}
\affil[5]{Additional affiliations: Strategy Robot, Inc., Strategic Machine, Inc., Optimized Markets, Inc.}
\affil[ ]{}
\affil[ ]{\texttt{\{bhzhang,ianagnos,etewolde,rberker,conitzer,sandholm\}}\texttt{@cs.cmu.edu}, \texttt{gfarina}\texttt{@mit.edu}}
\begin{document}

\maketitle

\pagenumbering{gobble}

\begin{abstract}
Variational inequalities (VIs) encompass many fundamental problems in diverse areas ranging from engineering to economics and machine learning. However, their considerable expressivity comes at the cost of computational intractability. In this paper, we introduce and analyze a natural relaxation---which we refer to as \emph{expected variational inequalities (EVIs)}---where the goal is to find a {\em distribution} that satisfies the VI constraint {\em in expectation}. By adapting recent techniques from game theory, we show that, unlike VIs, EVIs can be solved in polynomial time under general (nonmonotone) operators. EVIs capture the seminal notion of \emph{correlated equilibria}, but enjoy a greater reach beyond games. 
We also employ our framework to capture and generalize several existing disparate results, including from settings such as smooth games, and games with coupled constraints or nonconcave utilities. 

\end{abstract}

\clearpage

\tableofcontents

\clearpage

\pagenumbering{arabic}

\section{Introduction}

\emph{Variational inequalities (VIs)} provide a unifying framework for analyzing a wide range of optimization and equilibrium problems. They have a host of important applications in engineering and economics~\citep{Facchinei03:Finite}, including identifying stationary points in constrained optimization; computing Nash equilibria in (noncooperative) multi-player games~\citep{Nash51:Non}, such as Cournot's classical model of oligopoly~\citep{Cournot38:Recherches}; predicting economic activity---commodity prices and consumer consumption---in a closed, competitive economy~\citep{Arrow54:Existence}, which is at the heart of general equilibrium theory; traffic equilibrium problems---estimating the steady-state of a congested network wherein users compete for its resources
~\citep{Dafermos80:Traffic}; frictional contact problems in mechanical engineering~\citep{Capatina14:Variational}; and pricing options, a foundational problem in financial economics
~\citep{Black73:Pricing}.

Formally, in a general form, a VI can be defined as follows.\footnote{As is standard, a VI throughout this paper refers to the \emph{Stampacchia} VI~\citep{Kinderlehrer00:Introduction}.}

\begin{definition}\label{def:vi}
    Let $\cX$ be a convex and compact subset of $\R^d$ and $F: \cX \to \R^d$ a bounded map. The {\em variational inequality (VI)} problem asks for a point $\vx \in \cX$ such that
    \begin{equation}\label{eq:vi}
        \ip{F(\vx), \vx' - \vx} \ge 0 \quad \forall \vx' \in \cX.
    \end{equation}
    \end{definition}
For computational purposes, it is common to consider the \emph{$\epsilon$-approximate} VI problem, wherein the right-hand side of ~\eqref{eq:vi} is replaced by $-\epsilon$ for some precision parameter $\epsilon > 0$.

\Cref{def:vi} abstracts the description of $F$ and $\cX$. As a concrete example, when $F : \vx \mapsto -\nabla u (\vx)$ is the negative gradient of a differentiable function $u : \cX \to \R$, the solutions to \eqref{eq:vi} are points that satisfy the first-order optimality conditions for maximizing $u$~\citep{Boyd04:Convex}. 

Unfortunately, the considerable expressivity of VIs comes at the expense of \emph{intractability}: even when $F$ is linear and $\epsilon$ is an absolute constant, identifying an $\epsilon$-approximate VI solution is computationally hard; this follows readily from the intractability of Nash equilibria---under plausible complexity assumptions~\citep{Daskalakis08:Complexity,Chen09:Settling,Rubinstein16:Settling}. Unconditional, query-complexity lower bounds have also been established~\citep{Hirsch89:Exponential,Babichenko16:Query}; \emph{cf.}~\citet{Milionis23:Impossibility} and~\citet{Hart03:Uncoupled} for other pertinent impossibility results.

This bleak realization has shifted the focus of contemporary research primarily to characterizing specific subclasses of VIs that elude those complexity barriers, with the ensuing line of work flourishing in recent years. 
Some notable examples include the classical \emph{Minty property}~\citep{Facchinei03:Finite,Mertikopoulos19:Learning,Malitsky15:Projected,Goktas25:Tractable}, as well as certain relaxations thereof~\citep{Diakonikolas21:Efficient,Bohm23:Solving,Bauschke21:Generalized,Combettes04:Proximal,Gorbunov23:Convergence,Cai24:Accelerated,Alacaoglu23:Beyond,Pethick22:Escaping,Lee21:Fast,Patris24:Learning,Choudhury24:Single,Lee22:Semi,Anagnostides24}.

Those important advances notwithstanding, the scope of such results is severely restricted. In this paper, we pursue a different, orthogonal avenue. Instead of restricting the class of {\em problems} to achieve computational tractability, we relax the underlying {\em solution concept}. Our main research question is:
\begin{quote}
    \centering
    \emph{Are there meaningful relaxations of the VI problem that can always be solved efficiently?}
\end{quote}
When specialized to games, this question can be seen as part of the research agenda recently outlined by~\citet{Daskalakis22:Non} in his 
address at the Nobel symposium about equilibrium computation in \emph{nonconcave games}---a major, new frontier in the interface of game theory and optimization.

\subsection{Our contribution: the expected VI problem}

To make progress on that central question, we introduce a natural relaxation of~VIs (in the context of~\Cref{def:vi}).

\begin{definition}\label{def:evi}
    Given a set of {\em deviations} $\Phi \subseteq \cX^\cX$, the \em{$\epsilon$-approximate $\Phi$-expected variational inequality ($\Phi$-EVI)} problem asks for a distribution $\mu \in \Delta(\cX)$ such that
    \begin{equation}\label{eq:evi}
        \E_{\vx\sim\mu} \ip{F(\vx), \phi(\vx) - \vx} \ge - \epsilon \quad \forall \phi \in \Phi.
    \end{equation}
\end{definition}

(The above definition does not specify how $\cX, F, \Phi$, and $\mu$ should be represented for computational purposes, but we will be explicit about representation whenever it is relevant.)

In words, \Cref{def:evi} only imposes (approximate) nonnegativity \emph{in expectation} for points $\vx$ drawn from $\mu \in \Delta(\cX)$. It certainly relaxes~\Cref{def:vi}: if $\vx$ satisfies~\eqref{eq:vi}, then the distribution $\mu$ that always outputs $\vx$ is also a $\Phi$-EVI solution. $\Phi$-EVIs are thus no harder than VIs (assuming that solutions exist). However, as we shall see, the primary justification of $\Phi$-EVIs is that they can be easier than VIs.

 \Cref{def:evi} is crucially parameterized by $\Phi$; the larger the set of deviations $\Phi$, the tighter the set of solutions. As will become clear, \Cref{def:evi} is intimately connected with notions of \emph{correlated equilibrium (CE)} from game theory~(\eg,~\citealp{Aumann74:Subjectivity}). The more permissive case where $\Phi$ comprises only constant functions, $\Phi = \Phiconst = \{ \phi_\vx : \vx \in \cX \}$ where $\phi_\vx(\vx') = \vx$ for all $\vx' \in \cX$, is perhaps the most basic relaxation of~\Cref{def:vi}; we call the $\Phiconst$-EVI problem simply the {\em EVI} problem.

\paragraph{Algorithms and complexity for $\Phi$-EVIs}

As it turns out, imposing no constraints on $\Phi$ results in an impasse: $\Phi$-EVIs are in general tantamount to regular VIs---thereby being \PPAD-hard (\Cref{cor:hard-EVI,cor:hard-EVI-quad}). On the other hand, unlike general VIs, one of our key contributions is to show that when $\Phi$ contains only linear maps, $\Philin$, $\Phi$-EVIs can be solved in time polynomial in the dimension $d$ and $\log(1/\epsilon)$ (\Cref{th:elvi}), establishing the promised computational property that separates EVIs from VIs. This result is based on \emph{ellipsoid against hope ($\eah$)}, the seminal algorithm of~\citet{Papadimitriou08:Computing} developed for computing correlated equilibria in multi-player games. (\Cref{sec:eah} gives a self-contained overview of $\eah$.) In doing so, we extend the scope of that algorithm to a much broader class of problems well beyond the realm of game theory. Notably, \Cref{th:elvi} applies even when $\cX$ is given implicitly through a membership oracle; this extension makes use of the recent technical approach of~\citet{Daskalakis24:Efficient}, discussed in more detail in~\Cref{sec:main}.

One limitation of~\Cref{th:elvi} is that it relies on the $\eah$ algorithm, 
which is slow in practice. We address this by also establishing more scalable algorithms that use convex quadratic optimization (\Cref{theorem:regret}) instead of the ellipsoid algorithm, albeit with a complexity growing polynomially in $1/\epsilon$. As a byproduct, we obtain the best-known algorithm for linear-swap regret minimization over explicitly represented polytopes, improving on \citet{Daskalakis24:Efficient} by reducing the per-iteration complexity.

In addition to their more favorable computational properties, we further show that $\Phi$-EVIs admit (approximate) solutions under more general conditions than their associated VIs---namely, without $F$ being continuous (\Cref{theorem:existence}); \Cref{sec:existence} documents further interesting aspects on existence.


\paragraph{Connection to other solution concepts}

As we have alluded to, $\Phi$-EVIs generalize (\Cref{example:CCE,example:CE}) the seminal concept of a \emph{(coarse) correlated equilibrium} \emph{\`a la}~\citet{Aumann74:Subjectivity} and \citet{Moulin78:Strategically} in finite games, and more generally \emph{$\Phi$-equilibria}~\citep{Greenwald03:General,Stoltz07:Learning,Gordon08:No} of concave games. 
What is more surprising is that $\Philin$-EVIs \emph{refine} CEs even in normal-form games; we give illustrative examples, together with an interpretation, in~\Cref{sec:games}. We also note that $\Phi$-EVIs can be used even in games with nonconcave utilities~\citep{Daskalakis22:Non,Cai24:Tractable,Ahunbay25:First} or noncontinuous gradients (as in nonsmooth optimization), as well as in (pseudo-)games with \emph{coupled constraints} (\emph{cf.}~\citealp{Bernasconi23:Constrained} and~\Cref{sec:related} for related work).


\paragraph{Further properties} As further motivation, we show that for certain structured problems, such as \emph{(quasar-)concave} optimization and \emph{polymatrix} zero-sum games, EVIs essentially coincide with VIs (\Cref{prop:collapse,prop:convex-equiv}). 

Finally, in certain applications, one might be interested in a VI solution mainly insofar as it provides guarantees in terms of an underlying objective, such as misclassification error or social welfare. Through that prism, the question is whether performance guarantees for VIs can be translated to EVIs as well. In~\Cref{sec:smoothness}, we establish a framework for accomplishing that (\Cref{def:smoothness}) by extending the celebrated \emph{smoothness} framework of~\citet{Roughgarden15:Intrinsic}, and provide interesting examples beyond game theory.

Taken together, these properties provide compelling justification for $\Phi$-EVIs as a solution concept \emph{in lieu} of VIs. \Cref{tab:results} gathers our main results. (Proofs are in~\Cref{sec:proofs}.)

\begin{table*}[!ht]
    \centering
    \small
    \renewcommand{\arraystretch}{1.3}
    \begin{tabular}{m{5.2cm} p{7.2cm} p{3.3cm}}
        \bf Result & \bf Description & \bf Reference \\ \toprule
         \rowcolor{gray!20} Existence of ($\epsilon$-approx.) solutions & Under Lipschitz cont. for $\Phi$ and bounded $F$ & \Cref{theorem:existence} \\
         Complexity with nonlinear $\Phi$ & \PPAD-hardness with linear $F$ and $\epsilon = \Theta(1)$ & \Cref{cor:hard-EVI,cor:hard-EVI-quad} \\
         \rowcolor{gray!20}
         {Algorithms for linear $\Phi$ } & \textbullet~ $\poly(d, \log(1/\epsilon ))$-time via $\eah$
         \newline
         \textbullet~ $\poly(d, 1/\epsilon)$-time via $\Phi$-regret minimization & 
         \Cref{th:elvi}
         \newline
         \Cref{theorem:regret} \\
         {Equivalence between VIs-EVIs} & \textbullet~ Quasar-concave functions (\Cref{def:quasar}) \newline
         \textbullet~ $\vx \mapsto \langle F(\vx), \vx' - \vx \rangle$ concave for all $\vx'$ & 
         \Cref{prop:convex-equiv} \newline \Cref{prop:collapse} \\
         \rowcolor{gray!20}
         Performance guarantees for EVIs & Under smoothness (\Cref{def:smoothness}) & \Cref{theorem:smoothness}\\
         \bottomrule
    \end{tabular}
    \caption{Our main results concerning $\Phi$-EVIs (\Cref{def:evi}).}
    \label{tab:results}
\end{table*}
\subsection{Further related work}
\label{sec:related}

We have seen that \Cref{def:evi} is strongly connected with the notion of $\Phi$-equilibria. 
In extensive-form games, the question of characterizing the set of deviations $\Phi$ that enables efficient learning---within the no-regret framework---and computation has attracted considerable attention. In particular, efficient algorithms have been established for \emph{extensive-form correlated equilibria (EFCEs)}~\citep{Huang08:Computing,Farina22:Simple,Morrill21:Efficient,Morrill21:Hindsight}, and more broadly, when $\Phi$ contains solely \emph{linear} functions~\citep{Farina24:Polynomial,Farina23:Polynomial}---corresponding to \emph{linear correlated equilibria (LCEs)}. Recently, \citet{Daskalakis24:Efficient} strengthened those results beyond extensive-form games whenever there is a separation oracle for the constraint set; we rely on their approach for some of our positive results. Moreover, \citet{Zhang24:Efficient,Zhang25:Learning} established certain positive results even when $\Phi$ contains low-degree polynomials; by contrast, in our setting, $\Phi$-EVIs are hard even when $\Phi$ contains only quadratic polynomials (\Cref{cor:hard-EVI-quad}).

Besides encompassing correlated equilibria in games, wherein the constraint set $\cX$ can be decomposed as a Cartesian product over the constraint set of each player (reflecting the fact that players select strategies independently), our positive results pertaining to \Cref{def:evi} do not rest on such assumptions and apply even in the presence of joint constraint sets. There is a long history in game theory, optimization, and economics pertaining to such settings---sometimes referred to as ``pseudo-games'' in the literature~\citep{Goktas22:Exploitability,Arrow54:Existence,Facchinei10:Generalized,Fischer14:Generalized,Facchinei09:Generalized,Ardagna12:Generalized,Tatarenko18:Learning,Jordan23:First,Daskalakis21:Complexity}. The notion of \emph{generalized} Nash equilibria---the natural counterpart of Nash's solution concept in the presence of coupled constraints---has dominated that line of work, with the recent paper of~\citet{Bernasconi23:Constrained} being the notable exception.

Responding to the call of~\citet{Daskalakis22:Non}, \citet{Cai24:Tractable} and~\citet{Ahunbay25:First} recently proposed several tractable solution concepts in games with nonconcave utilities. In particular, when specialized to games, $\Phi$-EVIs are closely related to a notion proposed by~\citet[Definition 6]{Ahunbay25:First}. One of our key results, \Cref{th:elvi}, establishes a $\poly(d, \log(1/\epsilon))$-time algorithm, while the algorithms of~\citet{Cai24:Accelerated} and~\citet{Ahunbay25:First} scale polynomially in $1/\epsilon$. Also, \Cref{th:elvi} applies even when $\Phi$ contains all linear endomorphisms (\Cref{th:elvi}). From a more conceptual vantage point, a significant part of our contribution is to extend the scope of such results beyond games, as we have already highlighted.

\citet{Kapron24:Computational} and~\citet{Bernasconi24:Role} recently studied the computational complexity of VIs, and generalizations thereof---namely, \emph{quasi VIs}, establishing \PPAD-completeness under mild assumptions. Whether our framework can be extended to encompass quasi VIs is left as an interesting direction for the future.

Finally, it would be remiss not to point out that our VI relaxation is in the spirit of ``lifting,'' a standard technique whereby the original problem is \emph{lifted} to a higher-dimensional space to gain more analytical and computational leverage; a concrete example, in the context of optimal transport theory, is Kantorovich's relaxation of Monge's formulation~\citep{Billani09:Optimal}. Such techniques have been fruitful in the context of min-max optimization~\citep{Hsieh19:Finding,Domingo-Enrich20:Mean}.
\section{Preliminaries}
\label{sec:prel}

This section provides some basic notation and background together with an overview of the $\eah$ algorithm. Additional preliminaries, which are not necessary for the main body, are given later in~\Cref{sec:add-prels}.

\paragraph{Notation} We use boldface, lowercase letters, such as $\vx$ and $\vy$, to denote vectors in a Euclidean space. Capital, boldface letters, such as $\mat{A}$, represent matrices. For $\vx, \vx' \in \R^d$, we use $\langle \vx, \vx' \rangle$ to denote their inner product. $\|\vx\| \defeq \sqrt{ \langle \vx, \vx \rangle}$ is the Euclidean norm of $\vx$. $\cB_r(\vx)$ is the (closed) Euclidean ball centered at $\vx$ with radius $r > 0$. $\conv(\cdot)$ represents the convex hull. An \emph{endomorphism} on $\cX$ is a function mapping $\cX$ to $\cX$.

Returning to~\Cref{def:evi}, for computational purposes, we assume throughout that $F$ has an explicit polynomial representation, so that $F(\vx) \in \R^d$ can be evaluated in $\poly(d)$ time. Further, there exists $B > 0$ such that $\norm{F(\vx)} \leq B$ for all $\vx \in \cX$. We will also restrict the support $\supp(\mu)$ of $\mu$ to be $\poly(d, 1/\epsilon)$, unless stated otherwise. With regard to $\cX$, we assume that we have \emph{oracle access}. In particular, we consider the following three types of oracle access.

\begin{itemize}[nosep]
    \item \emph{Membership}: given $\vx \in \R^d$, decide whether $\vx \in \cX$.
    \item \emph{Separation}: given $\vx \in \R^d$, decide whether $\vx \in \cX$; if not, return a hyperplane that \emph{separates} $\vx$ from $\cX$.
    \item \emph{Linear optimization}: Given $\vec{u} \in \R^d$, return a vector in $\argmax_{\vx \in \cX} \langle \vx, \vec{u} \rangle $.
\end{itemize}

In addition, we will assume that $\cX \subseteq \cB_R(\vec{0})$ for some $R \leq \poly(d)$, and $\cX$ contains a ball of radius $1$ in its relative interior; this is a standard regularity condition that ensures $\cX$ is geometrically well-behaved, which can be met by bringing $\cX$ into isotropic position (\Cref{sec:add-prels}). Under this assumption, the three oracles listed above are polynomially equivalent~\citep{Grotschel12:Geometric,Groetschel81:ellipsoid}. As a result, we may assume that $\cX$ is given implicitly via a ($\poly(d)$-time) membership oracle, which suffices for~\Cref{th:elvi}.


All our positive results with respect to the set of linear endomorphisms $\Philin$ readily carry over to affine endomorphisms as well.

\subsection{Ellipsoid against hope}
\label{sec:eah}

This \emph{ellipsoid against hope ($\eah$)} algorithm was famously introduced by~\citet{Papadimitriou08:Computing} to compute correlated equilibria in multi-player games. We proceed with an overview of $\eah$, and in particular a generalized version thereof, crystallized by~\citet{Farina24:Polynomial}.

Consider an arbitrary optimization problem of the form
\begin{equation}\label{eq:eah}
    \qq{find} \mu \in \Delta(\cX) \qq{s.t.} \E_{\vx \sim \mu} \ip{\vy, G(\vx)} \ge 0 \text{ } \forall \vy \in \cY,
\end{equation}
where $\cY \subseteq \R^m$, and $G : \cX \to \R^m$ is an arbitrary function. Suppose that we are given an evaluation oracle for $G$ and a separation oracle for $\cY$. Assume further that we are given a {\em good-enough-response ($\ger$)} oracle, which, given any $\vy \in \cY$, returns $\vx \in \cX$ such that $\ip{\vy, G(\vx)} \ge 0$. The upshot is that $\eah$ enables us to solve~\eqref{eq:eah} with just the above tools. Indeed, consider the following problem, which is an $\eps$-approximate version of the dual of \eqref{eq:eah}.
\begin{equation} \label{eq:eah-dual}
    \qq{find} \vy \in \cY \qq{s.t.} \ip{\vy, G(\vx)} \le -\eps \quad \forall \vx \in \cX.
\end{equation}
Since a $\ger$ oracle exists, \eqref{eq:eah-dual} is infeasible. What is more, a certificate of infeasibility of \eqref{eq:eah-dual} yields an $\eps$-approximate solution to \eqref{eq:eah}. It thus suffices to run the ellipsoid algorithm on \eqref{eq:eah-dual} and extract a certificate of infeasibility; in a nutshell, this is what $\eah$ does (\emph{cf.}~\Cref{alg:eah} in~\Cref{sec:proofs}).

\begin{theorem}[Generalized form of $\eah$; \citealp{Farina24:Polynomial}]
    \label{theorem:eah}
    Given a $\ger$ oracle and a separation oracle ($\sep$) for $\cY$, $\eah$ runs in time $\poly(d, m, \log(1/\epsilon))$ and returns an $\eps$-approximate solution to \eqref{eq:eah}.
\end{theorem}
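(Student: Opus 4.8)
The plan is to instantiate the ellipsoid method on the dual feasibility problem \eqref{eq:eah-dual} and then read off a certificate of its infeasibility, which—by the duality already noted in the text—will produce a distribution solving \eqref{eq:eah} approximately. Concretely, I would run the ellipsoid algorithm to decide whether the set $K \defeq \{\vy \in \cY : \ip{\vy, G(\vx)} \le -\eps \text{ } \forall \vx \in \cX\}$ is empty, starting from a bounding ball $\cB_R(\vec 0) \supseteq \cY$. The only nontrivial ingredient the ellipsoid method requires is a separation oracle for $K$, and this is exactly what the two given oracles supply in tandem.

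First I would implement that separation oracle at a query point $\vy_t$ as follows. If $\vy_t \notin \cY$, call $\sep$ and return the hyperplane it produces. If $\vy_t \in \cY$, call $\ger$ to obtain $\vx_t \in \cX$ with $\ip{\vy_t, G(\vx_t)} \ge 0$; since membership in $K$ demands $\ip{\vy_t, G(\vx_t)} \le -\eps$, the vector $G(\vx_t)$ is the normal of a hyperplane separating $\vy_t$ from $K$, with margin at least $\eps$. Because a $\ger$ oracle exists, $K$ is in fact empty, so the run never halts by discovering a feasible point; instead it runs until the current ellipsoid has volume below $\vol(\cB_\rho(\vec 0))$ for a threshold radius $\rho$ fixed below, at which point it has collected the finite set of responses $\vx_1, \dots, \vx_N \in \cX$ returned by the $\ger$ calls.

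The crux—and the step I expect to be the main obstacle—is turning this geometric termination into \emph{exact} infeasibility of the finite system
\[
    \vy \in \cY, \qquad \ip{\vy, G(\vx_i)} \le -\eps \quad (i = 1, \dots, N).
\]
Here I would argue by contradiction, exploiting the $\eps$-margin of the $\ger$ cuts together with the regularity of $\cY$ (contained in $\cB_R(\vec 0)$, containing a ball of radius $1$). Any feasible $\vy_0$ of this finite system, being in $\cY$, is removed by none of the $\sep$ cuts, and, satisfying every $\ger$ constraint strictly by margin $\eps$, is removed by none of the $\ger$ cuts. Writing $\Gamma$ for a bound on $\norm{G(\vx)}$ over $\cX$ (polynomially bounded in our applications), the margin lets me inflate $\vy_0$ to a ball of radius $\rho = \Theta(\eps/\Gamma)$ of points that are likewise never cut, which would force the final ellipsoid to retain volume at least $\vol(\cB_\rho(\vec 0))$—contradicting the termination condition. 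This pins down $\rho$ as $1/\poly$, so $N = O(m^2 \log(R\Gamma/\eps)) = \poly(m, \log(1/\eps))$ iterations suffice, each costing one oracle call plus one evaluation of $G$.

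Finally, infeasibility of the displayed system means $\min_{\vy \in \cY} \max_{i} \ip{\vy, G(\vx_i)} \ge -\eps$. Setting $f(\vy,\lambda) \defeq \ip{\vy, \sum_i \lambda_i G(\vx_i)}$, which is linear in each argument over the convex compact sets $\cY$ and $\Delta_N$ (the simplex over the $N$ collected atoms), Sion's minimax theorem exchanges the order and yields $\lambda^\star \in \Delta_N$ with $\ip{\vy, \sum_i \lambda^\star_i G(\vx_i)} \ge -\eps$ for every $\vy \in \cY$. Then $\mu \defeq \sum_i \lambda^\star_i \delta_{\vx_i}$ satisfies $\E_{\vx \sim \mu} \ip{\vy, G(\vx)} \ge -\eps$ for all $\vy \in \cY$, i.e.\ an $\eps$-approximate solution of \eqref{eq:eah} supported on $N = \poly$ atoms. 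To make this constructive I would recover $\lambda^\star$ by maximizing the concave map $\lambda \mapsto \min_{\vy \in \cY} f(\vy,\lambda)$ over $\Delta_N$ with a second, small-scale ellipsoid run, using $\sep$ (equivalently, the induced linear-optimization oracle) to evaluate the inner minimum and furnish subgradients. Collecting the iteration and per-iteration bounds gives the claimed $\poly(d, m, \log(1/\epsilon))$ running time.
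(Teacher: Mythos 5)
Your proposal is correct and follows essentially the same route as the paper's (which it attributes to \citet{Farina24:Polynomial}): run the ellipsoid method on the infeasible dual \eqref{eq:eah-dual} using $\sep$ and $\ger$ jointly as a separation oracle, and extract from the collected good-enough-responses a mixture via LP duality/minimax, computed by a small auxiliary convex program over the mixing weights. The one step to tighten is the volume lower bound: a ball of radius $\rho$ around your feasible $\vy_0$ may leave $\cY$ and hence be removed by a $\sep$ cut, so one should first shrink $\vy_0$ toward the center of the unit ball contained in $\cY$ before inflating---which is exactly the regularity condition you invoke and costs only a negligible degradation of the $\eps$-margin.
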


One of our main results (\Cref{th:elvi}) crucially hinges on a strengthening of~\Cref{theorem:eah} due to~\citet{Daskalakis24:Efficient}, discussed further in~\Cref{sec:main} and \Cref{sec:mainproof}.
\section{Existence and Complexity Barriers}
\label{sec:existence}

Perhaps the most basic question about $\Phi$-EVIs concerns their \emph{totality}---the existence of solutions. If one is willing to tolerate an arbitrarily small imprecision $\epsilon > 0$, we show that solutions exist under very broad conditions.

\begin{restatable}{theorem}{mainexistence}
    \label{theorem:existence}
    Suppose that $F : \cX \to \R^d$ is measurable and there exists $L > 0$ such that every $\phi \in \Phi$ is $L$-Lipschitz continuous. Then, for any $\epsilon > 0$, there exists an $\epsilon$-approximate solution to the $\Phi$-EVI problem.
\end{restatable}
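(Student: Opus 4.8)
The plan is to recast the problem as a two‑player zero‑sum game between a ``protagonist'' who selects $\mu\in\Delta(\cX)$ and an adversary who selects a deviation $\phi\in\Phi$, with payoff $g(\mu,\phi)\defeq\E_{\vx\sim\mu}\ip{F(\vx),\phi(\vx)-\vx}$, and to show the protagonist can guarantee $g\ge-\epsilon$ against every $\phi$. The engine is a finite‑dimensional minimax theorem together with Brouwer's fixed‑point theorem; the design principle is that $F$ is never integrated against a \emph{varying} measure in a way that requires continuity, so its mere measurability causes no trouble.

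First I would reduce to finitely many deviations. Since every $\phi\in\Phi$ is $L$‑Lipschitz and maps into $\cX\subseteq\cB_R(\vec 0)$, the family $\Phi$ is uniformly bounded and equicontinuous, hence totally bounded in the sup‑norm by Arzel\`a--Ascoli. Fixing $\delta\defeq\epsilon/(2B)$, I extract a finite $\delta$‑net $\phi_1,\dots,\phi_N\in\Phi$. Because $\abs{\ip{F(\vx),\phi(\vx)-\phi_i(\vx)}}\le B\norm{\phi(\vx)-\phi_i(\vx)}\le B\delta$ uniformly in $\vx$, it suffices to find a single $\mu$ with $g(\mu,\phi_i)\ge-\epsilon/2$ for all $i\in\range{N}$; any other $\phi$ is then handled to within the extra $B\delta=\epsilon/2$.

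For the finite subproblem, set $\Psi(\vx)\defeq\big(\ip{F(\vx),\phi_i(\vx)-\vx}\big)_{i=1}^N\in\R^N$, a bounded measurable map, and let $C\defeq\overline{\co(\Psi(\cX))}\subseteq\R^N$, which is compact and convex. Since $\{\E_\mu\Psi:\mu\in\Delta(\cX)\}$ contains $\co(\Psi(\cX))$ and is contained in $C$, it is enough to produce $c^\star\in C$ with $c^\star_i\ge0$ for all $i$. By the finite‑dimensional minimax theorem applied to the bilinear form $(c,p)\mapsto\ip{p,c}$ on the compact convex sets $C$ and $\Delta(\range{N})$,
\[
\max_{c\in C}\,\min_{i\in\range{N}}c_i \;=\; \min_{p\in\Delta(\range{N})}\,\max_{c\in C}\ip{p,c}\;=\;\min_{p\in\Delta(\range{N})}\,\sup_{\vx\in\cX}\ip{p,\Psi(\vx)}.
\]
For each fixed $p$, the averaged deviation $\phi_p\defeq\sum_i p_i\phi_i$ is an $L$‑Lipschitz self‑map of the compact convex set $\cX$, so Brouwer yields a fixed point $\vx^\star=\phi_p(\vx^\star)$; then $\ip{p,\Psi(\vx^\star)}=\ip{F(\vx^\star),\phi_p(\vx^\star)-\vx^\star}=0$, so the inner supremum is $\ge0$. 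Hence the right‑hand side is $\ge0$, and therefore $\max_{c\in C}\min_i c_i\ge0$, giving the desired $c^\star$. Choosing a finitely supported $\mu$ with $\norm{\E_\mu\Psi-c^\star}_\infty\le\epsilon/2$ yields $g(\mu,\phi_i)=\E_\mu\Psi_i\ge-\epsilon/2$ for all $i$, and combining with the net estimate gives $g(\mu,\phi)\ge-\epsilon$ for every $\phi\in\Phi$.

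The main obstacle is precisely the discontinuity of $F$: standard existence proofs for VIs rely on $\vx\mapsto F(\vx)$ being continuous so that a Brouwer/Kakutani argument applies directly, and here $\mu\mapsto g(\mu,\phi)$ fails to be weak‑$*$ continuous, so one can neither pass inequalities through limits of measures nor hope for an exact solution. The device above sidesteps this by keeping the measure‑dependence inside the finite‑dimensional compact convex set $C$ (where only boundedness of $\Psi$ is used) and by letting $F$ enter the Brouwer step solely through its inner product with the zero vector $\phi_p(\vx^\star)-\vx^\star$. This is also exactly where the loss of $\epsilon$ originates---through the net radius and the approximation of $c^\star\in C$ by an achievable $\E_\mu\Psi$---which explains why the conclusion is inherently approximate.
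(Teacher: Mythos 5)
Your proof is correct, but it takes a genuinely different route from the paper's. The paper never touches the deviation set: it mollifies $F$ itself, replacing it by the local average $\hatF_\delta(\vx) = \E_{\hatvx \sim \unif(\cB_\delta(\vx)\cap\cX)} F(\hatvx)$, proves this is continuous, applies Brouwer to the projected-gradient map $\vx \mapsto \proj_\cX(\vx - \hatF_\delta(\vx))$ to obtain an exact VI solution of the smoothed operator, and then takes $\mu$ to be uniform on the ball $\cB_\delta(\vx)\cap\cX$; the $L$-Lipschitzness of $\phi$ and the bound $B$ on $F$ control the error $\delta(L+1)B$. You instead compactify $\Phi$ via Arzel\`a--Ascoli, reduce to a finite $\delta$-net of deviations, and run a minimax-plus-Brouwer argument on the finite-dimensional convex hull of $\Psi(\cX)$ --- which is essentially the same engine as the paper's proof of \Cref{theorem:finitedim}, transplanted to this setting (with Brouwer supplying the fixed points of the averaged deviations $\phi_p$ that \Cref{theorem:finitedim} takes as a hypothesis). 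Each approach buys something: the paper's yields a concrete, localized solution (a near-point-mass supported on a small ball around an approximate VI point of $\hatF_\delta$), together with a reusable continuity lemma for the mollified operator; yours yields a \emph{finitely} supported $\mu$ and unifies the two existence theorems under one argument, at the cost that the support size is governed by the covering number of the $L$-Lipschitz class (enormous in the dimension) and that the construction is entirely non-constructive about where the mass sits. All steps check out: the net elements can be taken inside $\Phi$, convexity of $\cX$ makes $\phi_p$ a continuous self-map so Brouwer applies, the barycenter of any $\mu$ lies in $\overline{\co(\Psi(\cX))}$, and the two $\epsilon/2$ losses (net radius and approximation of $c^\star$ by an achievable convex combination) compose correctly.
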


In particular, our existence proof does not rest on $F$ being continuous. 
Instead, we consider the continuous function $\hatF$ that maps $\vx \mapsto \E_{\hatvx \sim \Delta(\cB_{\delta}(\vx) \cap \cX)} F(\hatvx)$ (\Cref{lemma:cont}), where $\cB_\delta(\vx)$ is the Euclidean ball centered at $\vx$ with radius $\delta = \delta(\epsilon)$. It then suffices to invoke Brouwer's fixed-point theorem for the gradient mapping $\vx \mapsto \proj_\cX ( \vx - \hatF(\vx) )$, where $\proj_\cX$ is the Euclidean projection with respect to $\cX$.


\Cref{theorem:existence} implies that a $\Phi$-EVI can have approximate solutions even when the associated VI problem does not.\footnote{Noncontinuity of $F$ manifests itself prominently in \emph{nonsmooth} optimization (\emph{e.g.}, \citealp{Zhang20:Complexity,Davis22:Gradient,Tian22:Finite,Jordan23:Deterministic}); recent research there focuses on \emph{Goldstein} stationary points~\citep{Goldstein77:Optimization}, which are conceptually related to EVIs.}

\begin{restatable}{corollary}{VIsvsEVIs}
    \label{prop:VI-vs-EVIs}
    There exists a VI problem that does not admit approximate solutions when $\epsilon = \Theta(1)$, but the corresponding $\epsilon$-approximate $\Phi$-EVI is total for any $\epsilon > 0$.
\end{restatable}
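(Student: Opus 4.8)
The plan is to exhibit a single concrete instance that simultaneously witnesses both halves of the statement: a VI problem whose $\epsilon$-approximate solutions fail to exist for some constant $\epsilon$, yet whose associated $\Phi$-EVI is total for every $\epsilon>0$ by appeal to \Cref{theorem:existence}. The natural source of such a separation is a \emph{discontinuous} operator $F$, since \Cref{theorem:existence} explicitly dispenses with continuity of $F$ while the classical existence guarantees for VIs (via Brouwer/Kakutani) require it. The cleanest candidate is a one-dimensional example: take $\cX = [-1,1] \subseteq \R$ and let $F$ be a bounded map that ``points the wrong way'' on each side of a discontinuity, e.g. $F(x) = 1$ for $x \le 0$ and $F(x) = -1$ for $x > 0$ (or a sign-type flip), so that no point $x^\star$ can satisfy $\langle F(x^\star), x' - x^\star\rangle \ge -\epsilon$ for all $x'$ when $\epsilon$ is a small constant. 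The intuition is that the VI wants to drive $x$ toward a ``crossing'' point of the monotonicity, but the jump discontinuity removes the fixed point, forcing a constant additive violation.

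First I would fix the instance and verify the \textbf{negative} half: that for some absolute constant $\epsilon_0 > 0$, there is no $x^\star \in \cX$ with $\langle F(x^\star), x' - x^\star \rangle \ge -\epsilon_0$ for all $x' \in \cX$. This is a direct case check over the finitely many ``regimes'' of $x^\star$ determined by the discontinuity. For each candidate location of $x^\star$ (in the interior on either side, or at the boundary), I exhibit a specific deviation $x' \in \cX$ for which the inner product is $\le -\epsilon_0$; choosing $x'$ to be an endpoint of $[-1,1]$ typically maximizes the violation and makes the bound clean. Summarizing these cases yields a uniform lower bound on the violation, establishing that no $\epsilon_0$-approximate VI solution exists.

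Next I would verify the \textbf{positive} half by checking that the instance satisfies the hypotheses of \Cref{theorem:existence}: $F$ is measurable (any piecewise-constant map is) and bounded, and the deviation set $\Phi$ is taken to be (uniformly) $L$-Lipschitz. The most convenient choice is $\Phi = \Phiconst$, the set of constant maps, which is trivially $L$-Lipschitz with $L = 0$; this makes the $\Phi$-EVI the plain EVI and keeps the example maximally transparent. With these hypotheses in force, \Cref{theorem:existence} applies verbatim and guarantees an $\epsilon$-approximate $\Phi$-EVI solution for every $\epsilon > 0$, which is exactly the ``total for any $\epsilon > 0$'' conclusion. Conceptually, the distribution $\mu$ can hedge across the discontinuity by placing mass on both sides, averaging out the conflicting push of $F$ in a way no single deterministic point can.

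The main obstacle is not technical depth but \emph{calibration}: I must make sure the chosen $F$ genuinely defeats the VI at a \emph{constant} $\epsilon_0$ (not merely for vanishingly small $\epsilon$), while remaining bounded and measurable so that \Cref{theorem:existence} still covers it. There is a subtle tension to watch: if the discontinuity is too mild, a boundary point of $\cX$ may accidentally satisfy the VI approximately, collapsing the separation; so I would confirm that the ``wrong-way'' orientation of $F$ persists at and near both endpoints, ruling out boundary solutions as well as interior ones. Once the instance is pinned down and these case checks are discharged, the corollary follows by stitching the two halves together with \Cref{theorem:existence} supplying the positive direction.
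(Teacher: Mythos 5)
Your approach is exactly the paper's: a one-dimensional discontinuous $F$ on $\cX=[-1,1]$, a case check showing no constant-$\epsilon$ approximate VI solution, and \Cref{theorem:existence} (with $\Phi=\Phiconst$, trivially Lipschitz) for totality of the EVI. However, the concrete operator you lead with --- $F(x)=1$ for $x\le 0$ and $F(x)=-1$ for $x>0$ --- is the \emph{wrong orientation} and does not work: at the endpoint $x^\star=-1$ we have $F(-1)=1$ and $\langle F(-1), x'-(-1)\rangle = x'+1\ge 0$ for every $x'\in[-1,1]$, so $x^\star=-1$ is an \emph{exact} VI solution (and symmetrically so is $x^\star=1$). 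This inward-pointing $F$ is precisely the ``mild'' failure mode you flag in your last paragraph. The ``sign-type flip'' you mention parenthetically is the correct choice and is what the paper uses (\Cref{ex:sign-evi}): $F=\sgn$, i.e., $F(x)=-1$ for $x<0$ and $+1$ for $x\ge 0$, for which $x'=1$ (resp.\ $x'=-1$) yields a violation of at least $1$ against any $x^\star<0$ (resp.\ $x^\star\ge 0$), including the endpoints. With that correction the rest of your argument goes through; the paper additionally exhibits the explicit witness $\mu=\unif\{-\epsilon,\epsilon\}$, which gives $\E_{x\sim\mu}\langle F(x),x'-x\rangle=-\epsilon$ for every $x'$, but invoking \Cref{theorem:existence} as you do is equally valid.
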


In the proof, we set $F$ to be the \emph{sign function} (\Cref{ex:sign-evi}). By contrast, if one insists on exact solutions, EVIs do not necessarily admit solutions.

\begin{restatable}{proposition}{notexact}
    \label{prop:notexact}
    When $F$ is not continuous, there exists an EVI problem with no solutions.
\end{restatable}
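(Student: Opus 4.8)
The plan is to reduce the exact ($\epsilon=0$) $\Phiconst$-EVI condition to a pair of scalar inequalities that cannot hold simultaneously, instantiated on the sign-function example of \Cref{ex:sign-evi}. First I would unwind \Cref{def:evi} for $\Phi = \Phiconst = \{\phi_{\vx^*} : \vx^* \in \cX\}$, where $\phi_{\vx^*}(\vx') = \vx^*$ for all $\vx'$: a distribution $\mu \in \Delta(\cX)$ solves the EVI exactly iff $\E_{\vx\sim\mu}\ip{F(\vx), \vx^* - \vx} \ge 0$ for every $\vx^* \in \cX$. Writing $\bar g \defeq \E_{\vx\sim\mu} F(\vx)$ and $c \defeq \E_{\vx\sim\mu}\ip{F(\vx),\vx}$, and using that $\E_{\vx\sim\mu}\ip{F(\vx),\vx^*} = \ip{\bar g, \vx^*}$ is linear in $\vx^*$, this is equivalent to $\min_{\vx^*\in\cX}\ip{\bar g,\vx^*} \ge c$. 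Establishing this clean characterization is the conceptual crux; everything afterward is arithmetic.

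Next I would instantiate $\cX = [-1,1]$ and $F(x) = \sgn(x)$ with the convention $F(0) = 1$, so that $F$ is discontinuous at the origin and, crucially, never vanishes. Then I would compute the two relevant quantities: since $\ip{F(x), x} = \sgn(x)\,x = |x|$, we get $c = \E_{\vx\sim\mu}|x| \ge 0$, with equality iff $\mu$ is the Dirac mass $\delta_0$ at the origin; and on $[-1,1]$ we have $\min_{x^*\in[-1,1]} \bar g\, x^* = -|\bar g| \le 0$. The EVI condition $-|\bar g| \ge c$ therefore squeezes a nonpositive quantity above a nonnegative one, forcing $-|\bar g| = c = 0$.

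The contradiction then follows immediately: $c = 0$ forces $\mu = \delta_0$, whence $\bar g = F(0) = 1$ and $|\bar g| = 1 \neq 0$, violating the requirement $|\bar g| = 0$. Hence no exact EVI solution exists for this instance, even though \Cref{theorem:existence} guarantees $\epsilon$-approximate solutions for every $\epsilon > 0$ (for instance, by placing mass $\tfrac12$ on each of $\pm\eta$ with $\eta \le \epsilon$, which makes $\bar g = 0$ and $c = \eta$).

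The one point that requires care — indeed the only genuine subtlety — is the value assigned to $F$ at the discontinuity. Had I set $F(0) = 0$, the Dirac $\delta_0$ would satisfy the EVI exactly, so the nonexistence really hinges on $F$ having a jump at a point where it does not vanish; I would flag this explicitly to make clear why discontinuity of $F$ is essential. I expect verifying that the derived constraints $c = 0$ and $\bar g = 0$ are mutually exclusive to be entirely routine once the characterization is in hand, so no step should present a real obstacle beyond correctly specializing \Cref{def:evi} to constant deviations.
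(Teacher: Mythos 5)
Your proof is correct and uses essentially the same counterexample and argument as the paper (the sign function on $\cX=[-1,1]$ with $F(0)=1$, ruling out $\delta_0$ via the deviation $x^*=-1$ and all other distributions via $\E_{x\sim\mu}|x|>0$). Your repackaging through the linear characterization $\min_{x^*}\ip{\bar g, x^*}\ge \E_{x\sim\mu}\ip{F(x),x}$ is a clean way to organize the same two observations, and your remark that the conclusion hinges on $F$ not vanishing at its discontinuity is a correct and worthwhile clarification.
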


Furthermore, \Cref{theorem:existence} raises the question of whether it is enough to instead assume that every $\phi \in \Phi$ is continuous. Our next result dispels any such hopes.

\begin{restatable}{theorem}{countercont}
    \label{theorem:continuous-counterexample}
    There are $\Phi$-EVI instances that do not admit $\eps$-approximate solutions even when $\eps = \Theta(1)$, $F$ is piecewise constant, and $\Phi$ contains only continuous functions.
\end{restatable}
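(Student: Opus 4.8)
The plan is to exhibit an explicit one-dimensional instance in which $F$ has a single jump and the uniquely profitable deviation against $F$ is discontinuous \emph{precisely} at that jump, so that continuous deviations can approximate it only with Lipschitz constants that blow up. Concretely, I would take $d=1$, $\cX = [0,1]$, and $F(x) = \sgn(x - \tfrac12)$ with the (immaterial) convention $F(\tfrac12) = +1$. This $F$ is piecewise constant and bounded, so it already meets the hypotheses of \Cref{theorem:existence} on $F$; only the condition on $\Phi$ will be relaxed. Fix $\epsilon = \tfrac14 = \Theta(1)$. (Any normalization of $\cX$ can be obtained by rescaling, which does not affect a non-existence claim.)

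First I would analyze the pointwise-optimal but discontinuous deviation $\phi^\star$ defined by $\phi^\star(x) = 0$ for $x \ge \tfrac12$ and $\phi^\star(x) = 1$ for $x < \tfrac12$. A one-line case check gives $\ip{F(x), \phi^\star(x) - x} \le -\tfrac12$ for every $x \in [0,1]$: on $\{x \ge \tfrac12\}$ the value is $-x \le -\tfrac12$, and on $\{x < \tfrac12\}$ it is $x - 1 \le -\tfrac12$. Consequently $\E_{x\sim\mu}\ip{F(x), \phi^\star(x) - x} \le -\tfrac12$ for every $\mu \in \Delta(\cX)$, so if $\phi^\star$ were itself an admissible (continuous) deviation we would be done immediately.

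Since $\phi^\star$ is discontinuous at $\tfrac12$, I would replace it by a sequence of continuous (indeed piecewise-linear) endomorphisms $\phi_n$ of $[0,1]$ that agree with $\phi^\star$ outside the window $[\tfrac12 - \tfrac1n, \tfrac12)$ and interpolate linearly from $1$ down to $0$ inside it, arranged so that $\phi_n(\tfrac12) = 0 = \phi^\star(\tfrac12)$. Then $\phi_n \to \phi^\star$ pointwise on all of $[0,1]$, and the integrands are uniformly bounded, so dominated convergence yields $\E_{x\sim\mu}\ip{F(x), \phi_n(x) - x} \to \E_{x\sim\mu}\ip{F(x), \phi^\star(x) - x} \le -\tfrac12$ for every $\mu$. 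Hence for each $\mu$ there is an $n$ with $\E_{x\sim\mu}\ip{F(x), \phi_n(x) - x} < -\tfrac14 = -\epsilon$. Taking $\Phi \defeq \{\phi_n : n \ge 2\}$, a family consisting only of continuous functions, no $\mu$ can satisfy \eqref{eq:evi}, so the instance admits no $\tfrac14$-approximate solution.

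The step I expect to be the crux is establishing the violation \emph{uniformly} over all $\mu$, including distributions with an atom at the jump point $\tfrac12$, where a crude ``the transition window has vanishing $\mu$-measure'' estimate breaks down. The remedy is exactly the bookkeeping above: routing the linear transition entirely to the left of $\tfrac12$ forces $\phi_n(\tfrac12)$ to equal the ideal value $\phi^\star(\tfrac12)$, so that pointwise convergence, and hence dominated convergence, holds even at an atom. It is worth recording that each $\phi_n$ is Lipschitz but with constant $\Theta(n) \to \infty$; thus $\Phi$ is not uniformly Lipschitz, which is consistent with---and demonstrates the necessity of---the uniform-Lipschitz hypothesis in \Cref{theorem:existence}.
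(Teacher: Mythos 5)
Your proposal is correct and follows essentially the same route as the paper: the paper also uses a sign-function $F$ together with a family of continuous piecewise-linear deviations $\phi_\delta$ whose transition window is routed entirely to one side of the jump (so that $\phi_\delta$ agrees with the discontinuous limit at the jump point), and then invokes dominated convergence to show every $\mu$ is violated by some member of the family. The only differences are cosmetic choices of domain ($[0,1]$ versus $[-1,1]$) and of the target deviation.
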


Our final result on existence complements~\Cref{theorem:existence,theorem:continuous-counterexample} by showing that, when $\Phi$ is finite-dimensional, it is enough if every $\phi \in \Phi$ admits a fixed point (this holds, for example, when $\phi$ is continuous---by Brouwer's theorem).

\begin{restatable}{theorem}{finitedim}
    \label{theorem:finitedim}
    Suppose that 
    \begin{enumerate}
        \item $\Phi$ is finite-dimensional, that is, there exists $k \in \N$ and a kernel map $m : \cX \to \R^k$ such that every $\phi \in \Phi$ can be expressed as $\mK m(\vx)$ for some $\mK \in \R^{d \times k}$; and
        \item every $\phi \in \Phi$ admits a fixed point, that is, a point $\cX \ni \vx = \fix(\phi)$ such that $\phi(\vx) = \vx$.
    \end{enumerate} Then, the $\Phi$-EVI problem admits an $\eps$-approximate solution with support size at most $1+dk$ for every $\eps > 0$.
\end{restatable}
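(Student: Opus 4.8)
The plan is to linearize the constraints using the finite-dimensional structure of $\Phi$, prove existence by a minimax argument, and then cut the support down with an elimination argument. Writing each deviation as $\phi = \mK m(\cdot)$ and using $\langle F(\vx), \mK m(\vx)\rangle = \langle \mK, F(\vx)\, m(\vx)^\top\rangle$ (Frobenius inner product), I set $M(\vx) := F(\vx)\, m(\vx)^\top \in \R^{d\times k}$ and $c(\vx) := \langle F(\vx), \vx\rangle$. The $\Phi$-EVI constraint for $\phi = \mK m(\cdot)$ then reads $\langle \mK, \bar M(\mu)\rangle - \bar c(\mu) \ge -\eps$, where $\bar M(\mu) := \E_{\vx\sim\mu} M(\vx)$ and $\bar c(\mu) := \E_{\vx\sim\mu} c(\vx)$. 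The decisive observation is that the constraints depend on $\mu$ only through the moment $(\bar M(\mu), \bar c(\mu)) \in \R^{dk}\times\R$, with the scalar $\bar c(\mu)$ entering every constraint identically; this already foreshadows the bound, since only $dk$ matrix-moments plus a normalization are relevant.

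For existence (ignoring support), I would study the bilinear game with payoff $u(\mu, \mK) = \langle \mK, \bar M(\mu)\rangle - \bar c(\mu)$, where $\mu$ ranges over $\Delta(\cX)$—equivalently over the closure of the moment set $S = \{(\bar M(\mu), \bar c(\mu)) : \mu \in \Delta(\cX)\}$, a bounded convex subset of $\R^{dk+1}$—and $\mK$ ranges over $\cK := \{\mK : \mK m(\cdot) \in \Phi\}$. Since $u(\mu, \cdot)$ is affine, $\inf_{\mK\in\cK} u = \inf_{\mK\in\conv(\cK)} u$, so I may take the minimizing player's set to be the convex set $\conv(\cK)$ and apply Sion's theorem (with $\overline S$ compact convex) to obtain $\sup_\mu \inf_{\mK\in\cK} u = \inf_{\mK\in\conv(\cK)} \sup_\mu u$. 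The key step is lower-bounding the right-hand side: for any $\bar\mK \in \conv(\cK)$ the averaged map $\bar\phi := \bar\mK m(\cdot)$ is a continuous self-map of the convex compact set $\cX$ (a convex combination of maps into $\cX$), so Brouwer supplies a fixed point $\vx^\star$, and evaluating at the point mass gives $\sup_\mu u(\mu, \bar\mK) \ge \langle F(\vx^\star), \bar\phi(\vx^\star) - \vx^\star\rangle = 0$. Hence $\sup_\mu \inf_{\mK\in\cK} u \ge 0$; as the inner infimum is concave and upper semicontinuous in the moment, its supremum over the compact $\overline S$ is attained, and a sequence of distributions whose moments approach the optimizer yields, for every $\eps > 0$, a distribution $\mu_\eps$ satisfying all constraints up to $-\eps$.

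It remains to shrink the support of $\mu_\eps$ to $1 + dk$ by an elimination that preserves $\bar M$ and never increases $\bar c$. While $|\supp(\mu_\eps)| \ge dk + 2$, pick $dk+2$ atoms $\vx_1, \dots, \vx_{dk+2}$; the vectors $(M(\vx_i), 1) \in \R^{dk+1}$ are linearly dependent, yielding coefficients $(\gamma_i)$ with $\sum_i \gamma_i M(\vx_i) = 0$ and $\sum_i \gamma_i = 0$. Reweighting by $\lambda_i \mapsto \lambda_i + t\gamma_i$ preserves both the normalization and $\bar M$ for all $t$, changes $\bar c$ by $t\sum_i \gamma_i c(\vx_i)$, and—choosing the sign of $t$ and pushing to the boundary of the feasible interval—drives some weight to zero without increasing $\bar c$. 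Each step strictly lowers the support, so after finitely many steps we reach $\mu'$ with $|\supp(\mu')| \le dk + 1$, $\bar M(\mu') = \bar M(\mu_\eps)$, and $\bar c(\mu') \le \bar c(\mu_\eps)$. Since decreasing $\bar c$ only relaxes every constraint $\langle \mK, \bar M\rangle - \bar c \ge -\eps$, the reduced $\mu'$ is still an $\eps$-approximate solution with support at most $1 + dk$.

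The main obstacle is the existence step, and within it the non-convexity of $\Phi$: the fixed-point hypothesis is granted only for $\phi \in \Phi$, yet the minimax forces me to control averaged deviations $\bar\phi \in \conv(\Phi)$. The resolution is that such averages are continuous self-maps of $\cX$ and therefore have Brouwer fixed points, so the hypothesis is effectively used in the form ``every map in $\conv(\Phi)$ admits a fixed point,'' which holds whenever the deviations (equivalently $m$) are continuous, as flagged by the statement's remark. A secondary technical point is the compactness of $S$ when $F$ is merely measurable: passing to the closure $\overline S$ is precisely what degrades exact feasibility to $\eps$-approximate feasibility, consistent with \Cref{prop:notexact}. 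By contrast, the support-reduction step is routine linear algebra once the moment reformulation is in place.
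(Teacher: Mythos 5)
Your proof follows essentially the same route as the paper's: linearize the constraints through the moment $F(\vx)\,m(\vx)^\top$, apply the minimax theorem together with fixed points of the deviations to show the resulting zero-sum game has nonnegative value, and reduce the support by a Carath\'eodory-type argument in $\R^{dk}$ (your explicit elimination step is just a hands-on proof of Carath\'eodory adapted to also not increase $\bar c$). The only substantive differences are that you explicitly justify fixed points for \emph{convex combinations} of deviations via Brouwer---a point the paper's proof silently assumes when it takes $\cK$ to be a convex hull and invokes $\fix(\phi_\mK)$ for every $\mK\in\cK$---and that you track the scalar $\langle F(\vx),\vx\rangle$ separately rather than folding it into the Frobenius product via $\mK-\mI$; both are harmless refinements that land on the same $1+dk$ bound.
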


Notably, this theorem guarantees the existence of solutions with finite support; the proof makes use of the minimax theorem~(\eg,~\citealp{Sion58:On}) in conjunction with Carath\'eodory's theorem on convex hulls~\cite{Caratheodory11:Uber}.

\paragraph{Complexity} Having established some basic existence properties, we now turn to the complexity of $\Phi$-EVIs. Let us define the VI gap function $\VIgap(\vx) \defeq - \min_{\vx' \in \cX} \langle F(\vx), \vx' - \vx \rangle$, which is nonnegative. If we place no restrictions on $\Phi$, it turns out that $\Phi$-EVIs are tantamount to regular VIs:

\begin{restatable}{proposition}{EVIequiv}
    \label{prop:EVI-VI}
    If $\Phi$ contains all measurable functions from $\cX$ to $\cX$, then any solution $\mu \in \Delta(\cX)$ to the $\epsilon$-approximate $\Phi$-EVI problem satisfies
    \begin{equation}
        \label{eq:EVI-VI}
        \E_{\vx \sim \mu} \VIgap(\vx) \leq \epsilon. 
    \end{equation}
\end{restatable}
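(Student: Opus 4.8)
The plan is to exploit the fact that, because $\Phi$ contains \emph{all} measurable endomorphisms of $\cX$, it in particular contains the pointwise best-response deviation. Concretely, I would define
\[
    \phi^\star(\vx) \in \argmin_{\vx' \in \cX} \langle F(\vx), \vx' - \vx \rangle,
\]
which is well-defined pointwise since $\cX$ is compact and $\vx' \mapsto \langle F(\vx), \vx' - \vx\rangle$ is continuous (indeed linear) in $\vx'$, so the minimum is attained. By the very definition of the VI gap, this choice satisfies $\langle F(\vx), \phi^\star(\vx) - \vx\rangle = -\VIgap(\vx)$ for every $\vx \in \cX$.

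Granting that $\phi^\star \in \Phi$, I would then instantiate the $\Phi$-EVI constraint \eqref{eq:evi} at $\phi = \phi^\star$, obtaining
\[
    -\epsilon \le \E_{\vx \sim \mu} \langle F(\vx), \phi^\star(\vx) - \vx\rangle = \E_{\vx \sim \mu}\big[ -\VIgap(\vx) \big] = - \E_{\vx \sim \mu} \VIgap(\vx).
\]
Rearranging yields exactly $\E_{\vx\sim\mu}\VIgap(\vx) \le \epsilon$, which is the desired conclusion \eqref{eq:EVI-VI}.

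The only nontrivial point—and the step I would expect to require the most care—is justifying that $\phi^\star$ is a legitimate member of $\Phi$, i.e.\ that the $\argmin$ map can be selected \emph{measurably}, so that the expectation on the left-hand side is well-defined and the EVI constraint indeed applies to it. Since $\langle F(\vx), \vx'-\vx\rangle$ is linear in $\vx'$, the minimum is attained on the (compact, nonempty) face of $\cX$ exposed by $F(\vx)$, and the value function $\vx \mapsto -\VIgap(\vx)$ is measurable whenever $F$ is. Existence of a measurable selection $\phi^\star$ then follows from a standard measurable selection result (e.g., the Kuratowski--Ryll-Nardzewski selection theorem, or the measurable maximum theorem) applied to the compact-valued $\argmin$ correspondence. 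In the computational setting of the paper, where $F$ admits an explicit representation and is in particular continuous, one can alternatively invoke Berge's maximum theorem to obtain an upper hemicontinuous $\argmin$ correspondence and extract a Borel-measurable selection directly, avoiding any heavier machinery. With $\phi^\star \in \Phi$ secured, the two displays above complete the argument.
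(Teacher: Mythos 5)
Your proposal is correct and follows essentially the same route as the paper: the paper likewise instantiates the EVI constraint at a measurable selection $\phi$ from $\argmin_{\vx' \in \cX}\langle F(\vx), \vx'-\vx\rangle$, justified via the measurable maximum theorem applied to the Carath\'eodory function $(\vx,\vx')\mapsto -\langle F(\vx),\vx'-\vx\rangle$, which is exactly the selection argument you flag as the only delicate step.
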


In proof, it suffices to consider a $\phi$ that maps $\vx \in \cX$ to an appropriate point in $\argmin_{\vx' \in \cX} \langle F(\vx), \vx' - \vx \rangle$. When $\mu$ must be given explicitly, \Cref{prop:EVI-VI} immediately implies that $\Phi$-EVIs are computationally hard, because \eqref{eq:EVI-VI} implies that $\VIgap(\vx) \leq \epsilon$ for some $\vx$ in the support of $\mu$, and such a point can be identified in polynomial time.\footnote{This argument carries over without restricting the support of $\mu$, by assuming instead access to a sampling oracle from $\mu$: a standard Chernoff bound implies that the empirical distribution (w.r.t. a large enough sample size) approximately satisfies~\eqref{eq:EVI-VI}.}

\begin{corollary}
    \label{cor:hard-EVI}
    The $\epsilon$-approximate $\Phi$-EVI problem is \PPAD-hard even when $\epsilon$ is an absolute constant and $F$ is linear.
\end{corollary}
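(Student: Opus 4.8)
The plan is to give a polynomial-time reduction \emph{from} the approximate VI problem \emph{to} the $\Phi$-EVI problem, so that the \PPAD-hardness of the former (for linear $F$ and constant $\epsilon$) transfers to the latter. First I would recall why the base VI problem is already hard in this regime, which is where the linearity comes from. In a bimatrix game with payoff matrices $\mA, \mB$ and joint mixed strategy $\vx = (\vx_1, \vx_2)$ ranging over $\cX \defeq \Delta^{n_1} \times \Delta^{n_2}$, the operator $F(\vx) \defeq (-\mA \vx_2, -\mB^\top \vx_1)$ is \emph{linear} in $\vx$, its exact VI solutions are precisely the Nash equilibria, and $\VIgap(\vx)$ equals the sum of the two players' regrets; hence an $\epsilon$-approximate VI solution is exactly an additive $\epsilon$-equilibrium. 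Finding such a point is \PPAD-hard already at $\epsilon = 1/\poly$ by~\citet{Chen09:Settling}. To reach the $\epsilon = \Theta(1)$ regime I would exploit that $F$ is linear: scaling $(\mA, \mB)$ by a $\poly$ factor rescales $\VIgap$ by the same factor (while keeping $\|F\|$ polynomially bounded), amplifying the inverse-polynomial hardness into $\Theta(1)$-additive hardness; alternatively one may invoke~\citet{Rubinstein16:Settling}.

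Given this base hardness, the reduction itself is essentially a repackaging of~\Cref{prop:EVI-VI}. On input a VI instance $(\cX, F)$ with $F$ linear, I form the $\Phi$-EVI instance with the \emph{same} $\cX$, $F$, and $\epsilon$, and with $\Phi$ taken to be all measurable maps $\cX \to \cX$, so that~\Cref{prop:EVI-VI} applies. If a $\Phi$-EVI solver returns an explicit $\mu \in \Delta(\cX)$ of $\poly(d, 1/\epsilon)$ support, then~\Cref{prop:EVI-VI} yields $\E_{\vx \sim \mu} \VIgap(\vx) \le \epsilon$, so by averaging at least one support atom $\vx^\star$ satisfies $\VIgap(\vx^\star) \le \epsilon$, i.e. it solves the original $\epsilon$-approximate VI.

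To make this constructive I must extract such a $\vx^\star$ in polynomial time. Writing $\VIgap(\vx) = \langle F(\vx), \vx\rangle - \min_{\vx' \in \cX} \langle F(\vx), \vx'\rangle$, each evaluation needs one call to $F$ (polynomial time by assumption) and one linear optimization over $\cX$, which is available through the linear-optimization oracle (and is immediate when $\cX$ is a product of simplices). I would therefore evaluate $\VIgap$ at each of the polynomially many support atoms and output the minimizer, which is guaranteed to satisfy $\VIgap(\vx^\star) \le \epsilon$. Composed with the instance map, this is a polynomial-time reduction in the search-problem sense, establishing \PPAD-hardness of the $\Phi$-EVI problem with linear $F$ and constant $\epsilon$.

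The single genuine subtlety, and the step I would treat most carefully, is the first paragraph: verifying that the scaling argument yields a \emph{constant} additive $\epsilon$ while respecting the representation conventions (the bound $B$ on $\|F\|$ and the radius of $\cX$ may grow only polynomially), and being explicit about which portion of the claim is unconditional (the $1/\poly$ hardness of~\citet{Chen09:Settling}, amplified by scaling a linear operator) versus conditional (the direct constant-$\epsilon$ statement of~\citet{Rubinstein16:Settling}). Everything downstream---invoking~\Cref{prop:EVI-VI}, the averaging over the support, and computing $\VIgap$ via linear optimization---is routine.
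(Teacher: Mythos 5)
Your reduction machinery---instantiating $\Phi$ as all measurable maps, invoking \Cref{prop:EVI-VI}, averaging over the polynomial-size support, and extracting a witness $\vx^\star$ with $\VIgap(\vx^\star)\le\epsilon$ via one linear optimization per atom---is exactly what the paper does and is fine. The gap is in your base hardness, i.e.\ the first paragraph, which you yourself flag as the delicate step. The scaling argument does not work: multiplying $(\mA,\mB)$ by a polynomial factor converts $1/\poly$ \emph{relative} error into a ``constant'' \emph{absolute} error only by blowing up $\|F\|$ to $B=\poly(d)$, so the resulting statement is no stronger than inverse-polynomial hardness for a normalized instance ($B=O(1)$); since the paper's algorithmic results scale with $B/\epsilon$, a hardness claim of this form is vacuous, and it is not what ``$\epsilon$ an absolute constant'' is meant to assert. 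Your fallback citation also fails: \citet{Rubinstein16:Settling} gives a \emph{conditional quasi-polynomial} lower bound for constant-$\epsilon$ bimatrix Nash under an ETH-type assumption for \PPAD; it is not a \PPAD-hardness result. Indeed, constant-$\epsilon$ bimatrix Nash is solvable in quasi-polynomial time by support enumeration (Lipton--Markakis--Mehta), so it \emph{cannot} be \PPAD-hard under standard assumptions, and no two-player reduction can establish the corollary.

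The fix, and the route the paper takes, is to reduce from \emph{multi-player polymatrix games}: there each player's utility is a sum of bimatrix interactions, so the concatenated gradient operator $F$ is still linear in the joint strategy, $\VIgap$ still upper-bounds the total regret, and \citet{Rubinstein15:Inapproximability} shows that finding an $\epsilon$-approximate Nash equilibrium is \PPAD-hard for an absolute constant $\epsilon$ even with normalized payoffs (the quasi-polynomial enumeration argument breaks down when the number of players grows). Composing that with your (correct) \Cref{prop:EVI-VI} step yields the corollary.
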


Coupled with~\Cref{prop:EVI-VI}, this follows from the hardness result of~\citet{Rubinstein15:Inapproximability} concerning Nash equilibria in (multi-player) polymatrix games (for binary-action, graphical games, \citealp{Deligkas23:Tight} recently showed that \PPAD-hardness persists up to $\epsilon < \nicefrac{1}{2}$). \Cref{cor:hard-EVI} notwithstanding, it is easy to see that the set of solutions to $\Phi$-EVIs is convex for any $\Phi \subseteq \cX^\cX$.

\begin{remark}
    Let $\cX = \cX_1 \times \dots \times \cX_n$, as in an $n$-player game. Whether \Cref{cor:hard-EVI} applies under deviations that can be decomposed as $\phi : \vx \mapsto \phi(\vx) = (\phi_1(\vx_1), \dots, \phi_n(\vx_n))$ is a major open question in the regime where $\epsilon \ll 1$ (\emph{cf.}~\citealp{Dagan24:From,Peng24:Fast}).
\end{remark}

Viewed differently, a special case of the $\Phi$-EVI problem arises when $\Phi = \{ \phi \}$ and $F(\vx) = \vx - \phi(\vx)$, for some fixed map $\phi : \cX \to \cX$. In this case, the $\Phi$-EVI problem reduces to finding a $\mu \in \Delta(\cX)$ such that
\begin{equation}
    \label{eq:fp}
    \E_{\vx\sim\mu} \ip{F(\vx), \phi(\vx) - \vx} = - \E_{\vx\sim\mu} \norm{\phi(\vx) - \vx}^2 \geq -\epsilon.
\end{equation}
As a result, $\mu$ must contain in its support an $\epsilon$-approximate fixed point of $\phi$, a problem which is \PPAD-hard already for quadratic functions~\citep{Zhang24:Efficient}.

\begin{corollary}
    \label{cor:hard-EVI-quad}
    The $\epsilon$-approximate $\Phi$-EVI problem is \PPAD-hard even when $\epsilon$ is an absolute constant, $F$ is quadratic, and $\Phi = \{ \phi \}$ for a quadratic map $\phi : \cX \to \cX$.
\end{corollary}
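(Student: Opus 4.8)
The plan is to reduce from the problem of finding an approximate fixed point of a quadratic endomorphism---shown to be \PPAD-hard by~\citet{Zhang24:Efficient}---to the $\epsilon$-approximate $\Phi$-EVI problem, using precisely the construction foreshadowed in the paragraph preceding the corollary. First I would take an arbitrary quadratic map $\phi : \cX \to \cX$ (an instance of the fixed-point problem) and set $F(\vx) \defeq \vx - \phi(\vx)$ together with $\Phi \defeq \{\phi\}$. Since the identity map is linear and $\phi$ is quadratic, their difference $F$ is again a quadratic map, meeting the degree requirement of the statement. The crux of the reduction is the algebraic identity $\langle \vx - \phi(\vx),\, \phi(\vx) - \vx \rangle = -\norm{\phi(\vx) - \vx}^2$, which collapses the single EVI constraint $\E_{\vx\sim\mu}\langle F(\vx), \phi(\vx) - \vx\rangle \ge -\epsilon$ into $\E_{\vx\sim\mu}\norm{\phi(\vx) - \vx}^2 \le \epsilon$, exactly as recorded in~\eqref{eq:fp}.

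Next I would extract a fixed point from any EVI solution. As the left-hand side is an average of nonnegative terms bounded by $\epsilon$, at least one point $\vx$ in the support of $\mu$ satisfies $\norm{\phi(\vx) - \vx}^2 \le \epsilon$, hence $\norm{\phi(\vx) - \vx} \le \sqrt{\epsilon}$; in other words, $\vx$ is a $\sqrt{\epsilon}$-approximate fixed point of $\phi$. Because $\supp(\mu)$ is of polynomial size, such a point can be located in $\poly(d)$ time by scanning the support (and, were $\mu$ only accessible through a sampling oracle, a Chernoff bound would suffice, just as in the footnote to~\Cref{prop:EVI-VI}). Finally, I would calibrate the precision: choosing $\epsilon$ to equal the square of the constant hardness threshold $\delta$ for the quadratic fixed-point problem keeps $\epsilon = \Theta(1)$ while guaranteeing $\sqrt{\epsilon} = \delta$, so that the recovered approximate fixed point is sharp enough to solve the hard instance.

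I do not anticipate a genuine obstacle; the only point demanding care is the degree bookkeeping. One must note that the statement requires $F$ and $\phi$ to be quadratic but places no restriction on the bilinear form $\langle F(\vx), \phi(\vx) - \vx\rangle$ itself---which is in fact degree four---so squaring the displacement $\phi(\vx) - \vx$ inside the constraint creates no tension with the hypotheses. It also remains to confirm that the domain $\cX$ used in the \citet{Zhang24:Efficient} hardness construction can be taken to satisfy the mild regularity conditions imposed in~\Cref{sec:prel}, but this is routine.
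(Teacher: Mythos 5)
Your proposal is correct and follows essentially the same route as the paper: the paper's argument is exactly the reduction $F(\vx) \defeq \vx - \phi(\vx)$, the identity $\ip{F(\vx), \phi(\vx) - \vx} = -\norm{\phi(\vx)-\vx}^2$ recorded in \eqref{eq:fp}, and the observation that the support of any solution must then contain an approximate fixed point of the quadratic map $\phi$, whose computation is \PPAD-hard by \citet{Zhang24:Efficient}. Your additional care about the $\sqrt{\epsilon}$ calibration and the polynomial support size only makes explicit what the paper leaves implicit.
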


It is also worth noting that, unlike~\Cref{cor:hard-EVI}, $\Phi$ in the corollary above contains only continuous functions.

It also follows from~\eqref{eq:fp} that, for $\epsilon = 0$, $\Phi$-EVIs capture exact fixed points. The complexity class \FIXP~characterizes such problems~\citep{Etessami07:Complexity}.

\begin{corollary}
    The $\Phi$-EVI problem is \FIXP-hard, assuming that $\supp(\mu) \leq \poly(d)$.
\end{corollary}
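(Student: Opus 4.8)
The plan is to reduce from the canonical \FIXP-complete problem of computing an exact Brouwer fixed point of a self-map $\phi : \cX \to \cX$ given by an algebraic circuit~\citep{Etessami07:Complexity}, reusing the reduction already exhibited in~\eqref{eq:fp}. Concretely, given such a $\phi$, I would form the $\Phi$-EVI instance with $\Phi = \{\phi\}$ and $F(\vx) \defeq \vx - \phi(\vx)$ over the same domain $\cX$. Building the circuit for $F$ from the circuit for $\phi$ is a trivial, polynomial-time instance map, so the forward direction of the reduction is immediate.

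For correctness I would invoke~\eqref{eq:fp}: for this choice of $F$ and $\Phi$, an \emph{exact} ($\eps = 0$) solution $\mu \in \Delta(\cX)$ must satisfy $\E_{\vx \sim \mu} \norm{\phi(\vx) - \vx}^2 \le 0$. Since the integrand is nonnegative, this forces $\phi(\vx) = \vx$ for $\mu$-almost every $\vx$; under the hypothesis $\supp(\mu) \le \poly(d)$, the support is an explicitly listed finite set, so \emph{every} support atom is an exact fixed point of $\phi$. Recovering a \FIXP solution then amounts to reading off any single support atom of $\mu$, which is a coordinate projection and hence a \emph{separable linear} map---precisely the form of solution-recovery map permitted by an SL-reduction. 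Composing with a coordinate-wise affine rescaling to bring $\cX$ into the regularity position assumed in~\Cref{sec:prel} (contained in a $\poly(d)$-radius ball and containing a unit ball), which is again separable linear, completes the reduction and yields \FIXP-hardness.

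I expect the main point to get right to be the representational alignment with the \FIXP model rather than any combinatorial difficulty. Whereas the standing assumption in~\Cref{sec:prel} asks $F$ to have an explicit polynomial representation, the \FIXP-complete fixed-point problem is stated for functions computed by algebraic circuits over $\{+,-,\times,/,\max,\min\}$; the corollary is therefore understood in that broader circuit model, and the construction $F = \vx - \phi(\vx)$ preserves the circuit representation verbatim. The one step warranting care is confirming that the backward map is genuinely separable linear: this holds because a fixed point of $\phi$ is \emph{literally} an atom of $\mu$, extracted by projecting onto one block of coordinates with no further arithmetic. Granting this, \FIXP-hardness is inherited directly from the \FIXP-completeness of exact Brouwer fixed points.
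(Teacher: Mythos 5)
Your proposal is correct and follows essentially the same route as the paper: the paper likewise takes $\Phi = \{\phi\}$ with $F(\vx) = \vx - \phi(\vx)$, notes via the identity $\langle F(\vx), \phi(\vx) - \vx\rangle = -\norm{\phi(\vx)-\vx}^2$ that an exact ($\eps=0$) solution must be supported on exact fixed points of $\phi$, and concludes \FIXP-hardness from the polynomially bounded support. Your additional remarks on the circuit model and the separable-linear solution-recovery map are more careful than the paper's brief sketch but do not change the argument.
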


Exponential lower bounds in terms of the number of function evaluations of $F$ also follow from~\citet{Hirsch89:Exponential}.

On a positive note, the next section establishes polynomial-time algorithms when $\Phi$ contains only \emph{linear} endomorphisms.\footnote{We do not distinguish between affine and linear maps because we can always set $\cX \gets \cX \times \{1\}$, in which case affine and linear maps coincide.}
\section{Efficient Computation with Linear Endomorphisms}
\label{sec:main}

The hardness results of the previous section highlight the need to restrict the set $\Phi$ in order to make meaningful progress. Our main result here establishes a polynomial-time algorithm when $\Phi$ contains only linear endomorphisms.

\begin{theorem}
    \label{th:elvi}
    If $\Phi$ contains only linear endomorphisms, the $\eps$-approximate $\Phi$-EVI problem can be solved in time $\poly(d, \log(B/\epsilon))$ given a membership oracle for $\cX$.
\end{theorem}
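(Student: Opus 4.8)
The plan is to recognize the $\Philin$-EVI as an instance of the ellipsoid-against-hope template~\eqref{eq:eah} and then invoke~\Cref{theorem:eah}, in the membership-oracle strengthening of~\citet{Daskalakis24:Efficient}. First I would linearize the deviation. Writing each $\phi \in \Phi$ as $\phi(\vx) = \mat{M}\vx$ for a matrix $\mat{M}$ drawn from the convex set $\cM \defeq \{\mat{M} \in \R^{d \times d} : \mat{M}\cX \subseteq \cX\}$ of (matrices of) linear endomorphisms, the EVI constraint~\eqref{eq:evi} reads $\E_{\vx \sim \mu} \langle F(\vx), \mat{M}\vx - \vx \rangle \ge -\eps$. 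Using the Frobenius inner product, $\langle F(\vx), \mat{M}\vx \rangle = \langle \mat{M}, F(\vx)\vx^\top \rangle$, so I would set $m \defeq d^2 + 1$, $\vy \defeq (\mat{M}, 1) \in \R^m$, $\cY \defeq \cM \times \{1\}$, and $G(\vx) \defeq (F(\vx)\vx^\top,\, -\langle F(\vx), \vx \rangle) \in \R^m$. Then $\langle \vy, G(\vx) \rangle = \langle F(\vx), \mat{M}\vx - \vx \rangle$, so the $\Philin$-EVI is exactly the problem~\eqref{eq:eah} for this $G$ and $\cY$; note that $m = \poly(d)$ and that $G$ is evaluable in $\poly(d)$ time from the oracle for $F$.

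Second I would supply the good-enough-response ($\ger$) oracle. Given $\vy = (\mat{M}, 1) \in \cY$, I need $\vx \in \cX$ with $\langle F(\vx), \mat{M}\vx - \vx \rangle \ge 0$. The clean choice is a \emph{fixed point}: if $\mat{M}\vxstar = \vxstar$, then $\langle F(\vxstar), \mat{M}\vxstar - \vxstar \rangle = 0 \ge 0$. Such a point exists because $\mat{M}$ maps the convex, compact $\cX$ into itself (Brouwer), and the fixed-point set is the linear subspace $\ker(\mat{I} - \mat{M})$, which I would compute by exact linear algebra over $\Q$. The remaining task---finding a point of $\ker(\mat{I} - \mat{M}) \cap \cX$---I would solve with the ellipsoid machinery of~\citet{Grotschel12:Geometric} using the membership oracle for $\cX$ restricted to that subspace; nonemptiness is guaranteed by Brouwer, and the possibly lower-dimensional (degenerate) case is handled by the standard dimension-reduction step. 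Crucially, this produces an (arbitrarily accurate) fixed point in $\poly(d, \log(1/\eps))$ time; I would specifically \emph{avoid} iterating $\vx \mapsto \mat{M}\vx$ and Ces\`aro-averaging, since that yields only $O(1/K)$-accurate fixed points and would degrade the complexity to $\poly(1/\eps)$ (that cheaper route underlies~\Cref{theorem:regret} instead).

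Third I would address the separation oracle for $\cY$, that is, for $\cM$. The reduction itself is easy once a violating point is known: if $\mat{M}_0 \notin \cM$, there is $\vx \in \cX$ with $\mat{M}_0 \vx \notin \cX$, and the separation oracle for $\cX$ returns $\vu$ with $\langle \vu, \mat{M}_0 \vx \rangle > \max_{\vx' \in \cX} \langle \vu, \vx' \rangle \ge \langle \vu, \mat{M}\vx \rangle$ for every $\mat{M} \in \cM$; hence the functional $\mat{M} \mapsto \langle \vu\vx^\top, \mat{M} \rangle$ separates $\mat{M}_0$ from $\cM$. The genuine difficulty---and what I expect to be the main obstacle---is \emph{producing} such a violating $\vx$, equivalently deciding the containment $\mat{M}_0 \cX \subseteq \cX$: this amounts to maximizing a convex violation function over $\cX$, which is not directly tractable through the oracles above. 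This is exactly where I would invoke the strengthening of~\Cref{theorem:eah} due to~\citet{Daskalakis24:Efficient}, which shows how to run $\eah$ with only a membership oracle for $\cX$ and to handle the linear-endomorphism deviation set $\cY$ without an explicit separation oracle for $\cM$.

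Finally, with the evaluation oracle for $G$, the $\ger$ oracle, and the (strengthened) treatment of $\cY$ in hand, I would apply~\Cref{theorem:eah}: since a good-enough response exists for every $\vy \in \cY$, the dual~\eqref{eq:eah-dual} is infeasible, and $\eah$ extracts a certificate of infeasibility, i.e.\ an $\eps$-approximate solution $\mu \in \Delta(\cX)$ to~\eqref{eq:eah}, which is precisely an $\eps$-approximate $\Philin$-EVI solution of support size $\poly(d)$. The running time is $\poly(d, m, \log(1/\eps)) = \poly(d, \log(1/\eps))$; tracking the bound $\|F\| \le B$ through the geometry of the ellipsoid iterations upgrades this to the claimed $\poly(d, \log(B/\eps))$.
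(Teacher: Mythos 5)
Your proposal is correct and follows essentially the same route as the paper: cast the $\Philin$-EVI as an instance of the ellipsoid-against-hope template via the Frobenius linearization, observe that fixed points of linear endomorphisms furnish the good-enough responses, and defer the genuinely hard step---deciding/certifying $\mat{M}\cX \subseteq \cX$ with only a membership oracle---to the semi-separation ($\either$) oracle of \citet{Daskalakis24:Efficient}. The only cosmetic difference is that you present $\ger$ and $\sep$ as two separate oracles before fusing them, whereas the paper invokes the combined semi-separation oracle (\Cref{lemma:semiseparation}) directly.
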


The proof relies on the ellipsoid against hope ($\eah$), and in particular, a recent generalization by~\citet{Daskalakis24:Efficient}. In a nutshell, the main deficiency in the framework covered earlier in~\Cref{sec:eah} is that one needs a separation oracle for $\cY$ (\Cref{theorem:eah}), where $\cY$ for us is the set of deviations $\Phi$. Unlike some applications, in which $\cY$ has an explicit, polynomial representation~\citep{Papadimitriou08:Computing}, that assumption needs to be relaxed to account for $\Philin$~\citep[Theorem 3.4]{Daskalakis24:Efficient}.

\citet{Daskalakis24:Efficient} address this by considering instead the $\either$ oracle. As the name suggests, for any $\vy \in \R^m$, it \emph{either} returns a hyperplane separating $\vy$ from $\cY$, or a good-enough-response $\vx \in \cX$. They showed that~\Cref{theorem:eah} can be extended under this weaker oracle (in place of $\ger$ and $\sep$); the formal version is given in~\Cref{th:eahrelax}.

In our setting, we consider the feasibility problem
\begin{align}\label{eq:evi-ellipsoid}
        \text{find}\quad \phi \in \Philin \qq{s.t.} \ip{F(\vx), \phi(\vx) - \vx} \le -\eps \quad \forall \vx \in \cX.
\end{align}
Equivalently,
\begin{align*}
        \text{find}\quad \mK \in \R^{d \times d} \qq{s.t.} \\\ip{F(\vx), \mK\vx - \vx} \le -\eps \quad &\forall \vx \in \cX, \\\mK \vx \in \cX \quad &\forall \vx \in \cX.
\end{align*}
This program is infeasible since, for any $\phi \in \Philin$, the fixed point $\vx$ of $\phi$ makes the left-hand side of the constraint $0$. And a certificate of infeasibility is an $\eps$-approximate $\Philin$-EVI solution. Thus, it suffices to show how to run the ellipsoid algorithm on \eqref{eq:evi-ellipsoid}. By~\Cref{th:eahrelax}, it suffices if for any $\mK \in \R^{d \times d}$, we can compute efficiently \emph{either}
\begin{itemize}
        \item some $\vx \in \cX$ such that $\mK\vx = \vx$ ($\ger$), {\em or}
        \item some hyperplane separating $\mK$ from $\Philin$ ($\sep$).
\end{itemize}
This is precisely the \emph{semi-separation oracle} solved by~\citet[Lemma~4.1]{Daskalakis24:Efficient}, stated below.

\begin{lemma}[\citealp{Daskalakis24:Efficient}]
    \label{lemma:semiseparation}
    There is an algorithm that takes as input $\mat{K} \in \R^{d \times d}$, runs in $\poly(d)$ time, makes $\poly(d)$ oracle queries to $\cX$, and either returns a fixed point $\cX \ni \vx = \mat{K} \vx$, or a hyperplane separating $\mat{K}$ from $\Philin$.
\end{lemma}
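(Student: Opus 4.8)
The plan is to reduce the dichotomy to a single convex-feasibility question about the affine subspace of fixed points of $\mK$. Write $\langle \mA, \mM \rangle \defeq \op{tr}(\mA^\top \mM)$ for the Frobenius inner product on $\R^{d \times d}$, and recall that $\Philin = \{ \mM \in \R^{d \times d} : \mM \vx \in \cX \ \forall \vx \in \cX\}$ is convex. The key observation is that a fixed point of $\mK$ in $\cX$ is exactly a good-enough response here, and that the two branches are mutually consistent: if $\mK \in \Philin$, then $\mK$ restricts to a continuous self-map of the convex compact set $\cX$, so by Brouwer it has a fixed point; contrapositively, if $\mK$ has no fixed point in $\cX$ then $\mK \notin \Philin$, and separating $\mK$ from $\Philin$ is legitimate. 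Thus it suffices to decide whether the linear subspace $W \defeq \ker(\mK - \mI)$---the set of all fixed points of $\mK$ in $\R^d$---meets $\cX$, returning a point of $W \cap \cX$ in the affirmative case (any such point is an \emph{exact} fixed point, since $\vx \in W$ forces $\mK \vx = \vx$) and a separator otherwise.

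First I would compute a basis of $W$ by Gaussian elimination in $\poly(d)$ time, and then run the ellipsoid method to decide feasibility of $W \cap \cX$, using the membership oracle for $\cX$ (which is polynomially equivalent to a separation oracle under the standing regularity assumptions on $\cX$). Concretely, parametrizing $W$ by a basis $\mB \in \R^{d \times k}$ with $k = \dim W$ and setting $\vx = \mB \vt$, the set $\{\vt : \mB \vt \in \cX\}$ is a bounded convex body with a separation oracle, and the ellipsoid method either produces a feasible $\vt$---whence $\vx = \mB \vt \in W \cap \cX$ is returned as the fixed point---or certifies infeasibility, all in $\poly(d)$ time with $\poly(d)$ oracle calls.

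In the infeasible case, the ellipsoid method furnishes a separating hyperplane between the subspace $W$ and the compact body $\cX$; because $W$ is a subspace, its normal $\vc$ must be orthogonal to $W$, i.e.\ $\vc \in W^\perp = \op{im}((\mK - \mI)^\top)$, and (by compactness and the quantitative guarantees of the ellipsoid method) there is a margin $\gamma > 0$ with $\langle \vc, \vx \rangle \ge \gamma$ for all $\vx \in \cX$. Writing $\vc = (\mK - \mI)^\top \vw$, this reads $\langle \vw, \mK \vx - \vx \rangle \ge \gamma$ for all $\vx \in \cX$. I would then invoke the linear-optimization oracle to compute $\vx_\star \in \argmax_{\vx \in \cX} \langle \vw, \vx \rangle$, set $M \defeq \langle \vw, \vx_\star \rangle$ and $\mA \defeq \vw \vx_\star^\top$, and return the hyperplane $\{ \mM : \langle \mA, \mM \rangle = M + \gamma/2\}$. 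Using the rank-one identity $\langle \mA, \mM \rangle = \langle \vw, \mM \vx_\star \rangle$, the valid inequality at $\vx_\star \in \cX$ gives $\langle \mA, \mK \rangle = \langle \vw, \mK \vx_\star \rangle \ge M + \gamma$, whereas every $\mK' \in \Philin$ satisfies $\mK' \vx_\star \in \cX$ and hence $\langle \mA, \mK' \rangle = \langle \vw, \mK' \vx_\star \rangle \le M$; so $\mA$ separates $\mK$ from $\Philin$ with margin $\gamma$.

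I expect the main obstacle to be the feasibility step when $W \cap \cX$ is nonempty but \emph{not} full-dimensional inside $W$---that is, when $W$ meets $\cX$ only along its boundary---since then the plain ellipsoid method need not locate a point of the intersection. This is precisely where one must invoke the exact-feasibility machinery of~\citet{Grotschel12:Geometric} (detecting the affine hull of the intersection and recursing on a lower-dimensional face, or a perturbation-and-rounding argument), and it is also the step that secures \emph{exactness} of the returned fixed point: since $W$ is an exactly computable rational subspace, any point the procedure localizes in $\cX \cap W$ lies exactly on $W$ and is therefore a true fixed point. As a sanity check and an alternative route that makes the separator transparent, one can instead minimize the convex quadratic $\vx \mapsto \norm{(\mK - \mI)\vx}^2$ over $\cX$: at a minimizer $\vx_0$ with residual $\vg \defeq \mK \vx_0 - \vx_0$, the first-order optimality condition $\langle \vg, \mK \vx - \vx \rangle \ge \norm{\vg}^2$ for all $\vx \in \cX$ reproduces the inequality above with $\vw = \vg$ and $\gamma = \norm{\vg}^2$, so a strictly positive minimum both certifies $\mK \notin \Philin$ and yields the separator, while a zero minimum pinpoints a fixed point.
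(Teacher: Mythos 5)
First, a point of reference: the paper does not prove this lemma at all---it is imported as a black box from \citet[Lemma~4.1]{Daskalakis24:Efficient}---so your argument has to stand on its own. Parts of it do. The consistency observation (if $\mK \in \Philin$ then Brouwer gives a fixed point in $\cX$, so returning a separator is only ever attempted when legitimate) is correct, as is the lifting step: a vector $\vw$ and margin $\gamma > 0$ with $\langle \vw, \mK\vx - \vx\rangle \ge \gamma$ on all of $\cX$ do yield, via the rank-one matrix $\vw\vx_\star^\top$ with $\vx_\star \in \argmax_{\vx\in\cX}\langle\vw,\vx\rangle$, a hyperplane separating $\mK$ from $\Philin$; and any functional separating the subspace $W=\ker(\mK-\mI)$ from $\cX$ must indeed lie in $W^\perp = \op{im}((\mK-\mI)^\top)$.

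The gap is exactly the step you flagged and then waved away. Deciding \emph{exact} feasibility of $W\cap\cX$---equivalently, deciding whether $\min_{\vx\in\cX}\norm{(\mK-\mI)\vx}^2$ is zero---is not achievable with $\poly(d)$ membership queries to a general convex body. The Gr\"otschel--Lov\'asz--Schrijver exact-feasibility machinery you invoke applies to rational polyhedra of bounded facet complexity, not to oracle-represented convex sets, and here $\cX$ is only a well-rounded body with a membership oracle. Concretely, if $\cX$ is a ball tangent to $W$ at a single boundary point, then $W\cap\cX$ is a measure-zero singleton that no finite number of membership queries can locate, while the separation branch has margin $\gamma=0$; neither output of your dichotomy is producible, yet the lemma demands one. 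Even when $W\cap\cX=\emptyset$, the margin $\gamma>0$ exists but admits no lower bound in terms of $d$, so the number of ellipsoid iterations needed to certify infeasibility is not $\poly(d)$. Your quadratic-minimization variant inherits the same defect: oracle-based convex optimization returns only an $\epsilon$-approximate minimizer and cannot distinguish ``minimum equals $0$'' from ``minimum equals $\epsilon^2$.'' The published proof avoids generic convex feasibility entirely and exploits the linear structure of $\mK$: roughly, one follows the orbit of an interior point under $\mK$; if some iterate produces a $\vz\in\cX$ with $\mK\vz\notin\cX$, the separating hyperplane for $\mK\vz$ from $\cX$ is lifted exactly as in your last step, and otherwise the Ces\`aro averages remain in the closed set $\cX$ and converge to an exactly computable fixed point. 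Some argument of this kind---one that never needs to certify emptiness of $W\cap\cX$---is what your write-up is missing.
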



On a separate note, \Cref{th:elvi} only accounts for approximate solutions. We cannot hope to improve that in the sense that exact solutions might be supported only on irrational points even in concave maximization (\emph{cf.}~\Cref{prop:convex-equiv}).

\subsection{Regret minimization for EVIs on polytopes}

One caveat of~\Cref{th:elvi} is that it relies on the impractical $\eah$ algorithm. To address this limitation, we will show that $\Phi$-EVIs are also amenable to the more scalable approach of \emph{regret minimization}---albeit with an inferior complexity growing as $\poly(1/\epsilon)$.

Specifically, in our context, the regret minimization framework can be applied as follows. At any time $t \in \N$, we think of a ``learner'' selecting a point $\vx^{(t)} \in \cX$, whereupon $F(\vx^{(t)})$ is given as feedback from the ``environment,'' so that the utility at time $t$ reads $- \langle \vx^{(t)}, F(\vx^{(t)}) \rangle$. \emph{$\Phi$-regret} is a measure of performance in online learning, defined as
\begin{equation*}
    \phireg^{(T)} \defeq \max_{\phi \in \Phi} \sum_{t=1}^T \langle F(\vx^{(t)}), \phi(\vx^{(t)}) - \vx^{(t)} \rangle.
\end{equation*}
The uniform distribution $\mu$ on $\{ \vx^{(1)}, \dots, \vx^{(T)} \}$ is clearly a $\phireg^T/T$-approximate $\Phi$-EVI solution. 

In what follows, we will assume that $\cX$ is a polytope given explicitly by linear constraints, \ie, \begin{align*}
    \cX = \{ \vx \in \R^d : \mA \vx \le \vb \},
\end{align*}
where $\mA \in \Q^{m \times d}$ and $\vb \in \Q^m$ are given as input.

To minimize $\Phi$-regret, we will make use of the template by~\citet{Gordon08:No}, which comprises two components. The first is a fixed-point oracle, which takes as input a function $\phi \in \Philin$ and returns a point 
$\vx \in \cX$ with
$\vx = \phi(\vx)$; given that $\phi$ is linear, it can be implemented efficiently via linear programming. The second component is an algorithm for minimizing (external) regret over the set $\Philin$. In~\Cref{theorem:explicit-repres}, we devise a polynomial representation for $\Philin$:

\begin{theorem}
    \label{theorem:regret}
   For an arbitrary polytope $\cX$ given by explicit linear constraints, there is an explicit representation of $\Philin$ as a polytope with $O(d^2 + m^2)$  variables and constraints.
\end{theorem}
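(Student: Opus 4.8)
The plan is to turn the defining condition of $\Philin$ into a finite system of linear inequalities by dualizing an inner linear program, i.e.\ the standard ``robust reformulation'' trick. Recall that $\Philin$ consists of the matrices $\mK \in \R^{d \times d}$ whose induced linear map is an endomorphism of $\cX$, so that
\[
    \Philin = \left\{ \mK \in \R^{d\times d} : \mK \vx \in \cX \ \text{ for all } \vx \in \cX \right\}.
\]
Writing the rows of $\mA$ as $\va_1, \dots, \va_m \in \R^d$, the membership $\mK \vx \in \cX$ reads $\langle \va_i, \mK \vx\rangle \le b_i$ for every $i \in [m]$. The obstacle is the universal quantifier over the continuum $\cX$: a priori this is a \emph{semi-infinite} system. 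First I would observe that, for a fixed row $i$, the whole family $\{\langle \va_i, \mK\vx\rangle \le b_i : \vx \in \cX\}$ holds if and only if its worst case does, namely $\max_{\vx \in \cX} \langle \mK^\top \va_i, \vx \rangle \le b_i$.

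The key step is to eliminate this inner maximization via LP duality. Since $\cX = \{\vx : \mA\vx \le \vb\}$ is nonempty (it contains a unit ball by the regularity assumption) and bounded (it lies in $\cB_R(\vec 0)$), the primal LP $\max_{\mA \vx \le \vb} \langle \mK^\top\va_i, \vx\rangle$ is feasible and finite, so strong duality applies with the dual optimum attained:
\[
    \max_{\mA \vx \le \vb} \langle \mK^\top\va_i, \vx\rangle = \min_{\vy_i \ge \vec 0,\ \mA^\top \vy_i = \mK^\top\va_i} \langle \vb, \vy_i\rangle .
\]
Consequently, the row-$i$ condition is equivalent to the \emph{existence} of a multiplier vector $\vy_i \in \R^m$ satisfying the three conditions $\vy_i \ge \vec 0$, $\mA^\top \vy_i = \mK^\top \va_i$, and $\langle \vb, \vy_i\rangle \le b_i$, all of which are jointly linear in the unknowns $(\mK, \vy_i)$.

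Assembling these over all $i \in [m]$, I would define the polytope $\cP$ in the variables $(\mK, \vy_1, \dots, \vy_m)$ cut out by the displayed (in)equalities; its projection onto the $\mK$-coordinates is exactly $\Philin$ by the duality equivalence above, so $\cP$ is the desired explicit (extended) representation. Finally I would count: the variables are the $d^2$ entries of $\mK$ plus the $m$ vectors $\vy_i \in \R^m$, i.e.\ $d^2 + m^2$ in total; the constraints are $md$ equalities ($\mA^\top\vy_i = \mK^\top\va_i$), $m$ scalar inequalities ($\langle\vb,\vy_i\rangle \le b_i$), and $m^2$ nonnegativity constraints, totaling $md + m + m^2$. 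Using $md \le \tfrac12(d^2 + m^2)$, this is $O(d^2 + m^2)$, as claimed, and the construction is clearly polynomial-time with rational coefficients.

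The main point requiring care is the equivalence between the universally quantified (semi-infinite) constraint and its existential dual reformulation: \emph{both} directions rest on strong LP duality together with attainment of the dual optimum, which is precisely where nonemptiness and boundedness of $\cX$ enter (degenerate or unbounded $\cX$ would break the clean ``$\exists \vy_i$'' characterization). Beyond verifying this equivalence and bookkeeping the constraint count, the argument is routine; the single idea doing the real work is the LP-duality elimination of the inner maximization.
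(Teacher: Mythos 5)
Your proposal is correct and follows essentially the same route as the paper's proof of \Cref{theorem:explicit-repres}: dualize the inner maximization $\max_{\vx \in \cX}\langle \mK^\top \va_i, \vx\rangle$ row by row via strong LP duality and collect the multipliers $\vy_i$ into a matrix, yielding exactly the constraints $\mV\mA = \mA\mK$, $\mV\vb \le \vb$, $\mV \ge \vec 0$ with the stated $O(d^2+m^2)$ count. The only cosmetic difference is that the paper also carries an affine offset $\vc$, which is immaterial here.
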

As a consequence, we can instantiate the regret minimizer operating over $\Philin$ with projected gradient descent.
\begin{corollary}
    \label{cor:regret}
    There is a deterministic algorithm that guarantees $\philinreg^{(T)} \leq \epsilon$ after $\poly(d, m)/\eps^2$ rounds, and requires solving a convex quadratic program with $O(d^2+m^2)$ variables and constraints in each iteration.
\end{corollary}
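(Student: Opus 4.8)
The plan is to combine the explicit polytope representation of $\Philin$ from \Cref{theorem:regret} with the $\Phi$-regret template of \citet{Gordon08:No}, whose external-regret component I will instantiate with projected gradient descent. Recall the reduction: maintain an external-regret minimizer $\cR$ whose decision set is $\Philin$; at round $t$, let $\cR$ propose a deviation $\phi^{(t)} = \mK^{(t)} \in \Philin$, play a point $\vx^{(t)}$ with $\mK^{(t)}\vx^{(t)} = \vx^{(t)}$, observe $F(\vx^{(t)})$, and feed $\cR$ the linear objective $\phi \mapsto \langle F(\vx^{(t)}), \phi(\vx^{(t)}) \rangle$. The fixed point exists because every linear endomorphism of the convex compact set $\cX$ has one by Brouwer's theorem, and it is found by the linear feasibility program $(\mK^{(t)} - \mI)\vx = \vec{0}$, $\vx \in \cX$. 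Since $\vx^{(t)}$ is a fixed point of $\phi^{(t)}$, we have $\langle F(\vx^{(t)}), \phi^{(t)}(\vx^{(t)}) \rangle = \langle F(\vx^{(t)}), \vx^{(t)} \rangle$, so the external regret of $\cR$ with respect to these linear objectives equals exactly $\max_{\phi \in \Philin} \sum_{t=1}^T \langle F(\vx^{(t)}), \phi(\vx^{(t)}) - \vx^{(t)} \rangle = \philinreg^{(T)}$. It therefore suffices to bound the external regret of $\cR$.

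Next I would observe that $\cR$ faces an online \emph{linear} optimization problem over the $O(d^2+m^2)$-dimensional polytope of \Cref{theorem:regret}: writing $\langle F(\vx^{(t)}), \mK\vx^{(t)} \rangle = \langle F(\vx^{(t)})(\vx^{(t)})^\top, \mK \rangle$ in Frobenius inner product, the loss gradient in the $\mK$-coordinates is the rank-one matrix $F(\vx^{(t)})(\vx^{(t)})^\top$ and vanishes on any auxiliary coordinates of the representation. Projected gradient descent over this polytope then yields the standard guarantee $\philinreg^{(T)} = O(DG\sqrt{T})$, with $G$ a bound on the loss-gradient norm and $D$ the diameter of the polytope. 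I would bound both by $\poly(d,m)$: the Frobenius norm of $F(\vx^{(t)})(\vx^{(t)})^\top$ is $\|F(\vx^{(t)})\|\,\|\vx^{(t)}\| \le B R \le \poly(d)$, using $\|F\| \le B$ and $\cX \subseteq \cB_R(\vec{0})$ with $R \le \poly(d)$; and every $\mK \in \Philin$ has polynomially bounded entries (as $\cX$ contains a unit ball in its relative interior and lies in $\cB_R$, forcing a $\poly(d)$ operator-norm bound), so the representing polytope has diameter $\poly(d,m)$. Hence $\philinreg^{(T)} \le \poly(d,m)\sqrt{T}$, and $\philinreg^{(T)}/T \le \eps$ after $T = \poly(d,m)/\eps^2$ rounds. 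Each round costs one evaluation of $F$, one fixed-point LP, one gradient step, and one Euclidean projection onto $\Philin$; the projection is $\min_{\phi \in \Philin} \|\phi - \widehat\phi\|^2$, a convex quadratic program with $O(d^2+m^2)$ variables and constraints, which dominates the per-iteration cost. All subroutines are deterministic and the output is the explicit uniform distribution over $\vx^{(1)}, \dots, \vx^{(T)}$, yielding the claim.

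The step I expect to be the main obstacle is the careful accounting of the constants $D$ and $G$ in the presence of the auxiliary variables introduced by the representation of \Cref{theorem:regret}. Since the reduction's losses act only on the $\mK$-block, I must confirm that the auxiliary coordinates stay polynomially bounded and contribute no loss, so that the clean $O(DG\sqrt{T})$ bound and the resulting $\poly(d,m)/\eps^2$ iteration count survive the lift to the full, auxiliary-augmented polytope. The fixed-point and projection routines are themselves standard convex programs and should present no difficulty.
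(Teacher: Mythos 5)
Your proposal is correct and follows essentially the same route as the paper: the Gordon et al.\ template with a fixed-point LP, projected gradient descent over the explicit $O(d^2+m^2)$-dimensional polytope of \Cref{theorem:regret}, and $\poly(d,m)$ bounds on the gradient norm and on the polytope diameter (the paper handles your flagged concern about the auxiliary block by showing $\|\mK\|_2, \|\mV\|_2 \le \poly(d)$ from the containment $\cB_1(\vec 0) \subseteq \cX \subseteq \cB_R(\vec 0)$). The per-iteration cost being dominated by the Euclidean projection, a convex QP over that polytope, matches the paper's accounting exactly.
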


An additional benefit of~\Cref{cor:regret} compared to using $\eah$ is that the former is more suitable in a decentralized environment---for example, in multi-player games (\emph{cf.}~\Cref{example:CCE}). There, \Cref{cor:regret} corresponds to each player running their own independent no-regret learning algorithm. Even in this setting, our algorithms actually yield an improvement over the best-known algorithms for minimizing $\Philin$-regret over explicitly-represented polytopes: the previous state of the art, due to \citet{Daskalakis24:Efficient}, requires running the ellipsoid algorithm on each iteration, which is slower than quadratic programming (\Cref{sec:repre}).
\section{Game Theory Applications of EVIs}
\label{sec:games}

A major motivation for studying $\Phi$-EVIs lies in a strong connection to \emph{(C)CEs}~\citep{Aumann74:Subjectivity} in games. Indeed, we begin this section by pointing out that $\Phi$-EVIs capture 
(C)CEs for specific choices of $\Phi$. 

We will mostly consider $n$-player \emph{concave} games. Here, each player $i \in \range{n}$ selects a strategy $\vx_i \in \cX_i$ from some convex and compact set $\cX_i$, and its utility is given by $u_i : (\vx_1, \dots, \vx_n) \mapsto \R$. We assume that $u_i(\vx_i, \vx_{-i})$ is differentiable and concave in $\vx_i$ for any $\vx_{-i}$, and that the gradients $\grad_{\vx_i} u_i(\vx_i, \vx_{-i})$ are bounded. We let $\cX \defeq \cX_1 \times \dots \times \cX_n$.

\begin{example}[CCE]
    \label{example:CCE}
    A distribution $\mu \in \Delta(\cX)$, is an \emph{$\epsilon$-coarse correlated equilibrium (CCE)}~\citep{Moulin78:Strategically} if for any player $i \in \range{n}$,
    \begin{equation}
        \label{eq:CCE-dev}
        \delta_i \defeq  \max_{\vx_i' \in \cX_i} \E_{\vx \sim \mu} u_i(\vx_i', \vx_{-i})  - \E_{\vx \sim \mu} u_i(\vx) \leq  \epsilon.
    \end{equation}
    Now, consider an $\epsilon$-approximate EVI solution $\mu$ of the problem defined by 
    $$F \defeq ( - \nabla_{\vx_1} u_1(\vx), \dots, - \nabla_{\vx_n} u_n(\vx)).$$ 
    Such $\mu$ satisfies, by concavity, $\sum_{i=1}^n \delta_i \leq \epsilon$; it is not necessarily an $\epsilon$-approximate CCE since it is possible that for some $i \in [n]$, {\em all} deviations strictly decrease $i$'s utility (so that $\delta_i$ in~\eqref{eq:CCE-dev} is negative)---$\mu$ is technically an \emph{average} CCE in the parlance of~\citet{Nadav10:Limits}. To capture CCE via $\Phi$-EVIs, one can instead consider a richer set of deviations of the form $(\vx_1, \dots, \vx_n) \mapsto (\vx_1, \dots, \vx_i', \dots, \vx_n)$ for all $i \in [n]$ and $\vx_i' \in \cX_i$.
\end{example}

A canonical example of the above formalism is a \emph{normal-form game}, in which each constraint set $\cX_i$ is the probability simplex $\Delta(\cA_i)$ over a finite set of \emph{actions} $\cA_i$, and each utility $u_i$ is a multilinear function.

\begin{example}[LCE]
    \label{example:CE}
    A distribution $\mu \in \Delta(\cX)$ is an \emph{$\epsilon$-linear correlated equilibrium (LCE)} if for any $i \in [n]$,
    \begin{equation*}
        \max_{\phi_i \in \Phi_i} \E_{ \vx \sim \mu } u_i(\phi_i(\vx_i), \vx_{-i}) - \E_{\vx \sim \mu} u_i(\vx) \leq \epsilon,
    \end{equation*}
    where $\Phi_i$ contains all linear functions from $\cX_i$ to $\cX_i$. To capture LCE via $\Phi$-EVIs, it suffices to consider deviations of the form $(\vx_1, \dots, \vx_n) \mapsto (\vx_1, \dots, \phi_i(\vx_i), \dots, \vx_n)$ for all $i \in [n]$ and $\phi_i \in \Phi_i$.
\end{example}

For normal-form games, LCEs amount to the usual notion of CEs~\citep{Aumann74:Subjectivity}. LCEs were introduced in the context of extensive-form games~\citep{Farina23:Polynomial,Farina24:Polynomial}.

\paragraph{Refining correlated equilibria} In fact, and more surprisingly, $\Philin$-EVI solutions can be a strict subset of LCEs.\footnote{The example of~\citet[Example 1]{Ahunbay25:First} already implies that certain CEs can be excluded from the set of $\Philin$-EVIs, which, incidentally, could have implications for last-iterate convergence in some classes of games, as discussed by that auhtor. Our example in~\Cref{fig:bach or stravinsky} goes much further, revealing that $\Philin$-EVIs can yield significantly different utilities for each player compared to CEs.} This separation can already be appreciated in the setting of normal-form games, and manifests itself in at least two distinct ways. First, there exist games for which a CE need not be a solution to the $\Philin$-EVI. In this sense, $\Philin$-EVIs yield a computationally tractable superset of Nash equilibria that is tighter than CEs. Second, computation suggests that the set of solutions of the $\Philin$-EVI for the game need not be a polyhedron, unlike the set of CEs. We provide a graphical depiction of this phenomenon in \cref{fig:bach or stravinsky}. The figure depicts the set of $\Philin$-EVI solutions to a simple ``Bach or Stravinsky'' game, in which the players receive payoffs $(3,2)$ if they both pick Bach, $(2,3)$ if they both pick Stravinsky, and $(0,0)$ otherwise.

\paragraph{Interpretation} The reason for this separation is that, for a map $\phi : \cX \to \cX$, each player's mapped strategy $\phi(\vx)_i$ can also depend (linearly) on {\em other players' strategies} $\vx_{-i}$. Indeed, the EVI formulation of a game does not take into account the identities of the players. For this reason, we will call the set of $\Philin$-EVI solutions in a concave game {\em anonymous linear correlated equilibria}, or {\em ALCE} for short. We give two game-theoretic interpretations of ALCEs.

First, the ALCEs of a game $\Gamma$ are the {\em symmetric} LCEs of the ``symmetrized'' game in which the players are randomly shuffled before the game begins. That is, consider the $n$-player game $\Gamma^\text{sym}$ defined as follows. Each player's strategy set is $\cX$. For strategy profile $(\vx^1, \dots, \vx^n) \in \cX^n$, the utility to player $i$ is given by 
\begin{equation*}
    u_i^\text{sym}(\vx^1, \dots, \vx^n) = \frac{1}{n!} \sum_{\sigma \in \Sym_n } u_{\sigma(i)}(\vx^{\sigma^{-1}(1)}_1, \dots, \vx^{\sigma^{-1}(n)}_n),
\end{equation*}
where $\Sym_n$ is the set of permutations $\sigma : [n] \to [n]$. The following result then follows almost by definition.
\begin{restatable}{proposition}{propJointLCESymmetric}\label{prop:joint lce symmetric}
    For a given distribution $\mu \in \Delta(\cX)$, define the distribution $\mu^n \in \Delta(\cX^n)$ by sampling $\vx\sim\mu$ and outputting $(\vx, \dots, \vx) \in \cX^n$. Then, $\mu$ is a ALCE of $\Gamma$ if and only if $\mu^n$ is an LCE of $\Gamma^\textup{sym}$. 
\end{restatable}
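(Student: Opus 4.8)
The statement is an ``if and only if'' that, as the paper itself hints, should follow almost directly by unwinding the two definitions and matching up the deviation sets. The plan is to show that the $\Philin$-EVI condition for $\Gamma$ (applied to the diagonal distribution $\mu$) is equivalent, constraint by constraint, to the symmetric-LCE condition for $\mu^n$ in $\Gamma^\text{sym}$. I would first fix notation: recall that $\mu$ is a ALCE of $\Gamma$ exactly when $\mu$ is a $\Philin$-EVI solution of the operator $F = (-\nabla_{\vx_1} u_1, \dots, -\nabla_{\vx_n} u_n)$ on $\cX = \cX_1 \times \cdots \times \cX_n$, i.e. $\E_{\vx\sim\mu} \langle F(\vx), \phi(\vx) - \vx\rangle \ge -\eps$ for every linear endomorphism $\phi$ of $\cX$. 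The key observation is that, by concavity of each $u_i$ in its own block, the first-order quantity $\langle -\nabla_{\vx_i} u_i(\vx), \phi(\vx)_i - \vx_i\rangle$ lower-bounds the actual utility change $u_i(\vx) - u_i(\phi(\vx)_i, \vx_{-i})$, so I should be careful to state whether the equivalence is at the level of the exact linearized (EVI) constraints on both sides rather than the raw utility gaps; since an LCE in the concave setting is itself defined through the same first-order/linear-deviation lens (\Cref{example:CE}), both sides are governed by the same inner products and the concavity slack cancels.

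\textbf{Main step: matching the deviation sets.} The heart of the argument is a bijection between the deviations quantified on each side. On the $\Gamma^\text{sym}$ side, a linear deviation for player $i$ is a linear map $\psi_i : \cX \to \cX$ (since each player's strategy set in $\Gamma^\text{sym}$ is the full product $\cX$), and the symmetric-LCE constraint asks that, against the diagonal distribution $\mu^n$, no player $i$ gains by applying $\psi_i$. Because $u_i^\text{sym}$ is the uniform average over permutations $\sigma \in \Sym_n$ of $u_{\sigma(i)}$, and because $\mu^n$ places all coordinates at the same draw $\vx$, the symmetry makes every player's deviation constraint collapse to the \emph{same} averaged condition; I would compute that the per-player sum over $\sigma$ reassembles exactly $\sum_i \langle -\nabla_{\vx_i}u_i(\vx), \phi(\vx)_i - \vx_i\rangle$ for the induced linear endomorphism $\phi$ of $\cX$ obtained by reading off the block structure of $\psi_i$. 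Conversely, any linear endomorphism $\phi$ of $\cX$ decomposes into $d$-blocks $\phi(\vx)_i$ that are linear in the full $\vx$, which is precisely the freedom a symmetrized-game deviation $\psi_i$ has; so the two deviation families are in linear correspondence, and I would exhibit this correspondence explicitly and check it preserves the ``$\le \eps$'' thresholds.

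\textbf{Expected obstacle.} The routine direction is turning one inequality into the other once the correspondence is fixed; the genuinely delicate part is handling the permutation-averaging in $u_i^\text{sym}$ and the re-indexing of strategy blocks under $\sigma^{-1}$, making sure that summing the $n$ per-player symmetric constraints reproduces a single EVI constraint (and not $n$ separate ones) and that the linear-deviation structure survives this averaging intact. In particular I must verify that the symmetry of $\Gamma^\text{sym}$ forces the relevant LCE to be \emph{symmetric} (all players treated identically on the diagonal), which is exactly the qualifier in the proposition's statement and the reason the two objects coincide rather than one merely containing the other. I would therefore spend the bulk of the write-up on this averaging/re-indexing computation, and treat the threshold-preservation and the concavity bookkeeping as short verifications. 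Once the deviation-set bijection and the averaging identity are in hand, the ``if and only if'' falls out immediately by reading the equivalence in both directions.
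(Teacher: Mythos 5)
Your symmetrization step is essentially the paper's own argument (its \Cref{prop:symmetry equivalence}): on the diagonal distribution $\mu^n$, player $i$'s deviation constraint in $\Gamma^\text{sym}$ averages over $\sigma$ into $\frac{1}{n}\sum_j \E_{\vx\sim\mu}\left[u_j(\vx_1,\dots,\phi(\vx)_j,\dots,\vx_n)-u_j(\vx)\right]\le 0$, and one passes between this single summed constraint and the $n$ per-block constraints by letting $\phi$ act as a given $\phi_j$ on block $j$ and as the identity elsewhere. That part of your plan is sound, and you correctly identify the permutation re-indexing as the place to be careful.

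The genuine gap is in your first paragraph. You claim the concavity slack ``cancels'' because an LCE is ``defined through the same first-order/linear-deviation lens.'' It is not: the LCE of $\Gamma^\text{sym}$ (and the $\Phipl$-equilibrium notion generally) is defined via raw utility differences $\E_{\vx\sim\mu}[u_i(\phi_i(\vx),\vx_{-i})-u_i(\vx)]\le 0$, whereas an ALCE is by definition a $\Philin$-EVI solution, i.e.\ a \emph{gradient} condition $\E_{\vx\sim\mu}\langle\nabla_{\vx_i}u_i(\vx),\phi(\vx)_i-\vx_i\rangle\le 0$. Concavity gives you only one direction: the linearization upper-bounds the utility gain, so the gradient condition implies the utility condition. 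For the converse---which you need for ``LCE of $\Gamma^\text{sym}$ $\Rightarrow$ ALCE''---the paper proves a separate statement (\Cref{prop:evi phi equivalence convex}): set $\phi_i^\lambda=\lambda\phi_i+(1-\lambda)\pi_i$ and $G(\lambda)=\E_{\vx\sim\mu}[u_i(\phi_i^\lambda(\vx),\vx_{-i})-u_i(\vx)]$, observe $G(0)=0$ and $G'(0)=\E_{\vx\sim\mu}\langle\nabla_{\vx_i}u_i(\vx),\phi_i(\vx)-\vx_i\rangle$, so a strictly positive gradient gain would yield a strictly profitable utility deviation at some small $\lambda>0$. This uses convexity of the deviation set, the fact that it contains the projection, and crucially only yields the equivalence for \emph{exact} equilibria (the paper notes the converse holds only at $\eps=0$). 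Without some version of this step your ``if and only if'' is only an ``only if.''
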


Second, for normal-form games, the ALCEs are the distributions $\mu \in \Delta(\cX)$ such that no player $i$ has a profitable deviation of the following form. The correlation device first samples $\vx\sim\mu$, and samples recommendations $a_j \sim \vx_j$ for each player $j$. Then, the player selects another player $j$ (possibly $j=i$) whose recommendation it wishes to see. The player then observes a sample $a_j' \sim \vx_j$ that is {\em independently} sampled from $a_j$.\footnote{This independence is crucial: without it, $\mu$ would actually need to be a distribution over pure Nash equilibria!} Finally, the player chooses an action $a_i^* \in \cA_i$, and each player $j$ gets reward $u_j(a_i^*, a_{-i})$. Thus, players are allowed (modulo the independent sampling) to {\em spy} on each others' recommendations.

Further discussion about ALCEs and formal proofs of the claims in this section are  deferred to \Cref{sec:appendix-joint}.


\begin{figure}[t]
    \usepgfplotslibrary{fillbetween}
    \usetikzlibrary{patterns}
    \centering
    \scalebox{.8}{  
        \pgfplotsset{width=10cm,height=8cm,compat=1.18}
    \begin{tikzpicture}
    \begin{axis}[grid,xmin=0,xmax=1,axis line style=semithick,xlabel={Probability of Player 1's first action (Bach)},ylabel={Probability of Player 2's first action (Bach)},ymin=0,ymax=1,xtick distance=0.1,ytick distance=0.1,clip marker paths=true,set layers,legend
    ]
        \begin{pgfonlayer}{axis background}
            \fill [thick,pattern=north west lines,pattern color=red] (0,0) -- (0.7142857142760732, 0.28571428572392693) -- (1,1) -- (0.3750000001824576, 0.6249999998175428) -- cycle;
        \end{pgfonlayer}

        \draw [very thick,red] (0,0) -- (0.7142857142760732, 0.28571428572392693) -- (1,1) -- (0.3750000001824576, 0.6249999998175428)  -- cycle;

        \addplot [blue,very thick,name path=A,domain=0:1]
            {1/20 * (-11 + 25 * x + sqrt(121 - 310 * x + 225 * x^2))};
     
        \addplot [very thick, blue, name path=B,domain=0:1,samples=2]
            {x};
     
        \addplot [on layer=main,blue, fill opacity=0.3] fill between [of=A and B];

        \begin{pgfonlayer}{pre main}
            \addplot[mark=*,only marks,mark size=1.0mm,thick] coordinates {(0,0) (.6,.4) (1,1)};
        \end{pgfonlayer}

        \node[red, inner sep=.5mm,rounded corners, fill=white,rotate=0] at (.67,.31) {\small CE};


        \begin{scope}[x=1cm,y=1cm,yshift=5.2cm,xshift=2mm]
           \filldraw[gray,rounded corners=.3mm,fill=white,fill opacity=.5] (0,-.45) rectangle (3.9,1);
           \draw[thick,blue,fill=blue,fill opacity=.3] (.1,.1) rectangle (.7,.4);
           \node[anchor=west] at (.8,.25) {\small $\Philin$-EVI solutions};

           \draw[thick,red,pattern=north west lines,pattern color=red] (.1,.6) rectangle (.7,.9);
           \node[anchor=west] at (.8,.7) {\small Correlated equil.};

           \draw[fill=black] (.4, -.2) circle (1mm);
           \node[anchor=west] at (.8,-.2) {\small VI sol. (Nash eq.)};
        \end{scope}
    \end{axis}
    \end{tikzpicture}}
    \caption{
        Marginals of the set of correlated equilibria (CE) and of the set of solutions to $\Philin$-EVI in the simple $2 \times 2$ game ``Bach or Stravinsky.'' The x- and y-axes show the probability with which the two players select the first action (Bach). The set of marginals of $\Philin$-EVI solutions appears to have a curved boundary corresponding, we believe, to the hyperbola $10 x^2 - 25 xy + 10 y^2 - 6x+11y=0$.
    }
    \label{fig:bach or stravinsky}
    \vspace{-3mm}
\end{figure}
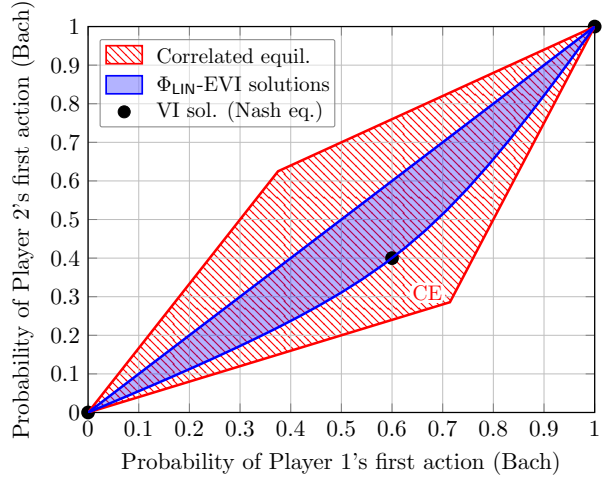

\paragraph{Coupled constraints}
Continuing from~\Cref{example:CCE,example:CE}, we observe that ($\Philin$-)EVIs can be used even in ``pseudo-games,'' in which $\cX$ does not necessarily decompose into $\cX_1 \times \dots \times \cX_n$; this means that $\vx_i \in \cX_i(\vx_{-i})$. As we discuss in~\Cref{sec:related}, most prior work in such settings has focused on generalized Nash equilibria, with the exception of~\citet{Bernasconi23:Constrained}. ($\Philin$-)EVIs induce an interesting notion of LCE/CCE in pseudo-games, albeit not directly comparable to the one put forward by~\citet{Bernasconi23:Constrained}. It is worth noting that~\citet{Bernasconi23:Constrained} left open whether efficient algorithms for computing their notion of (coarse) correlated equilibria exist.

\begin{definition}
    Given an $n$-player pseudo-game with concave, differentiable utilities and joint constraints $\cX$, a distribution $\mu \in \Delta(\cX)$ is an \emph{$\epsilon$-ALCE} if
    \begin{equation*}
        \max_{\phi \in \Philin} \E_{\vx \sim \mu} \sum_{i=1}^n u_i(\phi(\vx)_i, \vx_{-i}) - \sum_{i=1}^n u_i(\vx) \leq \epsilon.
    \end{equation*}
\end{definition}

By virtue of our main result (\Cref{th:elvi}), such an equilibrium can be computed in polynomial time.

\paragraph{Noncontinuous gradients} In fact, our results do not rest on the usual assumption that each player's gradient is a continuous function, thereby significantly expanding the scope of prior known results even in games. For example, we refer to~\citet{Dasgupta86:Existence,Bichler21:Learning,Martin24:Joint} for pointers to some applications.

\paragraph{Nonconcave games} Last but not least, $\Phi$-EVIs give rise to a notion of \emph{local} $\Phi$-equilibrium (\Cref{def:localPhi}) in nonconcave games. It turns out that this captures recent results by~\citet{Cai24:Tractable} and~\citet{Ahunbay25:First}, but our framework has certain important advantages. First, we give a $\poly(d, \log(1/\epsilon))$-time algorithm (\Cref{th:elvi}), while theirs scale polynomially in $1/\epsilon$. Second, our results do not assume continuity of the gradients. And finally, our algorithms are polynomial even when $\Phi$ contains all linear endomorphisms (\Cref{th:elvi}). \Cref{sec:localPhi} elaborates further on those points.

\section{Problems where EVIs coincide with VIs}

We saw earlier, in~\Cref{prop:EVI-VI}, that when $\Phi$ comprises all functions from $\cX$ to $\cX$, the $\Phi$-EVI problem is tantamount to the associated VI problem. However, if one restricts the functions contained in $\Phi$, are there still structured VIs where we retain this equivalence? 
In this section, we consider certain structured VIs, and show their equivalence to the corresponding EVIs (that is, $\Phiconst$-EVIs). Unlike general VIs, the ones we examine below are tractable.

\subsection{Polymatrix zero-sum games and beyond}

The first important class of VIs we consider is described by a condition given below.

\begin{proposition}
    \label{prop:collapse}
    Suppose that for any $\vx' \in \cX$, the function $g : \vx \mapsto \langle F(\vx), \vx' - \vx \rangle$ is concave. Then, if $\mu \in \Delta(\cX)$ is an $\epsilon$-approximate solution to the EVI, $\E_{\vx \sim \mu} \vx$ is an $\epsilon$-approximate solution to the VI.
\end{proposition}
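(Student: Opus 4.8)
The plan is to unpack the $\epsilon$-approximate EVI constraint for the constant deviations $\Phiconst$ and then apply Jensen's inequality. Since each $\phi_{\vx'} \in \Phiconst$ maps every point to the constant value $\vx'$, the hypothesis that $\mu$ is an $\epsilon$-approximate EVI solution reads, via \Cref{def:evi}, exactly as $\E_{\vx \sim \mu} \ip{F(\vx), \vx' - \vx} \ge -\epsilon$ for every $\vx' \in \cX$. The goal is to show that the barycenter $\bar\vx \defeq \E_{\vx \sim \mu} \vx$ satisfies the ordinary VI inequality $\ip{F(\bar\vx), \vx' - \bar\vx} \ge -\epsilon$ for all $\vx' \in \cX$.

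First I would verify that $\bar\vx \in \cX$: as $\cX$ is convex and compact and $\mu$ is supported on $\cX$, its mean lies in $\cX$, so $\bar\vx$ is a legitimate candidate VI solution. Next, fixing an arbitrary $\vx' \in \cX$, I would invoke the hypothesis that $g(\vx) \defeq \ip{F(\vx), \vx' - \vx}$ is concave on $\cX$. Jensen's inequality then gives $g(\bar\vx) \ge \E_{\vx \sim \mu} g(\vx)$, that is, $\ip{F(\bar\vx), \vx' - \bar\vx} \ge \E_{\vx \sim \mu} \ip{F(\vx), \vx' - \vx}$. Chaining with the EVI guarantee above yields $\ip{F(\bar\vx), \vx' - \bar\vx} \ge -\epsilon$, and since $\vx'$ was arbitrary, $\bar\vx$ is an $\epsilon$-approximate VI solution.

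There is no genuinely difficult step here: the entire argument is a single application of Jensen once the EVI constraint is rewritten for constant deviations. The only points meriting (minor) care are the membership $\bar\vx \in \cX$ and checking that the direction of Jensen matches the sign convention of \Cref{def:evi}---concavity delivers $g(\text{mean}) \ge \text{mean of } g$, which is precisely the inequality we need, since the EVI lower-bounds the expectation of $g$ and we wish to lower-bound $g$ at the mean. I anticipate no other obstacles.
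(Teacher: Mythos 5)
Your proof is correct and follows exactly the route the paper indicates (a direct application of Jensen's inequality to the concave map $g : \vx \mapsto \ip{F(\vx), \vx' - \vx}$ after instantiating the EVI constraint at the constant deviations). No issues.
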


The proof follows directly from Jensen's inequality.

The precondition of~\Cref{prop:collapse} is satisfied, \emph{e.g.}, when: (i) $\langle F(\vx), \vx \rangle = 0$ for all $\vx \in \cX$, and (ii) $F$ is a linear map. In the context of $n$-player games, the first condition amounts to the zero-sum property: $\sum_{i=1}^n u_i(\vx) = 0$ for all $\vx$. Of course, this property is not enough to enable efficient computation of Nash equilibria, for every two-player (general-sum) game can be converted into a $3$-player zero-sum game. This is where the second condition comes into play: $F$ is a linear map---that is, each player's gradient must be linear in the joint strategy. Those two conditions are satisfied in \emph{polymatrix zero-sum} games~\citep{Cai16:Zero}; in such games, the conclusion of~\Cref{prop:collapse} is a well-known fact. 

\subsection{Quasar-concave functions}

We next consider the problem of maximizing a (single) function that satisfies \emph{quasar-concavity}---a natural generalization of concavity that has received significant  interest~\citep{Hardt18:Gradient,Fu23:Accelerated,Hinder20:Near,Gower21:SGD,Guminov23:Accelerated,Caramanis24:Optimizing}.\footnote{Prior literature mostly uses the term \emph{quasar-convexity}, which is equivalent to quasar-concavity for the opposite function $-u$.}

\begin{definition}[Quasar-concavity]
    \label{def:quasar}
    Let $\gamma \in (0, 1]$ and $\vxstar \in \cX$ be a maximizer of a differentiable function $u : \cX \to \R$. We say that $u$ is \emph{$\gamma$-quasar-concave} with respect to $\vxstar$ if
    \begin{equation}
        \label{eq:quasar}
        u(\vxstar) \leq u(\vx) + \frac{1}{\gamma} \langle \nabla u(\vx), \vxstar - \vx \rangle \quad \forall \vx \in \cX.
    \end{equation}
\end{definition}

In particular, in the special case where $\gamma = 1$, \eqref{eq:quasar} is equivalent to \emph{star-concavity}~\citep{Nesterov06:Cubic}. If in addition \eqref{eq:quasar} holds for all $\vxstar \in \cX$ (not merely w.r.t. a global maximizer), it captures the usual notion of concavity.

Any reasonable solution concept for such problems should place all mass on global maxima; EVIs pass this litmus test:

\begin{proposition}
    \label{prop:convex-equiv}
    Let $F = - \grad u$ for a $\gamma$-quasar-concave and differentiable function $u : \cX \to \R$. Then, for any solution $\mu \in \Delta(\cX)$ to the EVI problem,
    \begin{equation*}
        \E_{\vx \sim \mu} u(\vx) \geq \max_{\vx \in \cX} u(\vx).
    \end{equation*}
    Thus, $\mathbb{P}_{\vx\sim\mu}[u(\vxstar) = u(\vx)] = 1$, for $\vxstar \in \argmax_{\vx} u(\vx)$.
\end{proposition}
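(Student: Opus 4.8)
The plan is to unwind the EVI condition for the constant deviations in $\Phiconst$ and then couple it with the quasar-concavity inequality evaluated at a global maximizer. First I would note that, since $\Phiconst = \{\phi_{\vz} : \vz \in \cX\}$ with $\phi_{\vz} \equiv \vz$, an (exact) EVI solution $\mu$ satisfies $\E_{\vx \sim \mu} \langle F(\vx), \vz - \vx \rangle \ge 0$ for every $\vz \in \cX$. Substituting $F = -\grad u$ and rearranging, this reads $\E_{\vx \sim \mu} \langle \grad u(\vx), \vz - \vx \rangle \le 0$ for all $\vz \in \cX$. In particular, specializing to $\vz = \vxstar$, a global maximizer of $u$, gives $\E_{\vx \sim \mu} \langle \grad u(\vx), \vxstar - \vx \rangle \le 0$.

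The next step is to invoke $\gamma$-quasar-concavity (\Cref{def:quasar}) with respect to $\vxstar$, which, after multiplying through by $\gamma > 0$, yields the pointwise bound $\langle \grad u(\vx), \vxstar - \vx \rangle \ge \gamma\, (u(\vxstar) - u(\vx))$ for every $\vx \in \cX$. Taking expectations over $\vx \sim \mu$ and chaining with the inequality from the previous step, I obtain $0 \ge \E_{\vx \sim \mu} \langle \grad u(\vx), \vxstar - \vx \rangle \ge \gamma\,\big(u(\vxstar) - \E_{\vx \sim \mu} u(\vx)\big)$. Dividing by $\gamma > 0$ gives $\E_{\vx \sim \mu} u(\vx) \ge u(\vxstar) = \max_{\vx \in \cX} u(\vx)$, which is the first claim.

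For the second claim, I would combine this integral inequality with the trivial pointwise inequality $u(\vx) \le u(\vxstar)$, valid because $\vxstar$ is a global maximizer. The random variable $u(\vxstar) - u(\vx)$ is therefore nonnegative $\mu$-almost surely, yet has nonpositive expectation; hence it vanishes $\mu$-almost surely, i.e.\ $\pr_{\vx \sim \mu}[u(\vx) = u(\vxstar)] = 1$.

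I do not anticipate a serious obstacle, since the argument is essentially a one-line chaining of two inequalities. The only points requiring care are tracking the direction of the inequalities through the sign flip $F = -\grad u$ together with the factor $1/\gamma$, and invoking the elementary fact that a nonnegative random variable with nonpositive mean is zero almost surely for the final support statement.
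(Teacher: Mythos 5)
Your proof is correct and follows essentially the same route as the paper's: the paper also chains the EVI condition for the constant deviation $\phi_{\vxstar}$ with the quasar-concavity inequality at $\vxstar$ (writing it as $0 \leq \E_{\vx\sim\mu} \ip{\grad u(\vx), \vx - \vxstar } \le \gamma \E_{\vx\sim\mu} [u(\vx) - u(\vxstar)]$), and the almost-sure conclusion follows just as you argue. No issues.
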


Indeed, by~\Cref{def:quasar}, $0 \leq \E_{\vx\sim\mu} \ip{\grad u(\vx), \vx - \vxstar } \le \gamma \E_{\vx\sim\mu} [u(\vx) - u(\vxstar) ]$ for any EVI solution $\mu \in \Delta(\cX)$. Thus, under quasar-concavity, VIs basically reduce to EVIs.


\section{Performance Guarantees for EVIs}
\label{sec:smoothness}

In many settings, a VI solution is used as a proxy to approximately maximize some underlying objective function; machine learning offers many such applications. The question is whether performance guarantees pertaining to VIs can be extended---potentially with some small degradation---to EVIs as well. The purpose of this section is to provide a framework for achieving that based on the following notion.

\begin{definition}
    \label{def:smoothness}
    An EVI problem is \emph{$(\lambda, \nu)$-smooth}, for $\lambda > 0, \nu > -1$, w.r.t. $W : \cX \to \R$ and $\vxstar \in \argmax_{\vx} W(\vx)$ if
    \begin{equation*}
        \langle F(\vx), \vxstar - \vx \rangle \leq -  \lambda W(\vxstar) + (\nu+1) W(\vx) \quad \forall \vx \in \cX.
    \end{equation*}
\end{definition}

\begin{example}
    When the underlying problem corresponds to a multi-player game and $W$ is the (utilitarian) social welfare, \Cref{def:smoothness} coincides with the celebrated notion of smoothness \emph{\`a la}~\citet{Roughgarden15:Intrinsic}; this is a consequence of multilinearity, which implies that $W(\vx) = -\langle \vx, F(\vx) \rangle$ and $ \langle \vxstar, F(\vx) \rangle = - \sum_{i=1}^n u_i(\vxstar_i, \vx_{-i})$ for all $\vx \in \cX$. We also refer to the recent treatment of smoothness by~\citet{Ahunbay25:First} in the context of nonconcave games, which builds on the primal-dual framework of~\citet{Nadav10:Limits}.
\end{example}

\Cref{def:smoothness} is an extension of the more general notion of ``local smoothness,'' introduced by~\citet{Roughgarden15:Local} in the context of splittable congestion games. However, it goes beyond games. Indeed, the following definition we introduce generalizes \Cref{def:quasar}, making a new connection between smoothness and quasar-concavity.

\begin{definition}[Extension of quasar-concavity]
    \label{def:smooth-fun}
    Let $\vxstar \in \cX$ be a maximizer of a differentiable function $u : \cX \to \R$. We say that $u$ is \emph{$(\lambda, \nu)$-smooth} with respect to $\vxstar$ if
    \begin{equation*}
        \langle \nabla u(\vx), \vxstar - \vx \rangle \geq \lambda u(\vxstar) - (\nu + 1) u(\vx) \quad \forall \vx \in \cX.
    \end{equation*}
\end{definition}

In particular, when $\lambda \defeq \gamma$ and $\nu \defeq \gamma - 1$, the above definition captures $\gamma$-quasar-concavity. In~\Cref{sec:appendix-smooth}, we provide an example of a polynomial that satisfies~\Cref{def:smooth-fun} without being quasar-concave. Now, the key property of~\Cref{def:smoothness} is that any EVI solution approximates the underlying objective---by a factor of $\rho \defeq \nicefrac{\lambda}{1 + \nu}$.

\begin{theorem}
    \label{theorem:smoothness}
    Let $\mu \in \Delta(\cX)$ be an $\epsilon$-approximate solution to a $(\lambda, \nu)$-smooth EVI problem w.r.t. $W : \cX \to \R$. Then,
    \begin{equation*}
        \E_{\vx \sim \mu} W(\vx)  \geq \frac{\lambda}{1 + \nu} \max_{\vx \in \cX} W(\vx) - \frac{\epsilon}{1 + \nu}.
    \end{equation*}
\end{theorem}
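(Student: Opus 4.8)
The plan is to derive the bound by chaining exactly two inequalities: the defining constraint of an $\epsilon$-approximate EVI solution, instantiated at a single well-chosen deviation, and the $(\lambda,\nu)$-smoothness inequality of \Cref{def:smoothness}. First I would recall that the (unqualified) EVI problem is the $\Phiconst$-EVI problem, so the admissible deviations are the constant maps $\phi_\vx(\vx') \equiv \vx$. Applying the EVI guarantee~\eqref{eq:evi} to the constant deviation $\phi_{\vxstar}$ that sends every point to the smoothness maximizer $\vxstar \in \argmax_{\vx} W(\vx)$ yields
\[
\E_{\vx \sim \mu} \langle F(\vx), \vxstar - \vx \rangle \geq -\epsilon .
\]

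Next I would apply \Cref{def:smoothness} pointwise: for every $\vx \in \cX$ we have $\langle F(\vx), \vxstar - \vx \rangle \leq -\lambda W(\vxstar) + (\nu+1) W(\vx)$. Taking the expectation over $\vx \sim \mu$ and using linearity of expectation gives
\[
\E_{\vx \sim \mu} \langle F(\vx), \vxstar - \vx \rangle \leq -\lambda W(\vxstar) + (\nu+1)\, \E_{\vx \sim \mu} W(\vx).
\]
Combining this with the previous display and rearranging produces $\lambda W(\vxstar) - \epsilon \leq (\nu+1)\, \E_{\vx \sim \mu} W(\vx)$. The concluding step is to divide through by $\nu+1$; this is valid and preserves the direction of the inequality precisely because the hypothesis $\nu > -1$ ensures $\nu + 1 > 0$. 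Substituting $W(\vxstar) = \max_{\vx \in \cX} W(\vx)$ then recovers the claimed inequality.

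I do not anticipate a genuine obstacle here---the statement is essentially a two-line consequence of the two defining properties. The only steps demanding care are (i) choosing the constant deviation that points at the specific maximizer $\vxstar$ witnessing smoothness (rather than an arbitrary competitor), so that the smoothness inequality is applicable termwise under the expectation, and (ii) observing that the sign condition $\nu > -1$ is exactly what licenses the final division by $1 + \nu$ without flipping the inequality.
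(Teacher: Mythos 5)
Your proof is correct and follows exactly the argument sketched in the paper: instantiate the EVI constraint at the constant deviation pointing to the smoothness maximizer $\vxstar$, take expectations of the pointwise smoothness inequality, combine, and divide by $1+\nu>0$. Nothing further is needed.
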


The proof follows directly from~\Cref{def:smoothness}, using that $\E_{\vx \sim \mu} \langle F(\vx), \vxstar - \vx \rangle \geq - \epsilon$ and linearity of expectation.
\section{Conclusions and Future Research}

In summary, our main contribution was to introduce and examine a natural relaxation of VIs, which we refer to as \emph{expected} VIs. Unlike VIs, which are marred by computational intractability, we showed that EVIs can be solved efficiently under minimal assumptions. We also uncovered many other intriguing properties of EVIs (\emph{cf.}~\Cref{tab:results}).

There are many promising avenues for future work. VIs enjoy a great reach in a wide range of applications, some of which were discussed earlier in our introduction. It would be interesting to explore in more detail how EVIs fare in such settings compared to VIs. In particular, given that EVIs relax VIs, in addition to their more favorable computational properties, it is likely that they unlock new, more desirable solutions not present under VIs. 
For example, it is well known (\eg,~\citealp{Ashlagi05:Value}) that CEs can achieve better welfare than Nash equilibria in games. In light of the prominence of correlated equilibria in the rich setting of multi-player games, we anticipate EVIs to solidify their place also in other application areas beyond the realm of game theory.

\section*{Acknowledgments}
T.S. is supported by the Vannevar Bush Faculty Fellowship ONR N00014-23-1-2876, National Science Foundation grants RI-2312342 and RI-1901403, ARO award W911NF2210266, and NIH award A240108S001. B.H.Z. is supported 
by the CMU Computer Science Department Hans Berliner
PhD Student Fellowship. E.T, R.E.B., and V.C. thank the Cooperative AI Foundation, Polaris Ventures (formerly the Center for
Emerging Risk Research) and Jaan Tallinn’s donor-advised fund at Founders Pledge for financial
support. E.T. and R.E.B. are also supported in part by the Cooperative AI PhD Fellowship. G.F is supported by the National Science Foundation grant CCF-2443068. We are grateful to Mete \c Seref Ahunbay for his helpful feedback. We also thank Andrea Celli and Martino Bernasconi for discussions regarding $\Phi$-equilibria in games with coupled constraints.

\bibliography{dairefs}

\begin{thebibliography}{103}
\providecommand{\natexlab}[1]{#1}
\providecommand{\url}[1]{\texttt{#1}}
\expandafter\ifx\csname urlstyle\endcsname\relax
  \providecommand{\doi}[1]{doi: #1}\else
  \providecommand{\doi}{doi: \begingroup \urlstyle{rm}\Url}\fi

\bibitem[Alacaoglu et~al.(2023)Alacaoglu, B{\"{o}}hm, and Malitsky]{Alacaoglu23:Beyond}
Ahmet Alacaoglu, Axel B{\"{o}}hm, and Yura Malitsky.
\newblock Beyond the golden ratio for variational inequality algorithms.
\newblock \emph{Journal of Machine Learning Research}, 24:\penalty0 172:1--172:33, 2023.

\bibitem[Aliprantis and Border(2006)]{AliprantisB06:Infinite}
Charalambos~D. Aliprantis and Kim~C. Border.
\newblock \emph{Infinite Dimensional Analysis: a Hitchhiker's Guide}.
\newblock Springer, 2006.

\bibitem[Anagnostides et~al.(2024)Anagnostides, Panageas, Farina, and Sandholm]{Anagnostides24}
Ioannis Anagnostides, Ioannis Panageas, Gabriele Farina, and Tuomas Sandholm.
\newblock Optimistic policy gradient in multi-player markov games with a single controller: Convergence beyond the minty property.
\newblock In \emph{Conference on Artificial Intelligence (AAAI)}, 2024.

\bibitem[Ardagna et~al.(2012)Ardagna, Panicucci, and Passacantando]{Ardagna12:Generalized}
Danilo Ardagna, Barbara Panicucci, and Mauro Passacantando.
\newblock Generalized {N}ash equilibria for the service provisioning problem in cloud systems.
\newblock \emph{IEEE Transactions on Services Computing}, 6\penalty0 (4):\penalty0 429--442, 2012.

\bibitem[Arrow and Debreu(1954)]{Arrow54:Existence}
Kenneth~J. Arrow and Gerard Debreu.
\newblock Existence of an equilibrium for a competitive economy.
\newblock \emph{Econometrica}, 22\penalty0 (3):\penalty0 265--290, 1954.

\bibitem[Ashlagi et~al.(2005)Ashlagi, Monderer, and Tennenholtz]{Ashlagi05:Value}
Itai Ashlagi, Dov Monderer, and Moshe Tennenholtz.
\newblock On the value of correlation.
\newblock In \emph{Conference in Uncertainty in Artificial Intelligence (UAI)}, 2005.

\bibitem[Aumann(1974)]{Aumann74:Subjectivity}
Robert Aumann.
\newblock Subjectivity and correlation in randomized strategies.
\newblock \emph{Journal of Mathematical Economics}, 1:\penalty0 67--96, 1974.

\bibitem[Babichenko(2016)]{Babichenko16:Query}
Yakov Babichenko.
\newblock Query complexity of approximate {N}ash equilibria.
\newblock \emph{Journal of the ACM}, 63\penalty0 (4):\penalty0 36:1--36:24, 2016.

\bibitem[Bauschke et~al.(2021)Bauschke, Moursi, and Wang]{Bauschke21:Generalized}
Heinz~H. Bauschke, Walaa~M. Moursi, and Xianfu Wang.
\newblock Generalized monotone operators and their averaged resolvents.
\newblock \emph{Math. Program.}, 189\penalty0 (1):\penalty0 55--74, 2021.

\bibitem[Bernasconi et~al.(2023)Bernasconi, Castiglioni, Marchesi, Trov{\`{o}}, and Gatti]{Bernasconi23:Constrained}
Martino Bernasconi, Matteo Castiglioni, Alberto Marchesi, Francesco Trov{\`{o}}, and Nicola Gatti.
\newblock Constrained {P}hi-equilibria.
\newblock In \emph{International Conference on Machine Learning (ICML)}, 2023.

\bibitem[Bernasconi et~al.(2024)Bernasconi, Castiglioni, Celli, and Farina]{Bernasconi24:Role}
Martino Bernasconi, Matteo Castiglioni, Andrea Celli, and Gabriele Farina.
\newblock On the role of constraints in the complexity of min-max optimization, 2024.

\bibitem[Bichler et~al.(2021)Bichler, Fichtl, Heidekr{\"{u}}ger, Kohring, and Sutterer]{Bichler21:Learning}
Martin Bichler, Maximilian Fichtl, Stefan Heidekr{\"{u}}ger, Nils Kohring, and Paul Sutterer.
\newblock Learning equilibria in symmetric auction games using artificial neural networks.
\newblock \emph{Nature Machine Intelligence}, 3\penalty0 (8):\penalty0 687--695, 2021.

\bibitem[Billingsley(1999)]{Billingsley99:Convergence}
Patrick Billingsley.
\newblock \emph{Convergence of Probability Measures}.
\newblock Wiley Series in Probability and Statistics: Probability and Statistics. John Wiley \& Sons, second edition, 1999.

\bibitem[Black and Scholes(1973)]{Black73:Pricing}
Fischer Black and Myron Scholes.
\newblock The pricing of options and corporate liabilities.
\newblock \emph{Journal of Political Economy}, 81\penalty0 (3):\penalty0 637--654, 1973.

\bibitem[B{\"{o}}hm(2023)]{Bohm23:Solving}
Axel B{\"{o}}hm.
\newblock Solving nonconvex-nonconcave min-max problems exhibiting weak {M}inty solutions.
\newblock \emph{Trans. Mach. Learn. Res.}, 2023.

\bibitem[Boyd and Vandenberghe(2004)]{Boyd04:Convex}
Stephen Boyd and Lieven Vandenberghe.
\newblock \emph{Convex Optimization}.
\newblock Cambridge University Press, 2004.

\bibitem[Cai et~al.(2016)Cai, Candogan, Daskalakis, and Papadimitriou]{Cai16:Zero}
Yang Cai, Ozan Candogan, Constantinos Daskalakis, and Christos~H. Papadimitriou.
\newblock Zero-sum polymatrix games: {A} generalization of minmax.
\newblock \emph{Math. Oper. Res.}, 41\penalty0 (2):\penalty0 648--655, 2016.

\bibitem[Cai et~al.(2024{\natexlab{a}})Cai, Daskalakis, Luo, Wei, and Zheng]{Cai24:Tractable}
Yang Cai, Constantinos Daskalakis, Haipeng Luo, Chen{-}Yu Wei, and Weiqiang Zheng.
\newblock On tractable {$\Phi$}-equilibria in non-concave games.
\newblock In \emph{Proceedings of the Annual Conference on Neural Information Processing Systems (NeurIPS)}, 2024{\natexlab{a}}.

\bibitem[Cai et~al.(2024{\natexlab{b}})Cai, Oikonomou, and Zheng]{Cai24:Accelerated}
Yang Cai, Argyris Oikonomou, and Weiqiang Zheng.
\newblock Accelerated algorithms for constrained nonconvex-nonconcave min-max optimization and comonotone inclusion.
\newblock In \emph{International Conference on Machine Learning (ICML)}, 2024{\natexlab{b}}.

\bibitem[Capatina(2014)]{Capatina14:Variational}
Anca Capatina.
\newblock \emph{Variational inequalities and frictional contact problems}, volume~31.
\newblock Springer, 2014.

\bibitem[Caramanis et~al.(2024)Caramanis, Fotakis, Kalavasis, Kontonis, and Tzamos]{Caramanis24:Optimizing}
Constantine Caramanis, Dimitris Fotakis, Alkis Kalavasis, Vasilis Kontonis, and Christos Tzamos.
\newblock Optimizing solution-samplers for combinatorial problems: the landscape of policy-gradient methods.
\newblock In \emph{Proceedings of the Annual Conference on Neural Information Processing Systems (NeurIPS)}, 2024.

\bibitem[Carath\'{e}odory(1911)]{Caratheodory11:Uber}
C.~Carath\'{e}odory.
\newblock Uber den {V}ariabilet{\"{a}}tsbereich der {F}ourier'schen {K}onstanten von positiven harmonischen {F}unktionen.
\newblock \emph{Rendiconti del Circolo Matematico de Palermo}, 32:\penalty0 193--217, 1911.

\bibitem[Chen et~al.(2009)Chen, Deng, and Teng]{Chen09:Settling}
Xi~Chen, Xiaotie Deng, and Shang-Hua Teng.
\newblock Settling the complexity of computing two-player {N}ash equilibria.
\newblock \emph{Journal of the ACM}, 2009.

\bibitem[Choudhury et~al.(2024)Choudhury, Gorbunov, and Loizou]{Choudhury24:Single}
Sayantan Choudhury, Eduard Gorbunov, and Nicolas Loizou.
\newblock Single-call stochastic extragradient methods for structured non-monotone variational inequalities: Improved analysis under weaker conditions.
\newblock In \emph{Proceedings of the Annual Conference on Neural Information Processing Systems (NeurIPS)}, 2024.

\bibitem[Combettes and Pennanen(2004)]{Combettes04:Proximal}
Patrick~L. Combettes and Teemu Pennanen.
\newblock Proximal methods for cohypomonotone operators.
\newblock \emph{SIAM Journal on Control and Optimization}, 43\penalty0 (2):\penalty0 731--742, 2004.

\bibitem[Cournot(1838)]{Cournot38:Recherches}
Antoine~Augustin Cournot.
\newblock \emph{Recherches sur les principes math\'ematiques de la th\'eorie des richesses (Researches into the Mathematical Principles of the Theory of Wealth)}.
\newblock Hachette, Paris, 1838.

\bibitem[Dafermos(1980)]{Dafermos80:Traffic}
Stella Dafermos.
\newblock Traffic equilibrium and variational inequalities.
\newblock \emph{Transportation Science}, 14\penalty0 (1):\penalty0 42--54, 1980.

\bibitem[Dagan et~al.(2024)Dagan, Daskalakis, Fishelson, and Golowich]{Dagan24:From}
Yuval Dagan, Constantinos Daskalakis, Maxwell Fishelson, and Noah Golowich.
\newblock From external to swap regret 2.0: An efficient reduction for large action spaces.
\newblock In \emph{Proceedings of the Annual Symposium on Theory of Computing (STOC)}, 2024.

\bibitem[Dasgupta and Maskin(1986)]{Dasgupta86:Existence}
P.~Dasgupta and Eric Maskin.
\newblock The existence of equilibrium in discontinuous economic games 1: Theory.
\newblock \emph{Review of Economic Studies}, 53:\penalty0 1--26, 1986.

\bibitem[Daskalakis(2022)]{Daskalakis22:Non}
Constantinos Daskalakis.
\newblock Non-concave games: A challenge for game theory’s next 100 years, 2022.

\bibitem[Daskalakis et~al.(2008)Daskalakis, Goldberg, and Papadimitriou]{Daskalakis08:Complexity}
Constantinos Daskalakis, Paul Goldberg, and Christos Papadimitriou.
\newblock The complexity of computing a {N}ash equilibrium.
\newblock \emph{SIAM Journal on Computing}, 2008.

\bibitem[Daskalakis et~al.(2021)Daskalakis, Skoulakis, and Zampetakis]{Daskalakis21:Complexity}
Constantinos Daskalakis, Stratis Skoulakis, and Manolis Zampetakis.
\newblock The complexity of constrained min-max optimization.
\newblock In \emph{Proceedings of the Annual Symposium on Theory of Computing (STOC)}, 2021.

\bibitem[Daskalakis et~al.(2024{\natexlab{a}})Daskalakis, Farina, Fishelson, Pipis, and Schneider]{Daskalakis24:Efficient}
Constantinos Daskalakis, Gabriele Farina, Maxwell Fishelson, Charilaos Pipis, and Jon Schneider.
\newblock Efficient learning and computation of linear correlated equilibrium in general convex games, 2024{\natexlab{a}}.

\bibitem[Daskalakis et~al.(2024{\natexlab{b}})Daskalakis, Farina, Golowich, Sandholm, and Zhang]{Daskalakis24:Lower}
Constantinos Daskalakis, Gabriele Farina, Noah Golowich, Tuomas Sandholm, and Brian~Hu Zhang.
\newblock A lower bound on swap regret in extensive-form games, 2024{\natexlab{b}}.

\bibitem[Davis et~al.(2022)Davis, Drusvyatskiy, Lee, Padmanabhan, and Ye]{Davis22:Gradient}
Damek Davis, Dmitriy Drusvyatskiy, Yin~Tat Lee, Swati Padmanabhan, and Guanghao Ye.
\newblock A gradient sampling method with complexity guarantees for {L}ipschitz functions in high and low dimensions.
\newblock In \emph{Proceedings of the Annual Conference on Neural Information Processing Systems (NeurIPS)}, 2022.

\bibitem[Deligkas et~al.(2023)Deligkas, Fearnley, Hollender, and Melissourgos]{Deligkas23:Tight}
Argyrios Deligkas, John Fearnley, Alexandros Hollender, and Themistoklis Melissourgos.
\newblock Tight inapproximability for graphical games.
\newblock In \emph{Conference on Artificial Intelligence (AAAI)}, 2023.

\bibitem[Diakonikolas et~al.(2021)Diakonikolas, Daskalakis, and Jordan]{Diakonikolas21:Efficient}
Jelena Diakonikolas, Constantinos Daskalakis, and Michael~I. Jordan.
\newblock Efficient methods for structured nonconvex-nonconcave min-max optimization.
\newblock In \emph{International Conference on Artificial Intelligence and Statistics (AISTATS)}, 2021.

\bibitem[Domingo{-}Enrich et~al.(2020)Domingo{-}Enrich, Jelassi, Mensch, Rotskoff, and Bruna]{Domingo-Enrich20:Mean}
Carles Domingo{-}Enrich, Samy Jelassi, Arthur Mensch, Grant~M. Rotskoff, and Joan Bruna.
\newblock A mean-field analysis of two-player zero-sum games.
\newblock In \emph{Proceedings of the Annual Conference on Neural Information Processing Systems (NeurIPS)}, 2020.

\bibitem[Etessami and Yannakakis(2007)]{Etessami07:Complexity}
Kousha Etessami and Mihalis Yannakakis.
\newblock On the complexity of {N}ash equilibria and other fixed points (extended abstract).
\newblock In \emph{Proceedings of the Annual Symposium on Foundations of Computer Science (FOCS)}, pages 113--123, 2007.

\bibitem[Facchinei and Kanzow(2010)]{Facchinei10:Generalized}
Francisco Facchinei and Christian Kanzow.
\newblock Generalized {N}ash equilibrium problems.
\newblock \emph{Annals of Operations Research}, 175\penalty0 (1):\penalty0 177--211, 2010.

\bibitem[Facchinei and Pang(2003)]{Facchinei03:Finite}
Francisco Facchinei and Jong-Shi Pang.
\newblock \emph{Finite-dimensional variational inequalities and complementarity problems}.
\newblock Springer, 2003.

\bibitem[Facchinei et~al.(2009)Facchinei, Fischer, and Piccialli]{Facchinei09:Generalized}
Francisco Facchinei, Andreas Fischer, and Veronica Piccialli.
\newblock Generalized {N}ash equilibrium problems and {N}ewton methods.
\newblock \emph{Mathematical Programming}, 117\penalty0 (1):\penalty0 163--194, 2009.

\bibitem[Farina and Pipis(2023)]{Farina23:Polynomial}
Gabriele Farina and Charilaos Pipis.
\newblock Polynomial-time linear-swap regret minimization in imperfect-information sequential games.
\newblock In \emph{Proceedings of the Annual Conference on Neural Information Processing Systems (NeurIPS)}, 2023.

\bibitem[Farina and Pipis(2024)]{Farina24:Polynomial}
Gabriele Farina and Charilaos Pipis.
\newblock Polynomial-time computation of exact {P}hi-equilibria in polyhedral games.
\newblock In \emph{Proceedings of the Annual Conference on Neural Information Processing Systems (NeurIPS)}, 2024.

\bibitem[Farina et~al.(2022)Farina, Celli, Marchesi, and Gatti]{Farina22:Simple}
Gabriele Farina, Andrea Celli, Alberto Marchesi, and Nicola Gatti.
\newblock Simple uncoupled no-regret learning dynamics for extensive-form correlated equilibrium.
\newblock \emph{Journal of the ACM}, 69\penalty0 (6):\penalty0 41:1--41:41, 2022.

\bibitem[Fischer et~al.(2014)Fischer, Herrich, and Sch{\"o}nefeld]{Fischer14:Generalized}
Andreas Fischer, Markus Herrich, and Klaus Sch{\"o}nefeld.
\newblock Generalized {N}ash equilibrium problems-recent advances and challenges.
\newblock \emph{Pesquisa Operacional}, 34\penalty0 (3):\penalty0 521--558, 2014.

\bibitem[Fu et~al.(2023)Fu, Xu, and Wilson]{Fu23:Accelerated}
Qiang Fu, Dongchu Xu, and Ashia~Camage Wilson.
\newblock Accelerated stochastic optimization methods under quasar-convexity.
\newblock In \emph{International Conference on Machine Learning (ICML)}, 2023.

\bibitem[Fujii(2023)]{Fujii23:Bayes}
Kaito Fujii.
\newblock Bayes correlated equilibria and no-regret dynamics.
\newblock \emph{CoRR}, abs/2304.05005, 2023.

\bibitem[Goktas and Greenwald(2022)]{Goktas22:Exploitability}
Denizalp Goktas and Amy Greenwald.
\newblock Exploitability minimization in games and beyond.
\newblock In \emph{Proceedings of the Annual Conference on Neural Information Processing Systems (NeurIPS)}, 2022.

\bibitem[Goktas and Greenwald(2025)]{Goktas25:Tractable}
Denizalp Goktas and Amy Greenwald.
\newblock Tractable general equilibrium, 2025.

\bibitem[Goldstein(1977)]{Goldstein77:Optimization}
Allen Goldstein.
\newblock Optimization of {L}ipschitz continuous functions.
\newblock \emph{Mathematical Programming}, 13:\penalty0 14--22, 1977.

\bibitem[Gorbunov et~al.(2023)Gorbunov, Taylor, Horv{\'{a}}th, and Gidel]{Gorbunov23:Convergence}
Eduard Gorbunov, Adrien~B. Taylor, Samuel Horv{\'{a}}th, and Gauthier Gidel.
\newblock Convergence of proximal point and extragradient-based methods beyond monotonicity: the case of negative comonotonicity.
\newblock In \emph{International Conference on Machine Learning (ICML)}, 2023.

\bibitem[Gordon et~al.(2008)Gordon, Greenwald, and Marks]{Gordon08:No}
Geoffrey~J Gordon, Amy Greenwald, and Casey Marks.
\newblock No-regret learning in convex games.
\newblock In \emph{International Conference on Machine Learning (ICML)}, 2008.

\bibitem[Gower et~al.(2021)Gower, Sebbouh, and Loizou]{Gower21:SGD}
Robert~M. Gower, Othmane Sebbouh, and Nicolas Loizou.
\newblock {SGD} for structured nonconvex functions: Learning rates, minibatching and interpolation.
\newblock In \emph{International Conference on Artificial Intelligence and Statistics (AISTATS)}, 2021.

\bibitem[Greenwald and Jafari(2003)]{Greenwald03:General}
Amy Greenwald and Amir Jafari.
\newblock A general class of no-regret learning algorithms and game-theoretic equilibria.
\newblock In \emph{Conference on Learning Theory (COLT)}, 2003.

\bibitem[Grotschel et~al.(1993)Grotschel, Lov{\'a}sz, and Schrijver]{Grotschel93:Geometric}
M.~Grotschel, L.~Lov{\'a}sz, and A.~Schrijver.
\newblock \emph{Geometric Algorithms and Combinatorial Optimizations}.
\newblock Springer-Verlag, 1993.

\bibitem[Gr{\"o}tschel et~al.(1981)Gr{\"o}tschel, Lov\'{a}sz, and Schrijver]{Groetschel81:ellipsoid}
Martin Gr{\"o}tschel, L\'{a}szl\'{o} Lov\'{a}sz, and Alexander Schrijver.
\newblock The ellipsoid method and its consequences in combinatorial optimization.
\newblock \emph{Combinatorica}, 1:\penalty0 169--197, 1981.

\bibitem[Gr{\"o}tschel et~al.(1993)Gr{\"o}tschel, Lov{\'a}sz, and Schrijver]{Grotschel12:Geometric}
Martin Gr{\"o}tschel, L{\'a}szl{\'o} Lov{\'a}sz, and Alexander Schrijver.
\newblock \emph{Geometric algorithms and combinatorial optimization}.
\newblock Springer, 1993.

\bibitem[Guminov et~al.(2023)Guminov, Gasnikov, and Kuruzov]{Guminov23:Accelerated}
Sergey Guminov, Alexander~V. Gasnikov, and Ilya~A. Kuruzov.
\newblock Accelerated methods for weakly-quasi-convex optimization problems.
\newblock \emph{Comput. Manag. Sci.}, 20\penalty0 (1):\penalty0 36, 2023.

\bibitem[Hardt et~al.(2018)Hardt, Ma, and Recht]{Hardt18:Gradient}
Moritz Hardt, Tengyu Ma, and Benjamin Recht.
\newblock Gradient descent learns linear dynamical systems.
\newblock \emph{Journal of Machine Learning Research}, 19\penalty0 (1):\penalty0 1025–1068, 2018.

\bibitem[Hart and Mas-Colell(2003)]{Hart03:Uncoupled}
Sergiu Hart and Andreu Mas-Colell.
\newblock Uncoupled dynamics do not lead to {N}ash equilibrium.
\newblock \emph{American Economic Review}, 93:\penalty0 1830--1836, 2003.

\bibitem[Hazan(2016)]{Hazan16:Introduction}
Elad Hazan.
\newblock Introduction to online convex optimization.
\newblock \emph{Foundations and Trends in Optimization}, 2\penalty0 (3-4):\penalty0 157--325, 2016.

\bibitem[Hinder et~al.(2020)Hinder, Sidford, and Sohoni]{Hinder20:Near}
Oliver Hinder, Aaron Sidford, and Nimit~Sharad Sohoni.
\newblock Near-optimal methods for minimizing star-convex functions and beyond.
\newblock In \emph{Conference on Learning Theory (COLT)}, 2020.

\bibitem[Hirsch et~al.(1989)Hirsch, Papadimitriou, and Vavasis]{Hirsch89:Exponential}
Michael~D. Hirsch, Christos~H. Papadimitriou, and Stephen~A. Vavasis.
\newblock Exponential lower bounds for finding brouwer fix points.
\newblock \emph{Journal of Complexity}, 5\penalty0 (4):\penalty0 379--416, 1989.

\bibitem[Hsieh et~al.(2019)Hsieh, Liu, and Cevher]{Hsieh19:Finding}
Ya{-}Ping Hsieh, Chen Liu, and Volkan Cevher.
\newblock Finding mixed {N}ash equilibria of generative adversarial networks.
\newblock In \emph{International Conference on Machine Learning (ICML)}, 2019.

\bibitem[Huang and von Stengel(2008)]{Huang08:Computing}
Wan Huang and Bernhard von Stengel.
\newblock Computing an extensive-form correlated equilibrium in polynomial time.
\newblock In \emph{International Workshop On Internet And Network Economics (WINE)}, 2008.

\bibitem[Jordan et~al.(2023{\natexlab{a}})Jordan, Kornowski, Lin, Shamir, and Zampetakis]{Jordan23:Deterministic}
Michael~I. Jordan, Guy Kornowski, Tianyi Lin, Ohad Shamir, and Manolis Zampetakis.
\newblock Deterministic nonsmooth nonconvex optimization.
\newblock In \emph{Conference on Learning Theory (COLT)}, 2023{\natexlab{a}}.

\bibitem[Jordan et~al.(2023{\natexlab{b}})Jordan, Lin, and Zampetakis]{Jordan23:First}
Michael~I. Jordan, Tianyi Lin, and Manolis Zampetakis.
\newblock First-order algorithms for nonlinear generalized {N}ash equilibrium problems.
\newblock \emph{Journal of Machine Learning Research}, 24:\penalty0 38:1--38:46, 2023{\natexlab{b}}.

\bibitem[Kapron and Samieefar(2024)]{Kapron24:Computational}
Bruce~M Kapron and Koosha Samieefar.
\newblock The computational complexity of variational inequalities and applications in game theory, 2024.

\bibitem[Kinderlehrer and Stampacchia(2000)]{Kinderlehrer00:Introduction}
David Kinderlehrer and Guido Stampacchia.
\newblock \emph{An introduction to variational inequalities and their applications}.
\newblock SIAM, 2000.

\bibitem[Kuhn(1953)]{Kuhn53:Extensive}
H.~W. Kuhn.
\newblock Extensive games and the problem of information.
\newblock In \emph{Contributions to the Theory of Games}, volume~2 of \emph{Annals of Mathematics Studies, 28}, pages 193--216. Princeton University Press, 1953.

\bibitem[Lee and Kim(2021)]{Lee21:Fast}
Sucheol Lee and Donghwan Kim.
\newblock Fast extra gradient methods for smooth structured nonconvex-nonconcave minimax problems.
\newblock In \emph{Proceedings of the Annual Conference on Neural Information Processing Systems (NeurIPS)}, 2021.

\bibitem[Lee and Kim(2022)]{Lee22:Semi}
Sucheol Lee and Donghwan Kim.
\newblock Semi-anchored multi-step gradient descent ascent method for structured nonconvex-nonconcave composite minimax problems, 2022.

\bibitem[Lov{\'{a}}sz and Vempala(2006)]{Lovasz06:Simulated}
L{\'{a}}szl{\'{o}} Lov{\'{a}}sz and Santosh~S. Vempala.
\newblock Simulated annealing in convex bodies and an \emph{O}\({}^{\mbox{*}}\)(\emph{n}\({}^{\mbox{4}}\)) volume algorithm.
\newblock \emph{J. Comput. Syst. Sci.}, 72\penalty0 (2):\penalty0 392--417, 2006.

\bibitem[Malitsky(2015)]{Malitsky15:Projected}
Yu.~V. Malitsky.
\newblock Projected reflected gradient methods for monotone variational inequalities.
\newblock \emph{{SIAM} J. Optim.}, 25\penalty0 (1):\penalty0 502--520, 2015.

\bibitem[Martin and Sandholm(2024)]{Martin24:Joint}
Carlos Martin and Tuomas Sandholm.
\newblock Joint-perturbation simultaneous pseudo-gradient, 2024.

\bibitem[Mertikopoulos and Zhou(2019)]{Mertikopoulos19:Learning}
Panayotis Mertikopoulos and Zhengyuan Zhou.
\newblock Learning in games with continuous action sets and unknown payoff functions.
\newblock \emph{Math. Program.}, 173\penalty0 (1-2):\penalty0 465--507, 2019.

\bibitem[Milionis et~al.(2023)Milionis, Papadimitriou, Piliouras, and Spendlove]{Milionis23:Impossibility}
Jason Milionis, Christos Papadimitriou, Georgios Piliouras, and Kelly Spendlove.
\newblock An impossibility theorem in game dynamics.
\newblock \emph{Proceedings of the National Academy of Sciences}, 120\penalty0 (41), 2023.

\bibitem[Morrill et~al.(2021{\natexlab{a}})Morrill, D'Orazio, Lanctot, Wright, Bowling, and Greenwald]{Morrill21:Efficient}
Dustin Morrill, Ryan D'Orazio, Marc Lanctot, James~R. Wright, Michael Bowling, and Amy~R. Greenwald.
\newblock Efficient deviation types and learning for hindsight rationality in extensive-form games.
\newblock In \emph{International Conference on Machine Learning (ICML)}, 2021{\natexlab{a}}.

\bibitem[Morrill et~al.(2021{\natexlab{b}})Morrill, D'Orazio, Sarfati, Lanctot, Wright, Greenwald, and Bowling]{Morrill21:Hindsight}
Dustin Morrill, Ryan D'Orazio, Reca Sarfati, Marc Lanctot, James~R. Wright, Amy~R. Greenwald, and Michael Bowling.
\newblock Hindsight and sequential rationality of correlated play.
\newblock In \emph{Conference on Artificial Intelligence (AAAI)}, 2021{\natexlab{b}}.

\bibitem[Moulin and Vial(1978)]{Moulin78:Strategically}
H.~Moulin and J.-P. Vial.
\newblock Strategically zero-sum games: The class of games whose completely mixed equilibria cannot be improved upon.
\newblock \emph{International Journal of Game Theory}, 7\penalty0 (3-4):\penalty0 201--221, 1978.

\bibitem[Nadav and Roughgarden(2010)]{Nadav10:Limits}
Uri Nadav and Tim Roughgarden.
\newblock The limits of smoothness: {A} primal-dual framework for price of anarchy bounds.
\newblock In \emph{International Workshop On Internet And Network Economics (WINE)}, 2010.

\bibitem[Nash(1951)]{Nash51:Non}
John Nash.
\newblock Non-cooperative games.
\newblock \emph{Annals of Mathematics}, 54:\penalty0 289--295, 1951.

\bibitem[Nesterov and Polyak(2006)]{Nesterov06:Cubic}
Yurii Nesterov and Boris~T Polyak.
\newblock Cubic regularization of newton method and its global performance.
\newblock \emph{Mathematical programming}, 108\penalty0 (1):\penalty0 177--205, 2006.

\bibitem[Papadimitriou and Roughgarden(2008)]{Papadimitriou08:Computing}
Christos~H. Papadimitriou and Tim Roughgarden.
\newblock Computing correlated equilibria in multi-player games.
\newblock \emph{Journal of the ACM}, 55\penalty0 (3):\penalty0 14:1--14:29, 2008.

\bibitem[Patris and Panageas(2024)]{Patris24:Learning}
Nikolas Patris and Ioannis Panageas.
\newblock Learning {N}ash equilibria in rank-1 games.
\newblock In \emph{International Conference on Learning Representations (ICLR)}, 2024.

\bibitem[Peng and Rubinstein(2024)]{Peng24:Fast}
Binghui Peng and Aviad Rubinstein.
\newblock Fast swap regret minimization and applications to approximate correlated equilibria.
\newblock In \emph{Proceedings of the Annual Symposium on Theory of Computing (STOC)}, 2024.

\bibitem[Pethick et~al.(2022)Pethick, Latafat, Patrinos, Fercoq, and Cevher]{Pethick22:Escaping}
Thomas Pethick, Puya Latafat, Panos Patrinos, Olivier Fercoq, and Volkan Cevher.
\newblock Escaping limit cycles: Global convergence for constrained nonconvex-nonconcave minimax problems.
\newblock In \emph{International Conference on Learning Representations (ICLR)}, 2022.

\bibitem[Roughgarden(2015)]{Roughgarden15:Intrinsic}
Tim Roughgarden.
\newblock Intrinsic robustness of the price of anarchy.
\newblock \emph{Journal of the ACM}, 62\penalty0 (5):\penalty0 32:1--32:42, 2015.

\bibitem[Roughgarden and Schoppmann(2015)]{Roughgarden15:Local}
Tim Roughgarden and Florian Schoppmann.
\newblock Local smoothness and the price of anarchy in splittable congestion games.
\newblock \emph{J. Econ. Theory}, 156:\penalty0 317--342, 2015.

\bibitem[Roughgarden et~al.(2017)Roughgarden, Syrgkanis, and Tardos]{Roughgarden17:Price}
Tim Roughgarden, Vasilis Syrgkanis, and {\'{E}}va Tardos.
\newblock The price of anarchy in auctions.
\newblock \emph{Journal of Artificial Intelligence Research}, 59:\penalty0 59--101, 2017.

\bibitem[Rubinstein(2015)]{Rubinstein15:Inapproximability}
Aviad Rubinstein.
\newblock Inapproximability of {N}ash equilibrium.
\newblock In \emph{Proceedings of the Annual Symposium on Theory of Computing (STOC)}, 2015.

\bibitem[Rubinstein(2016)]{Rubinstein16:Settling}
Aviad Rubinstein.
\newblock Settling the complexity of computing approximate two-player {N}ash equilibria.
\newblock In \emph{Proceedings of the Annual Symposium on Foundations of Computer Science (FOCS)}, 2016.

\bibitem[Sion(1958)]{Sion58:On}
Maurice Sion.
\newblock {On general minimax theorems.}
\newblock \emph{Pacific Journal of Mathematics}, 8\penalty0 (1):\penalty0 171 -- 176, 1958.

\bibitem[Stoltz and Lugosi(2007)]{Stoltz07:Learning}
Gilles Stoltz and G{\'{a}}bor Lugosi.
\newblock Learning correlated equilibria in games with compact sets of strategies.
\newblock \emph{Games Econ. Behav.}, 59\penalty0 (1):\penalty0 187--208, 2007.

\bibitem[Tatarenko and Kamgarpour(2018)]{Tatarenko18:Learning}
Tatiana Tatarenko and Maryam Kamgarpour.
\newblock Learning generalized {N}ash equilibria in a class of convex games.
\newblock \emph{IEEE Transactions on Automatic Control}, 64\penalty0 (4):\penalty0 1426--1439, 2018.

\bibitem[Tian et~al.(2022)Tian, Zhou, and So]{Tian22:Finite}
Lai Tian, Kaiwen Zhou, and Anthony~Man{-}Cho So.
\newblock On the finite-time complexity and practical computation of approximate stationarity concepts of lipschitz functions.
\newblock In \emph{International Conference on Machine Learning (ICML)}, 2022.

\bibitem[Villani(2009)]{Billani09:Optimal}
C{\'e}dric Villani.
\newblock \emph{Optimal transport: old and new}, volume 338.
\newblock Springer, 2009.

\bibitem[Zhang et~al.(2024{\natexlab{a}})Zhang, Anagnostides, Farina, and Sandholm]{Zhang24:Efficient}
Brian~Hu Zhang, Ioannis Anagnostides, Gabriele Farina, and Tuomas Sandholm.
\newblock Efficient {$\Phi$}-regret minimization with low-degree swap deviations in extensive-form games.
\newblock In \emph{Proceedings of the Annual Conference on Neural Information Processing Systems (NeurIPS)}, 2024{\natexlab{a}}.

\bibitem[Zhang et~al.(2024{\natexlab{b}})Zhang, Farina, and Sandholm]{Zhang24:Mediator}
Brian~Hu Zhang, Gabriele Farina, and Tuomas Sandholm.
\newblock Mediator interpretation and faster learning algorithms for linear correlated equilibria in general extensive-form games.
\newblock In \emph{International Conference on Learning Representations (ICLR)}, 2024{\natexlab{b}}.

\bibitem[Zhang et~al.(2025)Zhang, Anagnostides, Tewolde, Berker, Farina, Conitzer, and Sandholm]{Zhang25:Learning}
Brian~Hu Zhang, Ioannis Anagnostides, Emanuel Tewolde, Ratip~Emin Berker, Gabriele Farina, Vincent Conitzer, and Tuomas Sandholm.
\newblock Learning and computation of {$\Phi$}-equilibria at the frontier of tractability, 2025.

\bibitem[Zhang et~al.(2020)Zhang, Lin, Jegelka, Sra, and Jadbabaie]{Zhang20:Complexity}
Jingzhao Zhang, Hongzhou Lin, Stefanie Jegelka, Suvrit Sra, and Ali Jadbabaie.
\newblock Complexity of finding stationary points of nonconvex nonsmooth functions.
\newblock In \emph{International Conference on Machine Learning (ICML)}, 2020.

\bibitem[Şeref Ahunbay(2025)]{Ahunbay25:First}
Mete Şeref Ahunbay.
\newblock First-order (coarse) correlated equilibria in non-concave games, 2025.

\end{thebibliography}

\clearpage
\appendix

\section{Additional Preliminaries}
\label{sec:add-prels}

\paragraph{Revisiting \Cref{def:evi}} 

In order to define the distributions $\Delta(\cX)$ over $\cX$ precisely, we recall here some basic concepts from probability theory. We refer to \citet[Chapter 1 and 2]{Billingsley99:Convergence} and \citet[Chapter 15]{AliprantisB06:Infinite} for detailed treatments. We assume throughout the paper that the set $\cX \subseteq \R^d$ is Borel measurable. Let $\Delta(\cX)$ be the set of Borel probability measures $\mu$ on $\cX$, that is, measures $\mu: \qty(\cX, B(\cX)) \to \qty(\R, B(\R))$ with $\mu(\cX) = 1$, where $B(\cX)$ and $B(\R)$ denote the respective $\sigma$-algebra of Borel sets. We simply call $\mu$ a distribution. For any Borel measurable function $f : \cX \to \R$---henceforth just \emph{measurable}--- we can then take the integral $\E_{\vx \sim \mu}[f(\vx)] := \int_\cX f(\vx) d\mu(\vx)$. In particular, for $\E_{\vx\sim\mu} \ip{F(\vx), \phi(\vx) - \vx}$ in \Cref{def:evi} to be well-defined, we assume throughout this paper that $F$ and each $\phi \in \Phi$ are measurable functions.

For our computational results (\Cref{sec:main}), we are making a standard assumption regarding the geometry of $\cX$ (\Cref{sec:prel}); this can be met by bringing $\cX$ into isotropic position. In particular, there is a polynomial-time algorithm that computes an affine transformation to accomplish that~\citep{Lovasz06:Simulated}, and minimizing linear-swap regret reduces to minimizing linear-swap regret to the transformed instance~\citep[Lemma A.1]{Daskalakis24:Efficient}.

\section{Omitted Proofs}
\label{sec:proofs}

This section contains the proofs omitted from the main body. 

\subsection{Existence of $\Phi$-EVI solutions}

We begin with~\Cref{theorem:existence}.

\mainexistence*

\begin{proof}
    We define a function $\hatF_\delta  : \cX \to \R^d$ as
    \[
        \hatF_\delta: \vx \mapsto \frac{1}{|\cB_{\delta}(\vx) \cap \cX|} \int_{\cB_{\delta}(\vx) \cap \cX} F(\hatvx) d\nu(\hatvx);
    \]
    this is a rescaled Lebesgue integral, which represents a multivariate local average. Above,
    \begin{itemize}[noitemsep,topsep=0pt]
        \item $\delta > 0$ is a sufficiently small parameter, to be defined shortly;
        \item $\cB_{\delta}(\vx) \subseteq \R^d$ is the (closed) Euclidean ball of radius $\delta$ centered at $\vx$; and
        \item $|\cdot|$ denotes the set's Borel measure.
    \end{itemize}
    Given that $F$ is assumed to be bounded, we can define $B \in \R$ such that $\max_{\vx \in \cX} \| F(\vx) \| \leq B$. For the proof below, it will suffice to set $\delta := \nicefrac{\eps}{(L+1) B}$. 

    We first observe that $\hatF_\delta$ is continuous.

    \begin{claim}
        \label{lemma:cont}
        $\hatF_\delta$ is continuous.
    \end{claim}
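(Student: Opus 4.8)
The plan is to write $\hatF_\delta$ as a quotient of two scalar/vector-valued integrals and argue continuity of each piece separately. Define the numerator $N(\vx) \defeq \int_{\cX} F(\hatvx)\, \ind{\hatvx \in \cB_\delta(\vx)}\, d\nu(\hatvx)$ and the denominator $D(\vx) \defeq |\cB_\delta(\vx) \cap \cX| = \int_{\cX} \ind{\hatvx \in \cB_\delta(\vx)}\, d\nu(\hatvx)$, so that $\hatF_\delta = N / D$. Continuity of $\hatF_\delta$ then reduces to (i) continuity of $N$ (coordinatewise) and of $D$, and (ii) showing $D$ is bounded away from zero on $\cX$, so that the quotient of continuous functions is continuous. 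Since $F$ is only assumed measurable, the whole point is that the local averaging absorbs its irregularity: the dependence on $\vx$ enters \emph{only} through the indicator $\ind{\hatvx \in \cB_\delta(\vx)} = \ind{\|\hatvx - \vx\| \le \delta}$, which is well-behaved.

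The core step I would carry out is to fix an arbitrary sequence $\vx_n \to \vx_0$ in $\cX$ and establish pointwise a.e.\ convergence of the indicators. For every $\hatvx$ with $\|\hatvx - \vx_0\| \neq \delta$, the function $\vx \mapsto \|\hatvx - \vx\|$ is continuous and lands strictly on one side of $\delta$ at $\vx_0$, hence $\ind{\|\hatvx - \vx_n\| \le \delta} \to \ind{\|\hatvx - \vx_0\| \le \delta}$. The exceptional set $\{\hatvx : \|\hatvx - \vx_0\| = \delta\}$ is a $(d-1)$-sphere and thus has $d$-dimensional Lebesgue measure zero, so convergence holds $\nu$-a.e. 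I would then invoke the dominated convergence theorem, applied to each coordinate of the integrand: $\|F(\hatvx)\, \ind{\hatvx \in \cB_\delta(\vx)}\| \le B$ and the integrand is supported on the compact (finite-measure) set $\cX$, so it is dominated by the integrable function $B\, \ind{\hatvx \in \cX}$. This yields $N(\vx_n) \to N(\vx_0)$; taking $F \equiv 1$ gives $D(\vx_n) \to D(\vx_0)$. Since the sequence was arbitrary, $N$ and $D$ are continuous.

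Finally I would verify that $D(\vx) > 0$ for all $\vx \in \cX$. We may assume $\cX$ is full-dimensional (otherwise replace $\R^d$ by the affine hull of $\cX$ and let $\nu$ be Lebesgue measure there); since $\cX$ is a convex body, for any $\vx \in \cX$ the intersection $\cB_\delta(\vx) \cap \cX$ contains a cone formed from $\vx$ and an interior ball of $\cX$, which has positive measure, so $D(\vx) > 0$. Being continuous and positive on the compact set $\cX$, $D$ attains a positive minimum $c > 0$, whence $\hatF_\delta = N/D$ is continuous as a quotient of continuous functions with denominator bounded below by $c$. I expect the main obstacle to be purely bookkeeping rather than conceptual: carefully justifying the measure-zero of the boundary sphere together with the domination hypothesis for DCT, and ensuring the denominator stays bounded away from zero (which is where the convex-body structure of $\cX$ is genuinely used); the measurability of $F$ alone never needs to be upgraded, as it only appears inside an integral that the averaging regularizes.
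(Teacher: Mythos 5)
Your proof is correct, but it takes a different route from the paper's. The paper gives a direct $\epsilon'$--$\delta'$ estimate: it decomposes $\| \hatF_\delta(\vx) - \hatF_\delta(\vx')\|$ into a term coming from the change in the normalizing constant $|\cB_\delta(\cdot)\cap\cX|$ and a term coming from the change in the integration domain, and bounds each by $\tfrac{1}{2}\epsilon'$ using $\|F\|\le B$ together with the assertion that, for $\vx'$ close to $\vx$, the measures $|\cB_\delta(\vx)\cap\cX|$ and $|\cB_\delta(\vx')\cap\cX|$ are close and their symmetric difference is small. You instead write $\hatF_\delta = N/D$ and prove sequential continuity of the numerator and denominator via dominated convergence applied to the indicators $\ind{\|\hatvx-\vx\|\le\delta}$, observing that the exceptional set is a sphere of measure zero, and then handle the quotient by showing $D$ is continuous and strictly positive on the compact set $\cX$, hence bounded below. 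The two arguments rest on the same underlying fact---that $\vx \mapsto |\cB_\delta(\vx)\cap\cX|$ and the associated integral vary continuously---but your DCT route actually supplies the justification that the paper leaves implicit when it ``selects $\delta'$ small enough'' to control the symmetric difference, and it makes explicit the lower bound on the denominator (which the paper uses silently when dividing by $|\cB_\delta(\vx')\cap\cX|$ and which is where the regularity assumption that $\cX$ contains a ball in its relative interior is genuinely needed). The trade-off is that the paper's version is quantitative (it exhibits a modulus of continuity in terms of $\epsilon'$ and $\delta'$), whereas yours is purely qualitative; for the purposes of invoking Brouwer's theorem downstream, qualitative continuity is all that is required, so your argument suffices and is arguably the cleaner of the two.
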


    \begin{proof}
        We will show that for any $\vx \in \cX$ and $\epsilon' > 0$, we can choose $\delta' = \delta'(\epsilon')$ such that for any $\vx' \in \cX$ with $\|\vx - \vx'\| < \delta'$,
        \begin{equation*}
            \| \hatF_\delta(\vx) - \hatF_\delta(\vx') \| \leq \epsilon'.
        \end{equation*}
    By the triangle inequality, the difference $\| \hatF_\delta(\vx) - \hatF_\delta(\vx') \|$ can be decomposed as the sum of 
    \begin{equation*}
        \tcircle{A} \defeq \left| \frac{1}{|\cB_{\delta}(\vx) \cap \cX|} - \frac{1}{|\cB_{\delta}(\vx') \cap \cX|} \right|  \int_{\cB_{\delta}(\vx) \cap \cX} \| F(\hatvx) \| d\nu(\hatvx)
    \end{equation*}
    and
    \begin{equation*}
        \tcircle{B} \defeq \frac{1}{|\cB_{\delta}(\vx') \cap \cX|} \left\| \int_{\cB_{\delta}(\vx) \cap \cX} F(\hatvx) d\nu(\hatvx) - \int_{\cB_{\delta}(\vx') \cap \cX} F(\hatvx) d\nu(\hatvx) \right\|.
    \end{equation*}
    Now, $\tcircle{A}$ can be bounded as
    \begin{equation*}
        \tcircle{A} \leq B \left| 1 - \frac{ | \cB_\delta(\vx) \cap \cX| }{| \cB_\delta(\vx') \cap \cX|} \right| \leq \frac{1}{2} \epsilon',
    \end{equation*}
    where we selected $\delta'$ small enough so that 
    \begin{equation*}
         \left(1 - \frac{\epsilon'}{B} \right) | \cB_\delta(\vx') \cap \cX| \leq | \cB_\delta(\vx) \cap \cX| \leq  \left(1 + \frac{\epsilon'}{B} \right) | \cB_\delta(\vx') \cap \cX|.
    \end{equation*}
    Moreover, by selecting $\delta'$ small enough so that
    \begin{equation}
        \label{eq:smalldelta}
        | (\cB_\delta(\vx) \cap \cX ) \setminus (\cB_\delta(\vx') \cap \cX) | + | (\cB_\delta(\vx') \cap \cX) \setminus (\cB_\delta(\vx) \cap \cX) | \leq \frac{1}{2B} \epsilon' | \cB_\delta(\vx') \cap \cX |,
    \end{equation}
    we have
    \begin{align*}
        \Bigg\| \int_{\cB_{\delta}(\vx) \cap \cX} F(\hatvx) d\nu(\hatvx) &- \int_{\cB_{\delta}(\vx') \cap \cX} F(\hatvx) d\nu(\hatvx) \Bigg\| \\
        &\leq  \int_{ (\cB_{\delta}(\vx) \cap \cX) \setminus (\cB_{\delta}(\vx') \cap \cX) } \| F(\hatvx) \| d\nu(\hatvx)  + \int_{(\cB_{\delta}(\vx') \cap \cX) \setminus (\cB_{\delta}(\vx) \cap \cX)} \| F(\hatvx) \| d\nu(\hatvx) \\
        &\leq B | (\cB_\delta(\vx) \cap \cX ) \setminus (\cB_\delta(\vx') \cap \cX) | + B | (\cB_\delta(\vx') \cap \cX) \setminus (\cB_\delta(\vx) \cap \cX) | \\
        &\leq \frac{1}{2} \epsilon' | \cB_\delta(\vx') \cap \cX|,
    \end{align*}
    where the last inequality uses~\eqref{eq:smalldelta}. As a result, we have shown that $\tcircle{A} + \tcircle{B} \leq \epsilon'$, thereby implying that $\| \hatF_\delta(\vx) - \hatF_\delta(\vx') \| \leq \epsilon'$. This completes the proof.
    \end{proof}
    Having established that $\hatF_\delta$ is continuous, we can now apply Brouwer's fixed point theorem on the map $\vx \mapsto \proj_\cX ( \vx - \hatF_{\delta}(\vx))$, where we recall that $\proj_\cX$ denotes the Euclidean projection onto $\cX$. This implies that there is a point $\vx \in \cX$ such that $\vx = \proj_\cX ( \vx - \hatF_{\delta}(\vx))$. Moreover, such a point satisfies the VI constraint with respect to $\hatF_\delta$:
    \begin{equation*}
        \langle \hatF_\delta(\vx), \vx' - \vx \rangle \geq 0 \quad \vx' \in \cX;
    \end{equation*}
    for example, see~\citet[Section 3]{Kinderlehrer00:Introduction} for the derivation. Finally, we define $\mu \in \Delta(\cX)$ to be the uniform distribution over $\cB_{\delta}(\vx) \cap \cX$. Then, for any $\phi \in \Phi$,
    \begin{align}
        \langle \hatF_\delta(\vx), \phi(\vx) - \vx \rangle &= \E_{\hatvx \sim \mu} \langle F(\hatvx), \phi(\vx) - \vx \rangle \notag \\
        &= \E_{\hatvx \sim \mu} \ip{F(\hatvx), \hatvx - \vx} + \E_{\hatvx \sim \mu} \ip{F(\hatvx),  \phi(\vx) - \phi(\hatvx)}  + \E_{\hatvx \sim \mu} \ip{F(\hatvx), \phi(\hatvx) - \hatvx}. \label{align:interm}
    \end{align}
    The first term in~\eqref{align:interm} can be bounded as
    \begin{equation}
        \label{eq:firstineq}
        \E_{\hatvx \sim \mu} \ip{F(\hatvx), \hatvx - \vx} \leq \sqrt{\E_{\hatvx \sim \mu} \| F(\hatvx) \|^2} \sqrt{\E_{\hatvx \sim \mu} \| \hatvx - \vx \|^2} \leq \delta B,
    \end{equation}
    where we used the Cauchy-Schwarz inequality, the fact that $\| F(\hatvx) \| \leq B$ for all $\hatvx \in \cX$, and $\| \hatvx - \vx \| \leq \delta$ for all $\hatvx$ in the support of $\mu$. Similarly, the second term in~\eqref{align:interm} can be bounded as
    \begin{align}
        \E_{\hatvx \sim \mu} \ip{F(\hatvx), \phi(\vx) - \phi(\hatvx) } &\leq \sqrt{\E_{\hatvx \sim \mu} \| F(\hatvx) \|^2} \sqrt{\E_{\hatvx \sim \mu} \| \phi(\hatvx) - \phi(\vx) \|^2} \notag \\
        &\leq L \sqrt{\E_{\hatvx \sim \mu} \| F(\hatvx) \|^2} \sqrt{\E_{\hatvx \sim \mu} \| \hatvx - \vx \|^2} \leq \delta B L, \label{eq:secondineq}
    \end{align}
    where we additionally used the assumption that $\phi$ is $L$-Lipschitz continuous. Combining~\eqref{eq:firstineq} and~\eqref{eq:secondineq} with~\eqref{align:interm}, we have
    \begin{equation}
        \E_{\hatvx \sim \mu} \ip{F(\hatvx), \phi(\hatvx) - \hatvx} \geq - \delta (L+1) B + \langle \hatF_\delta(\vx), \phi(\vx) - \vx \rangle \geq - \delta (L+1) B,
    \end{equation}
    and this holds for any $\phi \in \Phi$. Setting $\delta \defeq \nicefrac{\epsilon}{(L+1)B}$ completes the proof.
\end{proof}

We next proceed with~\Cref{prop:VI-vs-EVIs} and \Cref{prop:notexact}, which are restated below.

\VIsvsEVIs*

\notexact*

We provide an example that will establish both of those claims.

\begin{example}[Discontinuous $F$; \emph{cf.} \Cref{prop:VI-vs-EVIs} and \Cref{prop:notexact}]
    \label{ex:sign-evi}
    Let $F(x)$ be the sign function, 
    $$F(x) = \sgn(x) := 
    \begin{cases} 
        -1 &\qif x < 0, \\ 
        1 &\qq{otherwise,} 
    \end{cases}$$ 
    and $\cX=[-1,1]$. We first claim that there is no $\epsilon$-approximate VI solution for $\epsilon <1$. Indeed, 
    \begin{itemize}
        \item for any $x<0$, picking $x'=1$ ensures $\ip{F(x), x' - x}=x-1<-1$;
        \item for any  $x \geq 0$, picking $x'=-1$ ensures $\ip{F(x), x' - x}=-1-x \leq -1$.
    \end{itemize}
    There is also no \emph{exact} EVI solution to this problem. Indeed, consider any $\mu \in \Delta(\cX)$. 
    \begin{itemize}
        \item If $\pr_{x \sim \mu}[x = 0]= 1$, then taking $x'=-1$ ensures $\E_{x\sim\mu} \ip{F(x), x' - x} = \langle F(0), x' \rangle = -1$.
        \item Otherwise, taking $x'=0$, we have
    \begin{align*}
         \E_{x\sim\mu} \ip{F(x), x' - x}  =  \E_{x\sim\mu} [-|x|] < 0.
    \end{align*}
    \end{itemize}
    On the other hand, for any $\epsilon>0$, there exists an $\epsilon$-approximate EVI solution (as promised by~\Cref{theorem:existence}). In particular, suppose that $\mu$ uniformly picks between $-\eps$ and $\eps$. Then, for any $x' \in \cX$,
    \begin{align*}
         \E_{x\sim\mu} \ip{F(x), x' - x}  = - \frac{1}{2} \left(x' + \epsilon \right) + \frac{1}{2} \left(x' -\epsilon \right)= -\epsilon.
    \end{align*}
\end{example}

It is worth pointing out that the above example can be slightly modified so that exact EVI solutions do exist, as we explain below.

\begin{example}[Modification of~\Cref{ex:sign-evi} with exact VI]
    We define $F(x)$ identically to \Cref{ex:sign-evi}, except $F\left(\nicefrac{1}{2}\right)=-1$. We claim that there is no VI solution for $\epsilon < \nicefrac{1}{2}$: any $x \neq \nicefrac{1}{2}$ can be treated as in~\Cref{ex:sign-evi}, and $x = \nicefrac{1}{2}$ is not a solution since $y= 1$ ensures $\ip{F(x), x' - x}= - \nicefrac{1}{2}$.

    However, there is an exact EVI solution: fix any $\xstar \in [0,\nicefrac{1}{2} )$ and consider $\mu$ that uniformly mixes between $x=\xstar$ and $x = \nicefrac{1}{2}$. Then, for any $x' \in \cX$,
    \begin{align*}
        \E_{x\sim\mu} \ip{F(x), x' - x}  =  \frac{1}{2} \left(x' -\xstar \right) - \frac{1}{2} \left(x' -\frac{1}{2} \right)= \frac{1}{2} \left(\frac{1}{2} -\xstar \right) > 0.
    \end{align*}
\end{example}

Our next result reveals that the precondition of~\Cref{theorem:existence} with respect to $\Phi$ cannot be relaxed to continuity.

\countercont*

\begin{proof}
    As before, let $F(x)$ be the sign function, $$F(x) = \sgn(x) := \begin{cases} -1 &\qif x < 0, \\ 1 &\qq{otherwise,} \end{cases}$$ and $\mu \in \Delta([-1, 1])$ be any distribution. For $\delta > 0$, let $\phi_\delta : [-1, 1] \to [-1, 1]$ be given by 
    \begin{align*}
        \phi_\delta(x) = \begin{cases}
            1 &\qif x < -2\delta, \\
            - (x+\delta)/\delta &\qif -2\delta \le x \le 0, \\
            -1 &\qif x > 0.
        \end{cases}
    \end{align*}
    Further, let $\phi_0(x) := -\sgn(x)$. Every $\phi_\delta$ (with $\delta > 0$) is continuous, by construction. Now, note that $\phi_\delta \to \phi_0$ pointwise when $\delta \downarrow 0$, and every $\phi_\delta$ is bounded. As a result, by the dominated convergence theorem, we have
    \begin{align*}
        \lim_{\delta \to 0} \E_{\vx\sim\mu}[F(x) (\phi_\delta(x) - x)] &= \E_{\vx\sim\mu}[F(x) (\phi_0(x) - x)]
        \\&= \E_{\vx\sim\mu}[-1 - F(x) \cdot x] \le -1,
    \end{align*}
    where the last line uses the fact that $F(x) \phi_0(x) = -\sgn(x)^2 = -1$ and $F(x) \cdot x = \sgn(x)\cdot x = |x|$ for all $x$. Thus, for any $\eps < 1$, there must be some $\delta > 0$ for which $\E[F(x) (\phi_\delta(x) - x)] < -\eps$, so $\mu$ cannot be an $\eps$-approximate EVI solution.
\end{proof}

Continuing on~\Cref{sec:existence}, we next provide the proof of~\Cref{theorem:finitedim}.

\finitedim*

\begin{proof}
We assume, without loss of generality, that (as functions) the coordinates $m_i : \cX \to \R$ for $1 \le i \le k$ are linearly independent. We further assume that $m$ is bounded, again without loss of generality. (Indeed, if for example $m_i$ is unbounded, then column $i$ of $\mK$ must contain all zeros, or else $\phi_\mK(\vx) := \mK m(\vx)$ would be unbounded; we can thus freely remove such coordinates $m_i$.)

Now, let $\cK := \co\{ \mK : \phi_\mK \in \Phi\}$ be the set of matrices corresponding to maps in $\Phi$; we can assume that $\cK$ is closed. We can now rewrite the $\Phi$-EVI problem as
    \begin{align*}
        \qq{find} \mu\in \Delta(\cX) \qq{s.t.} \E_{\vx\sim\mu}\ip{F(\vx) m(\vx)^\top, \mK - \mI} \ge 0
    \end{align*}
    for all $\mK \in \cK$, where above $\mat{I}$ is the identity matrix and the inner product is the usual Frobenius inner product of matrices.\footnote{To avoid measurability issues, it is enough to consider here only distributions $\mu$ with finite support.} Further, let $\cA := \co \{ F(\vx) m(\vx)^\top : \vx \in \cX\}$. Then, the $\Phi$-EVI problem can be in turn expressed as
    \begin{align*}
        \qq{find} \mA \in \cA \qq{s.t.} \ip{\mA, \mK - \mI} \ge 0
    \end{align*}
    for all $\mK \in \cK$. Since $F$ and $m$ are bounded, by assumption, so is $\cA$. Moreover, since the coordinates $m_i$ are linearly independent, $\cK$ is also bounded. Thus, letting $\bar\cA$ denote the closure of $\cA$, the max-min problem
    \begin{align}\label{eq:existence-game}
        \max_{\mA \in \bar\cA} \min_{\mK \in \cK} \ip{\mA, \mK - \mI}
    \end{align}
    satisfies the conditions of the minimax theorem. Moreover, for any $\mK\in\cK$, the fixed point $\vx := \fix(\phi_\mK)$ satisfies 
    \begin{align*}
        \ip{F(\vx) m(\vx)^\top, \mK - \mI} = \ip{F(\vx), \phi_\mK(\vx) - \vx} = 0,
    \end{align*}
    so the zero-sum game \eqref{eq:existence-game} has a nonnegative value; that is, there exists $\mA \in \bar\cA$ such that $\min_{\mK \in \cK} \ip{\mA, \mK - \mI} \ge 0$. Thus, for every $\eps > 0$, there exists $\mA \in \cA$ such that $\min_{\mK \in \cK} \ip{\mA, \mK - \mI} \ge -\eps$. Moreover, by Carath\'eodory's theorem, $\mA$ can be expressed as a convex combination of at most $1+dk$ matrices of the form $F(\vx) m(\vx)^\top$. This convex combination is thus an $\eps$-approximate EVI solution.
\end{proof}

The only reason the above proof breaks when $\eps = 0$ is that $\cA$ may not be closed. Indeed, this issue is fundamental: there are instances where no exact EVI solutions exist even when $\phi$ contains only constant functions (\Cref{prop:notexact}).

\subsection{Complexity of $\Phi$-EVIs}

With regard to the complexity of computing $\Phi$-EVI solutions, the key observation is that, when $\Phi$ contains all (measurable) maps, $\Phi$-EVIs are essentially equivalent to VIs; this immediately implies a number of hardness results, which were covered earlier in~\Cref{sec:existence}. We provide the formal proof of~\Cref{prop:EVI-VI} below.

\EVIequiv*

\begin{proof}
    We can define a measurable map $\phi : \cX \to \cX$ such that $\phi(\vx)$ is an element selected from $\argmin_{\vx' \in \cX} \ip{F(\vx), \vx' - \vx}$ by utilizing the measurable maximum theorem \citep[Theorem 18.19]{AliprantisB06:Infinite}. To satisfy the conditions of this theorem, we need to define---using \citeauthor{AliprantisB06:Infinite}'s notation--- the weakly measurable set-valued function $\psi : \cX \twoheadrightarrow \cX$ as $\psi(\vx) = \cX$ and the (Carath\'eodory) function $f : \cX \times \cX \to \R$ as $f(\vx,\vx') = - \ip{F(\vx), \vx' - \vx}$. Due to this map $\phi$, a $\Phi$-EVI solution $\mu \in \Delta(\cX)$ must then, in particular, satisfy
    \begin{align*}
        \E_{\vx\sim\mu} \ip{F(\vx), \phi(\vx) - \vx} = \E_{\vx\sim\mu} \argmin_{\vx' \in \cX} \ip{F(\vx), \vx' - \vx} \ge 0.
    \end{align*}
    Therefore, there must exist $\vxstar \in \cX$ with $\argmin_{\vx' \in \cX} \ip{F(\vxstar), \vx' - \vxstar} \ge 0$, that is, a VI solution $\vxstar$. If $\mu$ has finite support, then such a $\vxstar$ exists within that support. The $\eps$-approximation case follows analogously.
\end{proof}

\subsection{Proof of Theorem~\ref{th:elvi}}
\label{sec:mainproof}

To establish~\Cref{th:elvi}, we will use the recent framework of~\citet{Daskalakis24:Efficient}, which refines~\Cref{theorem:eah} in the context of~\Cref{sec:eah}, ultimately summarized in~\Cref{th:eahrelax}. Coupled with the ``semi-separation oracle'' of~\Cref{lemma:semiseparation}, we will thus arrive at~\Cref{th:elvi}.

Let $\cX \subseteq \R^d$ and $\cY \subseteq \R^m$ be convex and compact sets. The goal is to solve the convex program
\begin{equation}
    \label{eq:init-prog}
    \qq{find} \mu \in \Delta(\cX) \qq{s.t.} \min_{\vy \in \cY} \langle \mu, \mat{A} \vy \rangle \geq 0,
\end{equation}
where $\Delta(\cX) \subseteq \R^M$ and $\mat{A} \in \R^{M \times m}$; we think of $M$ as being potentially exponentially large, so $\mat{A}$ is not given explicitly; $\Philin$-EVIs can be expressed as~\eqref{eq:init-prog}, assuming that $\mu$ has finite support (\emph{cf.}~\Cref{theorem:finitedim}). The target is to solve~\eqref{eq:init-prog} with complexity polynomial in $d$ and $m$ (and other parameters of the problem, except $M$). As we saw earlier in~\Cref{sec:eah}, the $\eah$ algorithm accomplishes that given access to a $\ger$ oracle, which, for any $\vy \in \cY$, returns $\vx \in \cX$ such that $\langle \mu(\vx), \mat{A} \vy \rangle \geq 0$, where $\Delta(\cX) \ni \mu(\vx)$ places all probability on $\vx$. Assuming that such an oracle exists, the convex program
\begin{equation}
    \label{eq:dual-prog}
    \qq{find} \vy \in \R_{> 0} \cY \qq{s.t.} \max_{\mu \in \Delta(\cX)} \langle \mu, \mat{A} \vy \rangle \leq -1
\end{equation}
is infeasible, where $\R_{> 0} \cY \defeq \{ c \vy : \vy \in \cY, c > 0 \}$ is the conic hull of $\cY$. Despite its infeasibility, $\eah$ proceeds by applying the ellipsoid algorithm on~\eqref{eq:dual-prog}---this is where the name ``ellipsoid against hope'' comes from. In doing so, the ellipsoid will eventually shrink to an area with negligible volume (denoted by $\vol$), at which point one can extract a certificate of infeasibility for~\eqref{eq:dual-prog} as follows. The execution of the ellipsoid will have produced a sequence of $T \in \N$ good-enough-responses, $\vx^{(1)}, \dots, \vx^{(T)}$, such that for any $\vy \in \cY$, it holds that $\langle \mu(\vx^{(t)}), \mat{A} \vy \rangle \geq 0 $ for some $t \in [T]$ (up to numerical imprecision). In turn, this implies that there is a mixture $\mu$ over $\{ \vx^{(1)}, \dots, \vx^{(T)} \}$ that guarantees $\langle \mu, \mat{A} \vy \rangle \geq 0$ for every $\vy \in \cY$. Such a $\mu$ can be computed in polynomial time by solving a smaller program, which simply searches over the mixing coefficients.

So far, we have elaborated on the framework presented in~\Cref{sec:eah}. To solve $\Philin$-EVIs, it is necessary to relax the oracle assumed above. In particular, the $\either$ oracle, introduced by~\citet{Daskalakis24:Lower}, proceeds as follows. It takes as input a point $\vy \in \R^m$ (not necessarily in $\cY$), and must \emph{either} return a good-enough-response $\vx \in \cX$, or a hyperplane separating $\vy$ from $\cY$. The idea now is to again run ellipsoid on~\eqref{eq:dual-prog}, but by replacing $\cY$ with a convex ``shell set''; every time the $\either$ oracle returns a separating hyperplane, the shell set restricts further. At the end of this process, once the ellipsoid has shrank enough, one can work with the induced shell set $\widetilde{\cY}$ and proceed by identifying a mixture among the good-enough-responses $\{\vx^{(1)}, \dots, \vx^{(T)} \}$ that approximately solves~\eqref{eq:init-prog}. The overall scheme is given in~\Cref{alg:eah}.

\begin{algorithm}[!ht]
\caption{Ellipsoid against hope ($\eah$) under $\either$ oracle~\citep{Daskalakis24:Efficient}}
\label{alg:eah}
\begin{algorithmic}[1]
\INPUT 
\begin{minipage}[t]{\linewidth} 
\begin{itemize}[noitemsep,topsep=0pt,leftmargin=*]
    \item Parameters $R_y, r_y > 0$ such that $\cB_{r_y}(\cdot) \subseteq \cY \subseteq \cB_{R_y}(\vec{0})$
    \item precision parameter $\epsilon > 0$
    \item constant $B \geq 1$ such that $\| \mu^\top \mat{A} \| \leq B$ for all $\mu \in \Delta(\cX)$
    \item a $\either$ oracle
\end{itemize}
\end{minipage}
\OUTPUT A sparse, $\epsilon$-approximate solution $\mu \in \Delta(\cX)$ of~\eqref{eq:init-prog}
\STATE Initialize the ellipsoid $\cE \defeq \cB_{R_y}(\vec{0})$
\STATE Initialize $\widetilde{\cY} \defeq \cB_{R_y}(\vec{0})$
\WHILE{$\vol(\cE) \geq \vol(\cB_{\epsilon/B}(\cdot))$}
    \STATE Query the $\either$ oracle on the center of $\cE$
    \IF{it returns a good-enough-response $\vx \in \cX$}
        \STATE Update $\cE$ to the minimum volume ellipsoid containing $\cE \cap \{ \vy \in \R^m : \langle \vy, \mat{A}^\top \mu(\vx) \rangle \leq 0 \} $
    \ELSE
        \STATE Let $H$ be the halfspace that separates $\vy$ from $\cY$
        \STATE Update $\cE$ to the minimum volume ellipsoid containing $\cE \cap H$
        \STATE Update $\widetilde{\cY} \defeq \widetilde{\cY} \cap H$
    \ENDIF
\ENDWHILE
\STATE Let $\vx^{(1)}, \dots, \vx^{(T)}$ be the $\ger$ oracle responses produced in the process above
\STATE Define $\mat{X} \defeq [\mu(\vx^{(1)}) \mid \hdots \mid \mu(\vx^{(T)})]$ and compute $\mat{X}^\top \mat{A} \in \R^{T \times m} $
\STATE Compute a solution $\vec{\lambda}$ to the convex program
\begin{equation*}
    \qq{find} \vec{\lambda} \in \Delta^T \qq{s.t.} \min_{\vy \in \widetilde{\cY}} \vec{\lambda}^\top ( \mat{X}^\top \mat{A}) \vy \geq - \epsilon
\end{equation*}
\STATE \textbf{return} $\Delta(\cX) \ni \mu \defeq \sum_{t=1}^T \lambda^{(t)} \mu(\vx^{(t)})$
\end{algorithmic}
\end{algorithm}

Below, we state the main guarantee of~\Cref{alg:eah} shown by~\citet{Daskalakis24:Efficient}; in its statement, we have made certain slight adjustments in accordance with our setting.

\begin{theorem}[\citealp{Daskalakis24:Efficient}]
    \label{th:eahrelax}
    Suppose that the following conditions hold.
    \begin{enumerate}
        \item $\mat{A} \in \R^{M \times m}$ such that for any $\mu \in \Delta(\cX)$, $\| \mu^\top \mat{A} \| \leq B$ for some $B \geq 1$;
        \item $\cY$ is convex and compact, and satisfies $\cB_{r_y}(\cdot) \subseteq \cY \subseteq \cB_{R_y}(\vec{0})$; \label{item:niceness-Y} and
        \item there exists a $\either$ oracle: for every point $\vy \in \cB_{R_y}(\vec{0})$, it runs in $\poly(d, m)$ time, and either returns a hyperplane separating $\vy$ from $\cY$ or a good-enough-response $\vx \in \cX$.\label{item:oracle}
    \end{enumerate}
    Then, \Cref{alg:eah} runs in $\poly(d, m, \log(B/\epsilon))$ time and computes $\mu \in \Delta(\cX)$ such that
    \begin{equation*}
        \min_{\vy \in \cY} \langle \mu, \mat{A} \vy \rangle \geq - \epsilon.
    \end{equation*}
\end{theorem}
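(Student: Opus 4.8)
The plan is to read \Cref{alg:eah} as an execution of the ellipsoid method against the \emph{infeasible} dual program \eqref{eq:dual-prog}, following \citet{Daskalakis24:Efficient}, where the single twist is that the $\either$ oracle is made to serve two roles at once: a separation oracle for the domain $\R_{>0}\cY$, and a violated-constraint oracle for the semi-infinite system $\{\langle \mu(\vx), \mat{A}\vy\rangle \le -1 : \vx \in \cX\}$. Since a $\ger$ response always exists, \eqref{eq:dual-prog} is infeasible, so the intended contradiction—an ellipsoid that refuses to shrink—never arises, and instead the algorithm harvests from the run a finite certificate that solves \eqref{eq:init-prog} approximately.

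First I would check that every iteration yields a valid cut for the dual feasibility region. At the ellipsoid center $\vy$, the oracle returns one of two objects. If it returns a $\ger$ response $\vx \in \cX$, then $\langle \mu(\vx), \mat{A}\vy\rangle \ge 0 > -1$, so the constraint indexed by $\vx$ is violated and the halfspace $\{\vy' : \langle \vy', \mat{A}^\top \mu(\vx)\rangle \le 0\}$ separates the center from the feasible set. If it instead returns a hyperplane separating $\vy$ from $\cY$, this is a valid domain cut, and we additionally refine the shell set $\widetilde{\cY}$ by intersecting with the corresponding halfspace. In both cases $\cE$ is replaced by the minimum-volume ellipsoid containing the cut, so its volume contracts by a factor $e^{-1/(2(m+1))}$. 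Because $\cY \subseteq \cB_{R_y}(\vec{0})$ bounds the initial volume and the loop terminates once $\vol(\cE) < \vol(\cB_{\epsilon/B}(\cdot))$, this happens after $\poly(m, \log(R_y B/\epsilon))$ iterations; combined with the $\poly(d,m)$ cost per oracle call, this gives the stated running time.

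The crux is the extraction step. Let $\vx^{(1)}, \dots, \vx^{(T)}$ be the collected $\ger$ responses and $\widetilde{\cY}$ the final shell set, which by construction over-approximates $\cY$, i.e. $\cY \subseteq \widetilde{\cY}$. The key claim is that the \emph{finite} system $\{\langle \mu(\vx^{(t)}), \mat{A}\vy\rangle \le -1 : t \in [T]\}$ has no solution inside $\widetilde{\cY}$: any such $\vy$ would survive all cuts applied during the run, keeping $\vol(\cE)$ above the threshold and contradicting termination. This is precisely where the threshold $\epsilon/B$ is calibrated, using $\|\mu^\top\mat{A}\| \le B$ to convert the surviving geometric margin into additive slack $\epsilon$. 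Given infeasibility of this finite system over the convex compact set $\widetilde{\cY}$, a minimax argument applied to the zero-sum game $\min_{\vy \in \widetilde{\cY}} \max_{t \in [T]} \langle \mu(\vx^{(t)}), \mat{A}\vy\rangle$ produces weights $\vec\lambda \in \Delta^T$ with $\min_{\vy \in \widetilde{\cY}} \vec\lambda^\top(\mat{X}^\top\mat{A})\vy \ge -\epsilon$, which is exactly the convex program the algorithm solves. Since $\cY \subseteq \widetilde{\cY}$, the returned $\mu \defeq \sum_t \lambda^{(t)}\mu(\vx^{(t)})$ satisfies $\min_{\vy \in \cY}\langle \mu, \mat{A}\vy\rangle \ge -\epsilon$.

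I expect the main obstacle to be the calibration in the extraction step: one must track how the additive $-1$ threshold in \eqref{eq:dual-prog}, the multiplicative ellipsoid volume decay, and the bound $B$ on $\|\mu^\top \mat{A}\|$ interact so that the geometric ``no small feasible ball survives'' statement converts cleanly into the additive $-\epsilon$ guarantee—all while respecting the conic scaling $\R_{>0}\cY$ and the fact that $\widetilde{\cY}$ merely over-approximates $\cY$. The remaining work is the standard but delicate numerical bookkeeping of the ellipsoid method (rational arithmetic, rounding of ellipsoids, and the associated weak-separation slack), which must be absorbed into the same $\epsilon$ budget.
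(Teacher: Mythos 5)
Your proposal is correct and follows essentially the same route as the paper, which does not prove this theorem itself but imports it from \citet{Daskalakis24:Efficient} and gives exactly this informal account in \Cref{sec:mainproof}: run the ellipsoid method on the infeasible dual \eqref{eq:dual-prog}, treat the $\either$ oracle as simultaneously supplying violated-constraint cuts and domain cuts that refine the shell set $\widetilde{\cY} \supseteq \cY$, and extract a mixture over the collected good-enough-responses via the small convex program at the end. You rightly flag the $\epsilon$-calibration between the conic $-1$ threshold, the volume bound, and $\|\mu^\top\mat{A}\|\le B$ as the delicate step; that bookkeeping is precisely what is delegated to the cited source.
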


That the second precondition (\Cref{item:niceness-Y}) is satisfied follows from~\citet[Lemma 2.3]{Daskalakis24:Efficient}. The third precondition, \Cref{item:oracle}, is satisfied by virtue of~\Cref{lemma:semiseparation}. Consequently, \Cref{th:elvi} follows from~\Cref{th:eahrelax}.

\begin{remark}[Weak oracles and finite precision]
    Since we are working with general convex sets, the oracles posited in~\Cref{sec:prel} (namely, membership, separation, and linear optimization) can return irrational outputs. This can be addressed by employing \emph{weak} versions of those oracles, which relax the output by allowing some small slackness $\epsilon$~\citep{Grotschel93:Geometric}. \Cref{th:elvi} can be readily extended under those weaker oracles; see~\citet[Appendices E and F]{Daskalakis24:Efficient}.
\end{remark}

\section{Characterizing Linear Endomorphisms for Polytopes}
\label{sec:repre}

In this section, we answer the following question. Given a nonempty polytope $\cX = \{ \vx \in \R^d : \mA \vx \le \vb\}$ where $\mA\in\R^{m\times d}$ and $\vb\in\R^m$, we wish to characterize the set of (affine) linear maps $\phi : \cX \to \cX$. That is, we wish to understand the set of pairs $(\mK, \vc) \in \R^{d\times d} \times \R^d$ such that $\mK \vx + \vc\in \cX$ for all $\vx\in\cX$. The following result provides an explicit polynomial representation for that set, establishing~\Cref{theorem:regret}.

\begin{theorem}
    \label{theorem:explicit-repres}
    $\mK \vx + \vc \in \cX$ for all $\vx\in\cX$ if and only if there is a matrix $\mV \in \R^{m \times m}$ satisfying the constraints
    \begin{align}
        \mV\mA = \mA\mK, \quad \mV\vb \le \vb - \mA \vc, \quad \mV \ge \vec 0.
    \end{align}
\end{theorem}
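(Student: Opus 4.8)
The plan is to reduce the matrix-valued containment to $m$ scalar valid-inequality conditions, solve each with the affine Farkas lemma (equivalently, strong LP duality), and then repackage the $m$ dual certificates as the rows of a single matrix $\mV$. First I would rewrite the hypothesis. By the definition of $\cX$, the condition ``$\mK\vx+\vc\in\cX$ for all $\vx\in\cX$'' is exactly $\mA(\mK\vx+\vc)\le\vb$ for all $\vx\in\cX$, i.e. $\mA\mK\,\vx\le\vb-\mA\vc$ for every $\vx$ with $\mA\vx\le\vb$. Reading this one coordinate at a time, and writing $\va_i$ for the $i$-th row of $\mA$ (viewed as a column), $\vw_i:=\mK^\top\va_i$ (so that $\vw_i^\top$ is the $i$-th row of $\mA\mK$), and $\beta_i:=(\vb-\mA\vc)_i$, the hypothesis is equivalent to: for each $i$, the inequality $\vw_i^\top\vx\le\beta_i$ is valid over the nonempty polyhedron $\{\vx:\mA\vx\le\vb\}$.

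Second, I would apply the affine Farkas lemma to each $i$ separately: since $\cX\neq\emptyset$, the inequality $\vw_i^\top\vx\le\beta_i$ holds throughout $\{\vx:\mA\vx\le\vb\}$ if and only if there exists $\vv_i\ge\vec 0$ with $\mA^\top\vv_i=\vw_i$ and $\vb^\top\vv_i\le\beta_i$. The backward direction here is the routine weak-duality chain $\vw_i^\top\vx=\vv_i^\top\mA\vx\le\vv_i^\top\vb\le\beta_i$, valid because $\vv_i\ge\vec 0$ and $\mA\vx\le\vb$; it needs nothing beyond nonnegativity. The forward direction is where the content lies and is exactly the solvability half of Farkas / strong LP duality: validity of $\vw_i^\top\vx\le\beta_i$ over a nonempty $\cX$ says that the LP $\max_{\mA\vx\le\vb}\vw_i^\top\vx$ is feasible and bounded above (by $\beta_i$), so its dual is solvable and attains the same value, furnishing the certificate $\vv_i$. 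Note that only boundedness of these particular objectives is used, so the argument does not even require $\cX$ to be bounded globally.

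Finally, I would assemble the pieces. Let $\mV\in\R^{m\times m}$ have $i$-th row $\vv_i^\top$. Then the three block constraints of the theorem encode the $m$ per-row conditions verbatim: the $i$-th row of $\mV\mA$ is $\vv_i^\top\mA=(\mA^\top\vv_i)^\top$, so $\mV\mA=\mA\mK$ holds iff $\mA^\top\vv_i=\vw_i$ for all $i$; the $i$-th entry of $\mV\vb$ is $\vv_i^\top\vb$, so $\mV\vb\le\vb-\mA\vc$ holds iff $\vb^\top\vv_i\le\beta_i$ for all $i$; and $\mV\ge\vec 0$ holds iff $\vv_i\ge\vec 0$ for all $i$. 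Collecting the $m$ equivalences from the previous step then yields the claimed biconditional.

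The main thing to get right is the transpose bookkeeping in matching the $i$-th rows of $\mV\mA$ and $\mA\mK$, and confirming that mere validity over a nonempty $\cX$ makes the relevant LP bounded so that strong duality is legitimately available; the only substantive ingredient is that strong-duality (affine Farkas) step, everything else being a direct translation between the matrix form and the $m$ scalar statements.
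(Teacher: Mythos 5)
Your proposal is correct and follows essentially the same route as the paper: decompose the containment $\mK\vx+\vc\in\cX$ into one valid-inequality condition per row of $\mA$, certify each via strong LP duality (affine Farkas), and stack the dual vectors $\vv_i$ as the rows of $\mV$. Your added remarks on nonemptiness, boundedness of the per-row LPs, and the transpose bookkeeping are accurate but do not change the argument.
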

\begin{proof}
    Let $\mK \in \R^{d\times d}$ and $\vc \in \R^d$, and let $\va_i^\top  \vx \le b_i$ be the $i$th constraint that defines $\cX$. Then, the claim that $\va_i^\top (\mK \vx + \vc) \le b_i$ for every $\vx\in\cX$ is equivalent to the claim that the linear program
    \begin{align}
        \max_{\vx} \quad \va_i^\top \mK \vx \qq{s.t.} \mA\vx\le \vb\label{eq:lp primal}
    \end{align}
    has value at most $b_i - \va_i^\top \vc$. By strong duality, \eqref{eq:lp primal} has the same value as
    \begin{align*}
        \min_{\vv_i} \quad \vb^\top\vv_i \qq{s.t.} \mA^\top \vv_i = \mK^\top \va_i, \quad \vv \ge \vec 0.
    \end{align*}
    The theorem follows now by setting $\mV = \mqty[\vv_1 & \dots & \vv_k]^\top$.
\end{proof}

Furthermore, assuming that $\cB_1(\vec{0}) \subseteq \cX \subseteq \cB_R(\vec{0})$ with $R \leq \poly(d)$, it follows that $\| \mat{K} \|_2, \|\mat{V} \|_2 \leq \poly(d)$, where $\|\cdot\|_2$ denotes the spectral norm. Indeed, to begin with, $\| \vc \|_2 \leq R$ since $\mK \cdot \vec{0} + \vc \in \cX \subseteq \cB_r(\vec{0})$. For $\| \mat{K} \|_2$, take any $\vx \in \R^d$ with $\|\vx\| = 1$. Since $\cB_1(\vec{0}) \subseteq \cX$, we have $\vx \in \cX$, in turn implying that $\mat{K} \vx + \vc \in \cX$. As a result, $\|\mat{K} \vx \| - \|\vc\| \leq \| \mat{K} \vx + \vc \| \leq \poly(d)$, from which it follows that $\| \mat{K} \|_2 \leq \poly(d)$. Further, one can take each $\vec{a}_i$ and $b_i$ to be such that $1 \leq b_i \leq \poly(d)$ and $\| \vec{a}_i \| = 1$, and so the bound $\| \mat{V} \|_2 \leq \poly(d)$ follows from the fact that $\mat{V} \vec{b} \leq \vec{b} - \mat{A} \vc$ and $\mat{V} \geq 0$.

Combining these bounds with~\Cref{theorem:explicit-repres}, and as we saw earlier in~\Cref{cor:regret}, we are able to use standard techniques for minimizing regret over $\Philin$---such as projected gradient descent.

For comparison, let us point out the approach of~\citet{Daskalakis24:Efficient} for the case where $\cX$ is given explicitly. To do so, we recall the following definition.

\begin{definition}
    \label{def:repr}
We say that a polytope $\cX$ has an \emph{H-representation} of size $m$ if it is given as the intersection of $m$ halfspaces: $\cX = \{  \vx \in \R^d : 
\mat{A} \vx \leq \vec{b} \}$ for some $\mat{A} \in \Q^{m \times d}$ and $\vec{b} \in \Q^m$. It has a \emph{$V$-representation} of size $m$ if it is given as the convex hull of $m$ vertices: $\cX = \conv( \{ \vec{v}_1, \dots, \vec{v}_m \} )$ for $\vec{v}_1, \dots, \vec{v}_m \in \Q^d$.    
\end{definition}

In this context, they make the following crucial observation~\citep[Lemmas 3.1 and 3.2]{Daskalakis24:Efficient}.

\begin{lemma}[\citealp{Daskalakis24:Efficient}]
    \label{lemma:H-V}
    If $\cX$ has either an $H$-representation of size $m$ or a $V$-representation of size $m$, there is a $\poly(d, m)$-time membership oracle for $\Philin$.
\end{lemma}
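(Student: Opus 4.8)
The plan is to treat the two representations separately, in each case reducing the universally quantified membership condition to a polynomial number of linear programs. Throughout, the input to the oracle is a candidate pair $(\mK, \vc) \in \R^{d\times d}\times\R^d$ (using the affine convention, which by the footnote in \Cref{sec:existence} subsumes the linear case), and the task is to decide whether $\mK\vx + \vc \in \cX$ for all $\vx \in \cX$.

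For the $H$-representation $\cX = \{\vx : \mA\vx \le \vb\}$, I would observe that the map sends $\cX$ into $\cX$ precisely when each defining halfspace is respected by every image point, i.e. $\va_i^\top(\mK\vx + \vc) \le b_i$ for all $i \in [m]$ and all $\vx\in\cX$. Fixing $i$, this holds if and only if $\max_{\vx\in\cX}\va_i^\top \mK\vx \le b_i - \va_i^\top\vc$, which is exactly the linear program already analyzed in the proof of \Cref{theorem:explicit-repres}. I would solve these $m$ linear programs (each with $d$ variables and $m$ constraints) and accept if and only if all $m$ optimal values clear their thresholds. Since each LP is solvable in $\poly(d,m)$ time and there are $m$ of them, the oracle runs in $\poly(d,m)$ time. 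Equivalently, one could invoke \Cref{theorem:explicit-repres} directly and test feasibility of the single LP in the auxiliary variable $\mV$, but solving the $m$ primal LPs is the most transparent route.

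For the $V$-representation $\cX = \conv(\{\vv_1,\dots,\vv_m\})$, the key step is to reduce the universal condition to checking only the images of the vertices. The forward direction is immediate since each $\vv_j\in\cX$. For the converse, suppose $\mK\vv_j + \vc \in \cX$ for every $j$; any $\vx\in\cX$ can be written $\vx = \sum_j \lambda_j\vv_j$ with $\vec\lambda\in\Delta^m$, and by affinity $\mK\vx + \vc = \sum_j \lambda_j(\mK\vv_j + \vc)$, which lies in $\cX$ by convexity. Thus it suffices to test membership of the $m$ points $\mK\vv_j + \vc$ in $\cX$; each such test is a feasibility LP asking whether the point is a convex combination of $\vv_1,\dots,\vv_m$, again solvable in $\poly(d,m)$ time. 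Running all $m$ tests yields a $\poly(d,m)$-time oracle. Note the pleasant symmetry: in the $H$-representation, point membership is trivial but the universal quantifier forces an LP, whereas in the $V$-representation the universal quantifier is cheap (vertices suffice) but each point-membership test is itself an LP.

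The proof is essentially bookkeeping once these two reductions are in place, so I do not anticipate a genuine obstacle; the only point requiring care is the affinity-plus-convexity argument in the $V$-representation case, which is what makes checking the finitely many vertex images \emph{sufficient} rather than merely necessary. A minor additional check is to confirm that the LPs have polynomial bit-complexity, which follows from the rationality assumptions on $\mA,\vb$ (resp. the $\vv_j$) in \Cref{def:repr}.
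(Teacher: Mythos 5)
Your proposal is correct. One point of context: the paper does not actually prove \Cref{lemma:H-V} itself---it imports it from \citet{Daskalakis24:Efficient} (their Lemmas 3.1 and 3.2)---so the only in-paper material to compare against is \Cref{theorem:explicit-repres} in \Cref{sec:repre}, which handles the $H$-representation case. There, your argument is essentially the primal side of the same LP-duality computation: you solve the $m$ programs $\max_{\vx\in\cX}\va_i^\top\mK\vx$ and compare against $b_i - \va_i^\top\vc$, whereas the paper dualizes each of these LPs to obtain a single explicit polyhedral description of $\Philin$ in the auxiliary variable $\mV$. The two are interchangeable for deciding membership, but the paper's dual form buys something strictly stronger---an explicit $H$-representation of $\Philin$ with $O(d^2+m^2)$ variables and constraints, which is what enables projected gradient descent in \Cref{cor:regret} rather than ellipsoid-based methods. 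Your $V$-representation argument (affinity plus convexity reduces the universal quantifier to the $m$ vertex images, each checked by a feasibility LP over $\Delta^m$) is the standard one and matches what the cited lemma does; no gaps there. The bit-complexity caveat you flag is handled exactly as you suggest, by the rationality assumptions in \Cref{def:repr}.
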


Using a membership oracle for $\Philin$, it is also possible to construct a linear optimization oracle~\citep{Grotschel12:Geometric}. As a result, coupled with~\Cref{lemma:H-V}, standard algorithms---such as \emph{follow-the-perturbed-leader}~\citep{Hazan16:Introduction}---can be applied to minimize regret over $\Philin$. However, the main limitation is that constructing a linear optimization oracle using a membership oracle relies on the ellipsoid algorithm, which is impractical. In contrast, \Cref{theorem:explicit-repres} allows us to bypass using the ellipsoid algorithm, resulting in a more practical approach.

It is also worth noting that one can extend~\Cref{theorem:regret} using only a membership oracle for $\cX$ (even when $\cX$ is not an explicitly represented polytope) using techniques from~\citet{Daskalakis24:Efficient}, although the resulting algorithm is more elaborate and requires running the ellipsoid algorithm on every iteration to compute the next strategies.

\section{An Illustrative Example of Definition~\ref{def:smooth-fun}}
\label{sec:appendix-smooth}

In \Cref{sec:smoothness}, we introduced a generalized notion of smoothness (\Cref{def:smoothness}) that captures Roughgarden's notion in the context of multi-player games. As a result, there are numerous interesting examples that fall under~\Cref{def:smoothness}; for example, \citet{Roughgarden17:Price} provide a survey in the context of auctions. Our goal here is to provide a single function that satisfies~\Cref{def:smooth-fun}, but without being quasar-concave (in the sense of~\Cref{def:quasar}).

\begin{example}
    We consider the polynomial function
    \begin{equation}
        \label{eq:u-smooth}
        u : x \mapsto - \frac{3}{4} p x^4 + p x^3 + 1,
    \end{equation}
    where $p \in (0, 8]$. $u$ has a global maximum at $x = 1$, with value $1 + \nicefrac{p}{4}$. It also admits a VI solution (in fact, a saddle point) at $x = 0$. This implies that $u$ is not $\gamma$-quasar-concave for any $\gamma \in (0, 1]$. On the other hand, it is not hard to verify the following claim.

    \begin{claim}
        \label{claim:u}
        $u$ is $(1, \nicefrac{p}{4})$-smooth (\Cref{def:smooth-fun}) for any $p \in (0, 8]$.
    \end{claim}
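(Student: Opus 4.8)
The plan is to verify \Cref{def:smooth-fun} directly for $\xstar = 1$, reducing the claimed inequality to the nonnegativity of a single univariate polynomial, and then to a discriminant condition that turns out to be exactly $p \le 8$.

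First I would record the relevant data of $u$. Differentiating gives $u'(x) = 3p\,x^2(1-x)$, so the stationary points are $x = 0$ and $\xstar = 1$; since $u'$ is positive for $x < 1$ and negative for $x > 1$, the point $\xstar = 1$ is the global maximizer, with $u(\xstar) = 1 + p/4$ (and $x=0$ is the saddle mentioned in the example). Substituting $\lambda = 1$ and $\nu = p/4$ into \Cref{def:smooth-fun}, the claim becomes
\[
    u'(x)\,(1 - x) \;\ge\; u(1) - \left(1 + \tfrac{p}{4}\right) u(x) \qquad \forall x .
\]

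Next I would simplify both sides. The left-hand side is $u'(x)(1-x) = 3p\,x^2(1-x)^2$. For the right-hand side, using $u(1) = 1 + p/4$ and factoring, $u(1) - (1 + p/4)u(x) = (1 + p/4)\bigl(1 - u(x)\bigr) = (1 + p/4)\,p\,x^3\bigl(\tfrac34 x - 1\bigr)$. Writing $q := p/4 \in (0,2]$ and dividing through by the positive constant $q$, the desired inequality is equivalent to $g(x) \ge 0$, where a short expansion collects the quartic into
\[
    g(x) = x^2\left[\, 3(3-q)\,x^2 + 4(q-5)\,x + 12 \,\right].
\]
Since the prefactor $x^2$ is nonnegative, it suffices to show that the bracketed quadratic $h(x) := 3(3-q)\,x^2 + 4(q-5)\,x + 12$ is nonnegative for all $x$.

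Finally I would analyze $h$ as a quadratic in $x$. Its leading coefficient $3(3-q)$ is positive because $q \le 2 < 3$, so it is enough to check that its discriminant is nonpositive. A direct computation yields
\[
    \Delta = 16(q-5)^2 - 144(3 - q) = 16\,(q-2)(q+1),
\]
which is $\le 0$ precisely when $q \le 2$, i.e.\ when $p \le 8$ (with $\Delta = 0$ exactly at $p = 8$). Hence $h(x) \ge 0$, so $g(x) \ge 0$ for every $x \in \R$, establishing \Cref{claim:u} on any domain $\cX$. I do not anticipate a genuine obstacle here: the only care needed is bookkeeping in the two factorizations (of $1 - u(x)$ and of $\Delta$) and confirming that both the sign of the leading coefficient and the sign of $\Delta$ are governed by the single inequality $q \le 2$. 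The one conceptual point worth flagging is that the argument certifies the inequality for all real $x$, so the bound is insensitive to the precise choice of $\cX$, and $p = 8$ is exactly the threshold at which smoothness degrades to the boundary case.
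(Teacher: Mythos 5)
Your proof is correct and is precisely the direct verification that the paper leaves to the reader (it only remarks that the claim ``is not hard to verify''). The reduction to the quadratic $h(x)=3(3-q)x^2+4(q-5)x+12$ with discriminant $16(q-2)(q+1)$ checks out and cleanly explains why $p=8$ is the exact threshold.
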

    A graphical illustration of $u$ for various values of $p$ is given in~\Cref{fig:smooth_polynomial}. Coupled with~\Cref{theorem:smoothness}, \Cref{claim:u} implies that any solution $\mu$ to the induced EVI problem satisfies
    \begin{equation}
        \E_{ x \sim \mu} u(x) \geq \frac{1}{ 1 + \frac{p}{4}} \max u(x)=1.
    \end{equation}
    This guarantee is tight, since $x = 0$, with $u(0) = 1$, is a solution to the (E)VI problem.

    \begin{figure}[!ht]
        \centering
        \includegraphics[scale=.6]{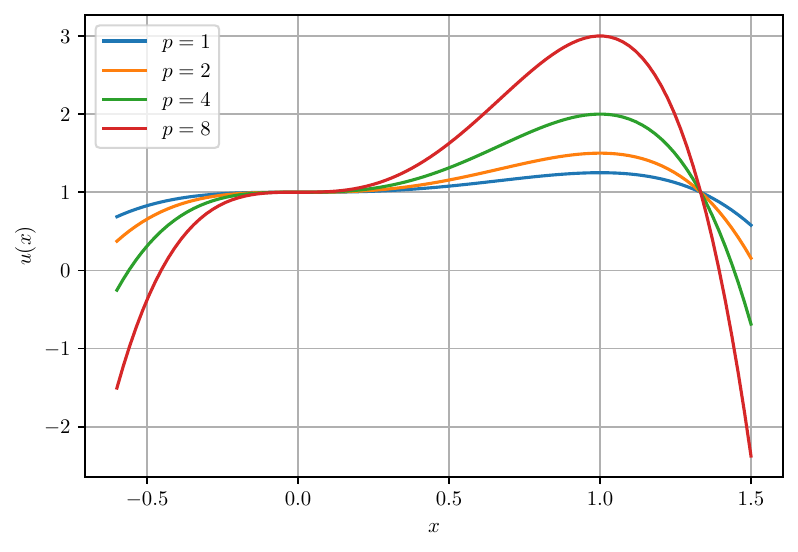}
        \caption{Function $u$, defined in~\eqref{eq:u-smooth}, for $p \in \{1, 2, 4, 8\}$.}
        \label{fig:smooth_polynomial}
    \end{figure}
\end{example}

\begin{remark}
    \Cref{def:smoothness}, to which~\Cref{def:smooth-fun} is a special case, is a generalization of smoothness in the sense of~\citet{Roughgarden15:Local}. While our bound applies to any EVI solution (\Cref{theorem:smoothness}), \citet{Roughgarden15:Local} gave a counter-example that excludes CCEs; this is not a contradiction because they define CCEs \emph{without} linearizing the utilities (as in~\Cref{example:CCE}), while EVIs always operate over the linearized utilities. 
\end{remark}

\section{Omitted Details from \Cref{sec:games}}\label{sec:appendix-joint}
In this section and the next, we will use the notation $\Philin(\cX, \cY)$ to denote the set of linear maps $\phi : \cX \to \cY$. 

In this section, let $\Gamma$ be a concave game. For each player $i$ let $\cX_i \subset \R^{d_i}$ be its (convex, compact) strategy set, and let $\Phi_i \subseteq \cX_i^{\cX}$. We assume $u_i(\cdot, \vx_{-i})$ is differentiable in $\vx_i$ for all $i$.\footnote{By ``$f : \cC \to \R$ is differentiable'' when $\cC$ is closed, we mean that $f$ is defined and differentiable on an open set $\hat \cC \supset \cC$.} Without loss of generality we assume that the projection function $\pi_i(\vx) = \vx_i$ is in $\Phi_i$, and that $\Phi_i$ is convex. Crucially for this section and departing to our knowledge from all  prior work on $\Phi$-equilibria in games, functions $\phi_i \in \Phi_i$ are allowed to depend not just on $\vx_i$ but also on $\vx_{-i}$. We first generalize the examples in \Cref{sec:games} to arbitrary $\Phi$.
\begin{definition}
    An {\em $\eps$-approximate $\Phipl$-equilibrium} of $\Gamma$ is a distribution $\mu \in \Delta(\cX)$ such that 
    \begin{align*}
        \E_{ \vx \sim \mu } \qty[u_i(\phi_i(\vx_i), \vx_{-i}) - u_i(\vx)] \leq \epsilon
    \end{align*}
    for all players $i$ and deviations $\phi_i\in \Phi_i$.
\end{definition}
As discussed in \Cref{sec:games}, several special cases of $\Phi_i$ are well-studied and interesting:
\begin{itemize}
    \item When $\Phi_i$ contains the set of constant endomorphisms and the projection $\pi_i$, the set of $\Phipl$-equilibria are the CCEs.
    \item When $\Phi_i$ consists of linear endomorphisms depending only on $\vx_i$, \ie, functions of the form $\vx \mapsto \mA \vx_i$ for matrices $\mA$, $\Phipl$-equilibria are LCEs, which correspond to CEs in the special case of  normal-form games.
    \item When $\Phi_i$ consists of all functions $\cX \to  \cX_i$, $\Phipl$-equilibria are Nash equilibria.
\end{itemize}

Now let $\cX = \cX_1 \times \dots \times \cX_n \subset \R^d$ where $d = \sum_i d_i$, and define $\Phi \subseteq \cX^\cX$ to be the set of all functions of the form 
    \begin{align*}
        \vx \mapsto (\vx_1, \dots, \phi_i(\vx), \dots, \vx_n)
    \end{align*}
for players $i$ and functions $\phi_i \in \Phi_i$. We will abuse notation and also call these functions $\phi_i : \cX \to \cX$. Moreover, let $F : \cX \to \R^{\sum_i d_i}$ be given by $F(\vx) = -\qty(\grad_{\vx_1} u_1(\vx), \dots, \grad_{\vx_n} u_n(\vx))$.
\begin{proposition}\label{prop:evi phi equivalence convex}
    If $\mu$ is an $\eps$-approximate $\Phi$-EVI solution of $F$, then $\mu$ is an $\eps$-approximate $\Phipl$-equilibrium $\Gamma$. The converse holds for $\eps = 0$. 
\end{proposition}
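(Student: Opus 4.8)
The plan is to reduce both solution concepts to the same linearized (gradient) quantity and to bridge them via the concavity of each $u_i(\cdot, \vx_{-i})$. The first thing I would record is the identity that links the two. Fix a player $i$ and a deviation $\phi_i \in \Phi_i$, and let $\phi \in \Phi$ be the associated map, which acts as the identity on every block except the $i$th, so that $\phi(\vx) - \vx = (\vec{0}, \dots, \phi_i(\vx) - \vx_i, \dots, \vec{0})$. Since $F(\vx) = -(\grad_{\vx_1} u_1(\vx), \dots, \grad_{\vx_n} u_n(\vx))$, the $\Phi$-EVI inner product collapses to a single block:
\[
    \ip{F(\vx), \phi(\vx) - \vx} = - \ip{\grad_{\vx_i} u_i(\vx), \phi_i(\vx) - \vx_i}.
\]
This identity is the common hinge for both directions.

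For the forward direction, suppose $\mu$ is an $\eps$-approximate $\Phi$-EVI solution. Instantiating \eqref{eq:evi} at the embedded $\phi$ and using the identity gives $\E_{\vx \sim \mu} \ip{\grad_{\vx_i} u_i(\vx), \phi_i(\vx) - \vx_i} \le \eps$. The decisive step is then concavity: the first-order inequality $u_i(\phi_i(\vx), \vx_{-i}) - u_i(\vx) \le \ip{\grad_{\vx_i} u_i(\vx), \phi_i(\vx) - \vx_i}$ holds pointwise, so taking expectations immediately yields the $\Phipl$-equilibrium inequality $\E_{\vx \sim \mu}[u_i(\phi_i(\vx), \vx_{-i}) - u_i(\vx)] \le \eps$. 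This direction is essentially one line once the hinge identity is in place.

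For the converse at $\eps = 0$, concavity now points the wrong way, so I would recover the gradient term through a directional-derivative argument that exploits the structure of $\Phi_i$. Because $\pi_i \in \Phi_i$ and $\Phi_i$ is convex, the map $\phi_i^\lambda := (1 - \lambda)\pi_i + \lambda \phi_i$ lies in $\Phi_i$ for every $\lambda \in (0, 1]$ and satisfies $\phi_i^\lambda(\vx) = \vx_i + \lambda(\phi_i(\vx) - \vx_i)$. Applying the $0$-equilibrium condition to $\phi_i^\lambda$ gives $\E_{\vx \sim \mu}[u_i(\phi_i^\lambda(\vx), \vx_{-i}) - u_i(\vx)] \le 0$; dividing by $\lambda$ and letting $\lambda \downarrow 0$ turns the difference quotient into the directional derivative $\ip{\grad_{\vx_i} u_i(\vx), \phi_i(\vx) - \vx_i}$, which is exactly the $\Phi$-EVI integrand. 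Passing the limit through the expectation then delivers $\E_{\vx \sim \mu} \ip{\grad_{\vx_i} u_i(\vx), \phi_i(\vx) - \vx_i} \le 0$, i.e.\ the $\Phi$-EVI condition.

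The main obstacle is justifying that interchange of $\lim_{\lambda \downarrow 0}$ and $\E_{\vx \sim \mu}$. Here I would again lean on concavity: for fixed $\vx$, the quotient $g_\lambda(\vx) := \tfrac{1}{\lambda}[u_i(\phi_i^\lambda(\vx), \vx_{-i}) - u_i(\vx)]$ is nondecreasing as $\lambda \downarrow 0$ and is sandwiched between $g_1(\vx) = u_i(\phi_i(\vx), \vx_{-i}) - u_i(\vx)$ (bounded below by continuity of $u_i$ on the compact $\cX$) and its limit $\ip{\grad_{\vx_i} u_i(\vx), \phi_i(\vx) - \vx_i}$ (bounded above by $\|\grad_{\vx_i} u_i(\vx)\|\,\|\phi_i(\vx) - \vx_i\| \le \poly$ via the bounded-gradient assumption and compactness of $\cX$). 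Monotone (or dominated) convergence then legitimizes the exchange. I would also remark explicitly why the converse is restricted to $\eps = 0$: for $\eps > 0$ the $\Phipl$-equilibrium bound only yields $\E_{\vx\sim\mu} g_\lambda \le \eps/\lambda$, which diverges as $\lambda \downarrow 0$ and gives no control over the gradient term.
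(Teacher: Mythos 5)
Your proposal is correct and follows essentially the same route as the paper's proof: the forward direction via pointwise concavity, and the converse via the convex-combination deviation $\phi_i^\lambda = \lambda\phi_i + (1-\lambda)\pi_i$ and a limit as $\lambda \downarrow 0$. The only (minor) difference is that you justify interchanging the limit with the expectation via monotone/dominated convergence using concavity of the difference quotient, whereas the paper invokes the Leibniz rule and phrases the conclusion contrapositively; both are valid.
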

\begin{proof}
    Suppose first that $\mu \in \Delta(\cX)$ is an $\eps$-approximate $\Phi$-EVI solution of $F$. Then, for any player $i$ and deviation $\phi_i \in \Phi_i$, we have
    \begin{align*}
        \E_{ \vx \sim \mu } \qty[u_i(\phi_i(\vx), \vx_{-i}) - u_i(\vx)] \leq \E_{\vx\sim\mu} \ip{\grad_{\vx_i} u_i(\vx), \phi_i(\vx) - \vx_i} = \E_{\vx\sim\mu} \ip{-F(\vx), \phi_i(\vx) - \vx} \le \eps,
    \end{align*}
    where the first inequality is concavity and the second is the definition of $\Phi$-EVI. Conversely, suppose that $\mu$ is an (exact) $\Phipl$-equilibrium of $\Gamma$. For $\lambda \in \R$ let $\phi_i^\lambda = \lambda \phi_i + (1 - \lambda) \pi_i$. Let $g : [0, 1] \times \cX \to \R$ be defined by
    \begin{align*}
        g(\lambda, \vx) = u_i(\phi_i^\lambda(\vx), \vx_{-i}) - u_i(\vx).
    \end{align*}
    Then, $g$ is differentiable in $\lambda$ for any fixed $\vx$, and $g$ is bounded. Let $G(\lambda) = \E_{\vx\sim\mu} g(\lambda, \vx)$. Then $G(0) = 0$, and by the Leibniz rule, $G$ is differentiable with derivative
    \begin{align*}
        G'(0) &= \E_{\vx\sim\mu} \grad_\lambda g(0, \vx) 
        \\&=  \E_{ \vx \sim \mu }  \ip{\grad_{\vx_i} u_i(\vx), \lim_{\lambda \to 0} \frac{1}{\lambda} \qty( \phi_i^\lambda(\vx) - \vx_i)}
       \\&= \E_{ \vx \sim \mu } \ip{\grad_{\vx_i} u_i(\vx), \phi_i(\vx) - \vx_i} 
       \\&= \E_{\vx\sim\mu}\ip{-F(\vx), \phi_i(\vx) - \vx},
    \end{align*}
    where we use the chain rule, then the definition of $\phi_i^\lambda$, and finally the definition of $F$.    But if $\E_{\vx\sim\mu}\ip{-F(\vx), \phi_i(\vx) - \vx} > 0$, then by definition of derivative, there is some $\lambda > 0$ for which $G(\lambda) > 0$, contradicting the definition of $\Phipl$-equilibrium.
\end{proof}

We now prove the following generalization of \Cref{prop:joint lce symmetric}.
\begin{proposition}\label{prop:symmetry equivalence}
    For a given distribution $\mu \in \Delta(\cX)$, define the distribution $\mu^n \in \Delta(\cX^n)$ by sampling $\vx\sim\mu$ and outputting $(\vx, \dots, \vx) \in \cX^n$. Then the $(\Phi_1, \dots, \Phi_n)$-equilibria of $\Gamma$ are precisely the $(\Phi, \dots, \Phi)$-equilibria of $\Gamma^\textup{sym}$.  
\end{proposition}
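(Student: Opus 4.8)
The plan is to reduce the whole statement to a single computation: evaluating, for the diagonal distribution $\mu^n$, the deviation gain of an arbitrary player $i$ in $\Gamma^\text{sym}$ against an arbitrary deviation $\psi \in \Phi$, and matching it to a single-player deviation gain in $\Gamma$. Recall that each element of $\Phi$ is a ``block map'' $\psi : \vx \mapsto (\vx_1, \dots, \phi_j(\vx), \dots, \vx_n)$ determined by a role $j$ and a map $\phi_j \in \Phi_j$ (I rename the role index to $j$ to avoid clashing with the deviating player $i$ of $\Gamma^\text{sym}$), and that in $\Gamma^\text{sym}$ such a $\psi$ acts on player $i$'s own submitted profile $\vx^i \in \cX$. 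This last point is where $\Phi \subseteq \cX^\cX$ matters: the deviation reads only $\vx^i$, which on the diagonal equals the sampled $\vx$, so it becomes an honest function of $\vx$.

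First I would record how $u_i^\text{sym}$ behaves on and near the diagonal. If every player plays the same profile $\vx$, then $\vx^{\sigma^{-1}(k)}_k = \vx_k$ for every coordinate $k$ and every $\sigma$, so the permutation average collapses and $u_i^\text{sym}(\vx, \dots, \vx) = \tfrac{1}{n}\sum_{\ell=1}^n u_\ell(\vx)$ (the average social welfare), using that $\sigma(i) = \ell$ holds for exactly $(n-1)!$ permutations. Next I would compute $u_i^\text{sym}$ when player $i$ unilaterally switches to some $\vz \in \cX$ while all others remain at $\vx$: now exactly one coordinate, namely coordinate $\sigma(i)$, is drawn from $\vz$ and the rest from $\vx$, so the identical counting gives $\tfrac{1}{n}\sum_{\ell=1}^n u_\ell(\vz_\ell, \vx_{-\ell})$. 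Substituting $\vz = \psi(\vx)$, which differs from $\vx$ only in block $j$ (where it equals $\phi_j(\vx)$), every term with $\ell \ne j$ cancels against the corresponding diagonal term, leaving the clean identity
\[
    u_i^\text{sym}\big(\psi(\vx)\text{ at }i;\ \vx\text{ elsewhere}\big) - u_i^\text{sym}(\vx, \dots, \vx) = \tfrac{1}{n}\big[u_j(\phi_j(\vx), \vx_{-j}) - u_j(\vx)\big].
\]

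Taking $\E_{\vx \sim \mu}$ of both sides, the deviation gain of player $i$ in $\Gamma^\text{sym}$ against $\psi$ equals $\tfrac{1}{n}$ times the deviation gain of player $j$ in $\Gamma$ against $\phi_j$; crucially, the dependence on the spying player $i$ has vanished from the right-hand side. Hence, as $i$ ranges over all players and $\psi$ over $\Phi$ -- equivalently, as $j$ ranges over all players and $\phi_j$ over $\Phi_j$ -- the family of $(\Phi, \dots, \Phi)$-equilibrium constraints for $\mu^n$ in $\Gamma^\text{sym}$ coincides, up to the positive factor $1/n$, with the family of $(\Phi_1, \dots, \Phi_n)$-equilibrium constraints for $\mu$ in $\Gamma$. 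Since scaling every constraint by $n > 0$ preserves the inequality ``gain $\le 0$'', exact feasibility is unchanged, so $\mu^n$ is a $(\Phi, \dots, \Phi)$-equilibrium of $\Gamma^\text{sym}$ if and only if $\mu$ is a $(\Phi_1, \dots, \Phi_n)$-equilibrium of $\Gamma$ (and for $\eps$-approximate versions the same argument yields the correspondence with tolerance rescaled by $n$). Specializing each $\Phi_j$ to linear maps recovers \Cref{prop:joint lce symmetric}. The only genuine work is the permutation bookkeeping in the two evaluations of $u_i^\text{sym}$, and the one subtlety to handle carefully is precisely that a $\Gamma^\text{sym}$ deviation may only read player $i$'s own profile, which is what makes the collapse to a function of $\vx$ on the diagonal legitimate.
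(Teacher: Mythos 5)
Your proof is correct and follows essentially the same route as the paper's: collapse the permutation average of $u_i^\text{sym}$ on the diagonal, observe that a unilateral deviation by player $i$ to a block map $\psi$ leaves exactly the role-$j$ term, and conclude that each $\Gamma^\text{sym}$ constraint equals $\tfrac{1}{n}$ times the corresponding $\Gamma$ constraint independently of $i$. The only (welcome) addition is your explicit remark that the factor $\tfrac{1}{n}$ rescales the tolerance in the $\eps$-approximate version, a point the paper's exact statement glosses over.
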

\begin{proof}
    $\mu^n$ is an $\eps$-approximate $(\Phi, \dots, \Phi)$-equilibria of $\Gamma^\text{sym}$ if and only if, for every player $i$ and linear map $\phi : \cX \to \cX$, we have
    \begin{align*}
        0 &\ge \frac{1}{n!} \sum_{\sigma \in \Sym_n } \E_{\vx\sim\mu} \qty[u_{\sigma(i)}(\vx_1, \dots, \phi(\vx)_{\sigma(i)}, \dots, \vx_n) - u_{\sigma^{-1}(i)}(\vx)] 
        \\&= \frac{1}{n} \sum_{j \in [n]} \E_{\vx\sim\mu} \qty[u_{j}(\vx_1, \dots, \phi(\vx)_j, \dots, \vx_n) - u_{j}(\vx)] .
    \end{align*}
    But this holds if and only if 
    \begin{align*}
        \E_{\vx\sim\mu} \qty[u_{j}(\vx_1, \dots, \phi_j(\vx), \dots, \vx_n) - u_{j}(\vx)] \le 0
    \end{align*}
    for every player $j$ and every $\phi_j \in \Phi_j$, which is precisely the definition of an $(\Phi_1, \dots, \Phi_n)$-equilibria of $\Gamma$.
\end{proof}
\Cref{prop:joint lce symmetric} follows by combining \Cref{prop:symmetry equivalence} and \Cref{prop:evi phi equivalence convex} in the special case when $\Phi_i = \Philin(\cX, \cX_i)$.

\subsection{Anonymous linear correlated equilibria}

For the special case where $\Phi_i = \Philin(\cX, \cX_i)$, we have coined the resulting $\Phipl$-equilibrium notion an {\em anonymous linear correlated equilibrium} (\jlce). We now compare \jlces and LCEs in concave games. We now point out some intriguing properties of \jlces, especially compared to LCEs and CEs.

In normal-form games $\Gamma$, LCEs and CEs coincide, and \jlces lie strictly between LCEs and Nash equilibria, as can be seen in \Cref{fig:bach or stravinsky}. We now elaborate on the normal-form specific game-theoretic interpretation of \jlces by giving an augmented game-based definition. For any fixed $\mu \in \Delta(\cX)$, consider the augmented game $\Gamma^\mu$ that proceeds as follows.
\begin{enumerate}
    \item A correlation device samples $\vx\sim\mu$. 
    \item Each player $i$ chooses a player $j$ (possibly not itself) and observes a sample $a_j \sim \vx_j$, {\em independently from the samples of other players}. (In particular, if multiple players choose the same player $j$, then they get independent samples from $\vx_j$.)
    \item Each player selects an action $a_i \in \cA_i$ and gets utility $u_i(a_1, \dots, a_n)$.
\end{enumerate}
\begin{proposition}\label{prop:joint lce augmented}
    A distribution $\mu \in \Delta(\cX)$ is a \jlce of $\Gamma$ if and only if the strategy profile in which every player requests an action for itself and then plays that action is a Nash equilibrium of $\Gamma^\mu$. 
\end{proposition}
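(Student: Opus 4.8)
The plan is to exhibit an exact dictionary between the (behavioral) strategies available to player $i$ in $\Gamma^\mu$ and the linear endomorphisms $\phi_i\in\Philin(\cX,\cX_i)$, under which the honest strategy corresponds to the projection $\pi_i(\vx)=\vx_i$; the proposition then reduces to the definition of \jlce (equivalently, to \Cref{prop:evi phi equivalence convex}). First I would write out a generic strategy of player $i$ in $\Gamma^\mu$: it picks a target $j$ (possibly at random, with weights $p_j$) and, upon seeing the independent sample $a_j\sim\vx_j$, plays a draw from some $\rho_j(a_j)\in\Delta(\cA_i)$. Averaging over that sample and using multilinearity of $u_i$, the expected payoff against the honest opponents (who play $\vx_{-i}$) equals $\E_{\vx\sim\mu}u_i(\phi_i(\vx),\vx_{-i})$, where $\phi_i(\vx)=\sum_j p_j T_j\vx_j$ and $T_j$ is the column-stochastic matrix with columns $\rho_j(\cdot)$. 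The honest strategy is the case $j=i$, $T_i=\mathbf I$, giving $\phi_i=\pi_i$ and payoff $\E_{\vx\sim\mu}u_i(\vx)$. Thus honest is a Nash equilibrium of $\Gamma^\mu$ exactly when, for every player $i$, no ``spying map'' $\phi_i$ yields payoff exceeding that of $\pi_i$—a priori a weaker requirement than the \jlce condition, which must rule out all of $\Philin(\cX,\cX_i)$.

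One direction is then immediate: every spying map lies in $\Philin(\cX,\cX_i)$, so if $\mu$ is a \jlce no spying deviation is profitable and honest is a Nash equilibrium.

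The converse is where the work lies, and the apparent obstacle is that the spying maps realized in $\Gamma^\mu$ are precisely the $\phi_i(\vx)=\sum_j M_j\vx_j$ with all blocks $M_j$ \emph{nonnegative} (and with block-constant column sums), whereas \jlce ranges over all of $\Philin(\cX,\cX_i)$, which a priori also contains ``compensated'' maps whose blocks carry negative entries that cancel across players so that $\sum_j M_j\vx_j$ still lands in $\Delta(\cA_i)$. The key claim that dissolves this gap is that \emph{every} $\phi_i\in\Philin(\cX,\cX_i)$ admits a nonnegative block representation, and is therefore itself a mixed spying map. To prove it, start from any block representation $\phi_i(\vx)=\sum_jM_j\vx_j$; evaluating the constraint $\phi_i(\vx)\in\Delta(\cA_i)$ at the vertices of $\cX$ gives $\sum_j(M_j)_{a,b_j}\ge 0$ for every action $a$ and every pure profile $(b_1,\dots,b_n)$, whence $\sum_j\min_{b}(M_j)_{a,b}\ge 0$ coordinatewise. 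This is exactly the condition needed to choose shift vectors $w_j$ with $\sum_j w_j=\vec0$ such that $M_j+w_j\mathbf 1^\top\ge\vec0$ entrywise for every $j$; and since $\mathbf 1^\top\vx_j=1$ and $\sum_jw_j=\vec0$, replacing $M_j$ by $M_j+w_j\mathbf 1^\top$ leaves the map $\phi_i$ itself unchanged. The resulting representation is nonnegative, and its block-constant column sums (forced by $\phi_i(\vx)\in\Delta(\cA_i)$) exhibit it as a mixed spying map.

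With the set of achievable deviations now seen to be exactly $\Philin(\cX,\cX_i)$, the converse follows: if honest is a Nash equilibrium then for every $i$ no map in $\Philin(\cX,\cX_i)$ beats $\pi_i$, which is precisely the \jlce condition. I expect the nonnegative-representation claim to be the crux. Finally, the independent resampling in step~2 of $\Gamma^\mu$ is what makes each deviation \emph{linear} in $\vx_j$—so that it corresponds to a map in $\Philin$—rather than allowing player $i$ to condition on the opponents' realized recommendations, which would take us outside the linear class (and, as the paper's footnote notes, force $\mu$ to be supported on pure Nash equilibria).
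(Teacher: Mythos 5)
Your proposal is correct, and its overall architecture mirrors the paper's: set up the dictionary between behavioral strategies in $\Gamma^\mu$ and maps $\vx \mapsto \sum_j p_j T_j \vx_j$ with $T_j$ column-stochastic, note that the forward direction is immediate since every such map lies in $\Philin(\cX,\cX_i)$, and observe that the converse hinges on showing the achievable deviations exhaust all of $\Philin(\cX,\cX_i)$. The one genuine difference is how that crux is handled: the paper outsources it entirely to \Cref{lem:fujii} (cited from \citealp{Fujii23:Bayes}), which asserts that every linear map $\cX \to \cX_i$ between products of simplices is a convex combination of maps depending on a single $\vx_j$, and then passes to vertices of the column-stochastic polytope. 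You instead prove the decomposition from scratch: evaluating $\phi_i(\vx)\in\Delta(\cA_i)$ at the pure-profile vertices gives $\sum_j \min_b (M_j)_{a,b}\ge 0$ coordinatewise, which is exactly the feasibility condition for shift vectors $w_j$ summing to $\vec 0$ with $M_j + w_j\mathbf 1^\top \ge \vec 0$; since $\mathbf 1^\top\vx_j=1$ the shifts leave $\phi_i$ unchanged, and the normalization constraint then forces each block's column sums to be a constant $p_j\ge 0$ with $\sum_j p_j=1$, exhibiting $\phi_i$ as a mixed spying map. I checked this argument and it is sound (the minimum over the product of vertex indices does factor coordinatewise, and the slack $\sum_j m_j^a\ge 0$ suffices to balance the shifts). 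So your route buys a self-contained, elementary proof of the characterization the paper takes as a black box, at the cost of a somewhat longer writeup; you also correctly identify the role of the independent resampling in keeping the deviations linear, which matches the paper's footnote.
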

The proof will use critically the following characterization of linear maps. 
\begin{lemma}[\citealp{Fujii23:Bayes}]\label{lem:fujii}
    Let $\cX = \cX_1 \times \dots\times \cX_n$ where each $\cX_i$ is a simplex $\cX_i = \Delta([m_i])$. Then every linear map $\phi : \cX \to \cX_i$ is a convex combination of linear maps $\phi_j : \cX \to \cX_i$ that only depend on a single $\vx_j$.
\end{lemma}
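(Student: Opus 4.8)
The plan is to work with the explicit affine block form of $\phi$ and to reduce the statement to a feasibility question about splitting a single constant vector among the $n$ coordinate blocks. Writing $\cX_i = \Delta([m_i])$, any linear map $\phi : \cX \to \cX_i$ can be written as $\phi(\vx) = \sum_{j=1}^n \mK_j \vx_j + \vc$, where $\mK_j \in \R^{m_i \times m_j}$ acts on the $j$th simplex factor and $\vc \in \R^{m_i}$ (affine and strictly linear maps coincide here since $\vec 1^\top \vx_j = 1$). I would first extract the two families of constraints imposed by $\phi(\cX) \subseteq \Delta([m_i])$. The normalization $\vec 1^\top \phi(\vx) \equiv 1$, after varying each $\vx_j \in \Delta([m_j])$ independently, forces $\vec 1^\top \mK_j = s_j \vec 1^\top$ for a scalar column sum $s_j$, with $\sum_j s_j + \vec 1^\top \vc = 1$. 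Nonnegativity $\phi(\vx) \ge \vec 0$, being an affine condition on the polytope $\cX$, is equivalent to nonnegativity at the vertices $(\ve_{a_1}, \dots, \ve_{a_n})$ of $\cX$, i.e.\ $\sum_j (\mK_j)_{:,a_j} + \vc \ge \vec 0$ for every tuple of column indices.

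The heart of the argument is to distribute the constant $\vc$ across the blocks so that each block individually becomes a genuine map into $\cX_i$. For each coordinate $\ell \in [m_i]$ let $m_{j,\ell} := \min_k (\mK_j)_{\ell,k}$ be the smallest entry of row $\ell$ over the columns of block $j$. Choosing in each block the column minimizing coordinate $\ell$ turns the vertex-nonnegativity constraint into exactly $\sum_j m_{j,\ell} + c_\ell \ge 0$. This inequality is precisely the feasibility certificate letting me split, for each $\ell$ separately, $c_\ell = \sum_j (c_j)_\ell$ with $(c_j)_\ell \ge - m_{j,\ell}$. Assembling these into vectors $\vc_j$ with $\sum_j \vc_j = \vc$ and setting $\tilde\mK_j := \mK_j + \vc_j \vec 1^\top$, each $\tilde\mK_j$ is entrywise nonnegative, since $(\mK_j)_{\ell,k} + (c_j)_\ell \ge m_{j,\ell} + (c_j)_\ell \ge 0$, and it retains a common column sum $\lambda_j := s_j + \vec 1^\top \vc_j$ because $\mK_j$ had constant column sums.

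I would then check that the $\lambda_j$ are convex weights: each $\lambda_j \ge 0$ as the column sum of a nonnegative matrix, and $\sum_j \lambda_j = \sum_j s_j + \vec 1^\top \vc = 1$ by normalization. Discarding the blocks with $\lambda_j = 0$ (for which $\tilde\mK_j$ vanishes) and defining $\phi_j(\vx_j) := \tilde\mK_j \vx_j / \lambda_j$ yields, for each surviving block, a linear map depending on $\vx_j$ alone whose output has nonnegative entries summing to $1$, hence lands in $\cX_i$. Finally, since $\sum_j \tilde\mK_j \vx_j = \sum_j \mK_j \vx_j + \sum_j \vc_j (\vec 1^\top \vx_j) = \phi(\vx)$, we obtain $\phi = \sum_j \lambda_j \phi_j$, which is the required convex combination.

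The main obstacle, and the single place where the simplex structure is essential, is showing that $\vc$ can be apportioned among the blocks while keeping every block nonnegative; using one scalar weight per block is too rigid, whereas the coordinatewise split decouples across the $m_i$ coordinates and matches the vertex inequalities on the nose. Everything else is bookkeeping with the column-sum normalization. This reproves the cited result of \citet{Fujii23:Bayes}.
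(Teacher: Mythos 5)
Your proof is correct, and it is worth noting that the paper itself does not prove this lemma at all---it imports it from \citet{Fujii23:Bayes} as a black box. So you have supplied a self-contained, elementary argument where the paper has only a citation. I checked the steps: the reduction of normalization to constant column sums ($\vec 1^\top \mK_j = s_j \vec 1^\top$) is valid because $\vec 1^\top \mK_j \vx_j$ must be constant as $\vx_j$ ranges over a full simplex; the reduction of nonnegativity to the vertex tuples is valid because each coordinate of $\phi$ is affine on the polytope $\cX$; and, crucially, the tightest vertex constraint for row $\ell$ decouples across blocks, giving exactly $\sum_j m_{j,\ell} + c_\ell \ge 0$ with $m_{j,\ell} = \min_k (\mK_j)_{\ell,k}$, which is precisely the feasibility condition for your coordinatewise split $c_\ell = \sum_j (c_j)_\ell$ with $(c_j)_\ell \ge -m_{j,\ell}$. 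That coordinatewise (rather than per-block scalar) apportionment of $\vc$ is the one genuinely nonobvious idea, and you correctly identify it as such: a single scalar shift per block would in general be infeasible, since different rows of $\mK_j$ can need different corrections. The remaining bookkeeping---nonnegativity of $\tilde\mK_j = \mK_j + \vc_j \vec 1^\top$, constant column sums $\lambda_j = s_j + \vec 1^\top \vc_j \ge 0$ summing to $1$, the reconstruction $\sum_j \tilde\mK_j \vx_j = \phi(\vx)$ using $\vec 1^\top \vx_j = 1$, and discarding blocks with $\lambda_j = 0$ (where $\tilde\mK_j$ must vanish, being nonnegative with zero column sums)---is all handled correctly, including the degenerate cases. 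Two cosmetic remarks: your decomposition also shows the stronger statement that the component maps $\phi_j$ can be taken column-stochastic, i.e., vertices of the relevant set (which is what the paper's proof of \Cref{prop:joint lce augmented} subsequently exploits), and your argument handles affine $\phi$ directly, which subsumes the linear case needed by the lemma.
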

\begin{proof}[Proof of \Cref{prop:joint lce augmented}]
Fix some $\mu \in \Delta(\cX)$ and suppose that it is not a \jlce, that is, there is some profitable deviation $\phi : \cX \to \cX_i$ for some player $i$. By \Cref{lem:fujii}, it suffices to assume that $\phi$ only depends on one player's strategy $\vx_j$. Moreover, a linear map $\phi : \cX_j \to \cX_i$ can be represented as $\vx_j \mapsto \mA \vx_i$, where $\mA \in \R^{m_i \times m_j}$ is column-stochastic. Again, it suffices to assume that $\phi$ is a vertex of the set of column-stochastic matrices, that is, $\mA$ has exactly one $1$ in each column. Now player $i$'s deviation benefit under deviation $\phi$ is given by
\begin{align*}
    \E_{\vx\sim\mu} [u_i(\phi_j(\vx_j), \vx_{-i}) - u_i(\vx)] =  \E_{\substack{\vx\sim\mu\\a\sim\vx}}\qty[\E_{a_j'\sim\vx_j}u_i(\phi_j(a_j'), a_{-i}) - u_i(a)],
\end{align*}
where the equality uses multilinearity of $a$. This is precisely the deviation benefit of the strategy in $\Gamma^\mu$ for player $i$ in which player $i$ chooses to sample $a_j'$ and then plays an action according to $\phi_j : [m_j] \to [m_i]$. The proposition now follows by observing that these are precisely the possible pure strategy deviations of player $i$ in $\Gamma^\mu$.
\end{proof}

We make several more observations about the relationship between \jlces and other notions of equilibrium in games.
\begin{itemize}
    \item \Cref{prop:joint lce augmented} generalizes beyond normal-form games, but needs to be modified. For example, for (single-step) Bayesian games where each $\cX_i$ is itself a product of simplices, it follows from a similar proof that, in the augmented game $\Gamma^\mu$, player $i$ should be allowed to observe its own type first, and then select both another player $j$ and a type $\theta_j$ of that player at which to ask for a recommendation. (Another way to see this is that the EVI formulation does not distinguish Bayesian games from their {\em agent form}~\citep{Kuhn53:Extensive}, where each player-type pair is treated as a separate player.)
    
    Even more generally, for extensive-form games, we can generalize \jlces using a characterization of the linear maps $\cX\to\cX_i$ due to \citet{Zhang24:Mediator}: in $\Gamma^\mu$, player $i$ first may  observe its first recommendation at any time of its choosing, and may delay its choice of which player $j$ to observe until that point.
    
    \item In normal-form games, CEs can be without loss of generality defined as distributions over {\em pure} action profiles $\cA = \cA_1 \times \dots \times \cA_n$ instead of distributions over mixed strategy profiles $\cX=  \cX_1 \times \dots \times \cX_n$~\citep{Aumann74:Subjectivity}. By ``without loss of generality,'' we mean the following: given any $\mu\in\Delta(\cX)$, define $\mu'\in\Delta(\cA)$ by sampling $\vx\sim\mu$, then $a_i\sim\vx_i$ for each $i$. Then $\mu$ is a correlated equilibrium if and only if $\mu'$ is.

    This phenomenon is {\em not} true for \jlces. Indeed, for two-player games, if $\mu'\in\Delta(\cA)$ is a \jlce, then in fact $\mu'$ is a distribution over pure Nash equilibria, which in general may not even exist! It is thus critical in our definition that $\mu$ be allowed to be a distribution over {\em mixed} strategy profiles, not just {\em pure} strategy profiles.

    \item We have shown that there is an efficient algorithm for computing {\em one} (approximate) \jlce. We leave as an open question the complexity of computing an {\em optimal} (\eg, welfare-maximizing) \jlce (when the number of players $n$ is a constant). Optimal CEs can be computed efficiently in this setting, because the set of CEs $\mu\in\Delta(\cA)$ is bounded by a small number of linear constraints; however, this fails for \jlces because, as above, we need to optimize over $\mu\in\Delta(\cX)$.
\end{itemize}

\section{Local \texorpdfstring{$\Phipl$}{Phi}-Equilibria in Nonconcave Games}
\label{sec:localPhi}

This section connects $\Phi$-EVIs with a solution concept recently put forward by~\citet{Cai24:Tractable} (see also~\citet{Ahunbay25:First}) in the context of nonconcave games (\Cref{prop:localPhi}).

\paragraph{Nonconcave games} Consider an $n$-player game in which each player $i \in [n]$ has a convex and compact strategy set $\cX_i$, and a differentiable utility function $u_i : \cX_1 \times \dots \times \cX_n \to \R$. Crucially, there is now no assumption that $u_i$ is concave. In this setting, our framework suggests the following definition.
\begin{definition}
    \label{def:localPhi}
    Given sets of functions $\Phi \subseteq  \cX_i^{\cX_i}$, an {\em $\eps$-approximate local $\Phipl$-equilibrium} in an $n$-player nonconcave game is a distribution $\mu \in \Delta(\cX_1 \times \dots \times \cX_n)$ such that for any player $i \in [n]$ and deviation $\phi_i \in \Phi_i$,
    \begin{align*}
        \E_{\vx\sim\mu} \ip{\grad_{\vx_i} u_i(\vx), \phi_i(\vx_i) - \vx} \le \eps.
    \end{align*}
\end{definition}
\Cref{th:elvi} immediately implies the following result when $\Phi_i = \Philin(\cX_i, \cX_i)$; as before, in what follows, we assume a membership oracle for each $\cX_i$.

\begin{corollary}\label{cor:nonconcave ellipsoid}
Suppose $\norm{\grad u_i(\vx)} \le B$ for every player $i \in [n]$ and profile $\vx \in \cX_1 \times \dots \times \cX_n$. Then, there is a $\poly(d, \log(B/\eps))$-time algorithm that outputs an $\eps$-approximate local $\Phipl$-equilibrium.
\end{corollary}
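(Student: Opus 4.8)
The plan is to read \Cref{cor:nonconcave ellipsoid} as an instantiation of \Cref{th:elvi} on the product domain. Set $\cX := \cX_1 \times \dots \times \cX_n \subseteq \R^d$ with $d = \sum_i d_i$, let $F(\vx) := -(\grad_{\vx_1} u_1(\vx), \dots, \grad_{\vx_n} u_n(\vx))$, and take $\Phi$ to be the set of \emph{block-diagonal} linear endomorphisms of $\cX$, namely the maps $\vx \mapsto (\mK_1 \vx_1, \dots, \mK_n \vx_n)$ with each $\mK_i \in \Philin(\cX_i, \cX_i)$. First I would check that $\eps$-approximate $\Phi$-EVI solutions coincide with $\eps$-approximate local $\Phipl$-equilibria. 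For such a $\phi$ we have $\ip{F(\vx), \phi(\vx) - \vx} = -\sum_{i=1}^n \ip{\grad_{\vx_i} u_i(\vx), \mK_i \vx_i - \vx_i}$, so the $\Phi$-EVI constraint \eqref{eq:evi} reads $\E_{\vx \sim \mu} \sum_{i=1}^n \ip{\grad_{\vx_i} u_i(\vx), \mK_i \vx_i - \vx_i} \le \eps$. Since $\mI \in \Philin(\cX_j, \cX_j)$, specializing to a $\phi$ whose blocks are the identity except in coordinate $i$ recovers exactly the per-player inequality of \Cref{def:localPhi}; conversely, these per-player inequalities generate the whole block-diagonal family. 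Hence the two solution sets agree.

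Next I would verify the hypotheses of \Cref{th:elvi}. By construction every element of $\Phi$ maps $\cX$ into $\cX$ and is linear, so $\Phi$ consists of linear endomorphisms. The operator is bounded: $\norm{F(\vx)}^2 = \sum_{i=1}^n \norm{\grad_{\vx_i} u_i(\vx)}^2 \le n B^2$, hence $\norm{F(\vx)} \le \sqrt{n}\, B$, and as $n \le d$ the logarithm of this bound is absorbed into $\poly(d, \log(B/\eps))$. A membership oracle for $\cX$ is built from the assumed per-player oracles coordinate-wise, via $\mathbbm{1}\{\vx \in \cX\} = \prod_{i=1}^n \mathbbm{1}\{\vx_i \in \cX_i\}$, in $\poly(d)$ time.

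The step that needs care---and the main obstacle---is that the proof of \Cref{th:elvi} drives the ellipsoid-against-hope machinery (\Cref{th:eahrelax}) through the semi-separation oracle of \Cref{lemma:semiseparation}, which is stated for the \emph{full} endomorphism set $\Philin(\cX, \cX)$, whereas our $\Phi$ is the strictly smaller block-diagonal subset. I would supply the oracle for $\Phi$ directly: given a candidate $\mK = \diag(\mK_1, \dots, \mK_n)$, run \Cref{lemma:semiseparation} on each block $\mK_i$ against $\cX_i$; if every block returns a fixed point $\vx_i = \mK_i \vx_i \in \cX_i$, then $\vx = (\vx_1, \dots, \vx_n)$ is a fixed point of $\mK$ in $\cX$, while if some block returns a hyperplane separating $\mK_i$ from $\Philin(\cX_i, \cX_i)$, padding it with zeros on the remaining blocks separates $\mK$ from $\Phi$. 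This takes $\poly(d)$ time and $\poly(d)$ oracle calls. With this oracle in hand, the feasibility program $\{\phi \in \Phi : \ip{F(\vx), \phi(\vx) - \vx} \le -\eps \ \forall \vx \in \cX\}$ is infeasible---each $\mK_i$ has a fixed point in $\cX_i$ by Brouwer, so their concatenation is a fixed point of $\phi$ at which the left-hand side is $0$---and \Cref{th:eahrelax} extracts a certificate of infeasibility, the desired $\mu$, in $\poly(d, \log(B/\eps))$ time. Combined with the equivalence of the first paragraph, this proves \Cref{cor:nonconcave ellipsoid}.
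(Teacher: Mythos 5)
Your proposal is correct, but it takes a genuinely different route from the paper's. The paper treats \Cref{cor:nonconcave ellipsoid} as an \emph{immediate} instantiation of \Cref{th:elvi} with the \emph{full} set $\Philin(\cX,\cX)$ of linear endomorphisms on the product $\cX = \cX_1 \times \dots \times \cX_n$ (with $F$ the stacked negative gradients, exactly as you define it): every per-player deviation $\vx \mapsto (\vx_1, \dots, \mK_i \vx_i, \dots, \vx_n)$ is itself a linear endomorphism of $\cX$, so any $\eps$-approximate $\Philin$-EVI solution is in particular an $\eps$-approximate local $\Phipl$-equilibrium, and \Cref{lemma:semiseparation} applies verbatim to $\cX$ (whose membership oracle is assembled coordinate-wise, just as you do). You instead restrict the deviation set to block-diagonal maps and rebuild the semi-separation oracle by invoking \Cref{lemma:semiseparation} once per block, concatenating fixed points and zero-padding separating hyperplanes; this is sound---in the block parametrization $\Phi = \prod_i \Philin(\cX_i, \cX_i)$, so the lifted hyperplane does separate---provided you note explicitly that the ellipsoid must be run in the block-coordinate space $\R^{\sum_i d_i^2}$ (so that its centers, which are the oracle's inputs, are automatically block-diagonal), which your phrasing ``given a candidate $\mK = \diag(\mK_1, \dots, \mK_n)$'' presupposes. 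What each approach buys: yours runs the ellipsoid in $\sum_i d_i^2$ rather than $d^2$ variables and matches \Cref{def:localPhi} exactly, while the paper's yields a stronger output for free, since a $\Philin(\cX,\cX)$-EVI solution additionally withstands cross-player deviations $\Philin(\cX, \cX_i)$---precisely the extension flagged in the last bullet of \Cref{sec:localPhi}. One small overstatement to fix: your claim that the two solution sets ``agree'' holds only up to a factor of $n$ in $\eps$ in the converse direction (a general block-diagonal deviation sums $n$ per-player inequalities, each bounded by $\eps$, giving $n\eps$); fortunately, only the forward direction---every $\eps$-approximate block-diagonal $\Phi$-EVI solution is an $\eps$-approximate local $\Phipl$-equilibrium---is needed for the corollary, and that direction you establish with the correct constant.
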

Similarly, the existence of linear swap-regret minimizers for arbitrary polytopes $\cX_i$~\citep{Daskalakis24:Efficient} immediately implies the following.

\begin{corollary}
    There is an independent no-regret learning algorithm that computes $\eps$-approximate local $\Phipl$-equilibria in $\poly(d, 1/\eps)$ rounds and $\poly(d, 1/\eps)$ per-round runtime.
\end{corollary}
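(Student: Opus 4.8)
The plan is to let every player independently run the linear-swap-regret minimizer of \citet{Daskalakis24:Efficient} over its own polytope $\cX_i$, and then invoke the same regret-to-equilibrium averaging argument used for $\Phi$-EVIs in \Cref{sec:main}. Concretely, I would fix the following protocol. At each round $t \in [T]$ every player $i$ outputs a strategy $\vx_i^{(t)} \in \cX_i$; writing $\vx^{(t)} = (\vx_1^{(t)}, \dots, \vx_n^{(t)})$ for the resulting profile, player $i$ receives its own gradient $\grad_{\vx_i} u_i(\vx^{(t)})$ as feedback and interprets it as a linear reward, \ie, its round-$t$ utility is the linear map $\vx_i \mapsto \ip{\grad_{\vx_i} u_i(\vx^{(t)}), \vx_i}$. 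Because each update uses only player $i$'s own iterate and own gradient, the dynamics are decentralized, as required of an independent no-regret algorithm.

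Next I would invoke the cited result as a black box. Under the assumption $\norm{\grad_{\vx_i} u_i(\vx)} \le B$, the minimizer of \citet{Daskalakis24:Efficient} guarantees, with $\poly(d)$ per-round runtime and oracle access to $\cX_i$, a linear-swap regret
\[
    \max_{\phi_i \in \Philin(\cX_i,\cX_i)} \sum_{t=1}^T \ip{\grad_{\vx_i} u_i(\vx^{(t)}), \phi_i(\vx_i^{(t)}) - \vx_i^{(t)}} \le \poly(d)\, B\, \sqrt{T}.
\]
Taking $\mu$ to be the uniform distribution over the observed profiles $\{\vx^{(1)}, \dots, \vx^{(T)}\}$, linearity of expectation yields, for every player $i$ and every $\phi_i \in \Philin(\cX_i,\cX_i)$,
\[
    \E_{\vx\sim\mu} \ip{\grad_{\vx_i} u_i(\vx), \phi_i(\vx_i) - \vx_i} = \frac{1}{T} \sum_{t=1}^T \ip{\grad_{\vx_i} u_i(\vx^{(t)}), \phi_i(\vx_i^{(t)}) - \vx_i^{(t)}} \le \frac{\poly(d)\, B}{\sqrt{T}}.
\]
Choosing $T = \Theta(\poly(d)\, B^2/\eps^2) = \poly(d, 1/\eps)$ (with $B$ entering polynomially) drives the right-hand side below $\eps$ for all players simultaneously, so $\mu$ is an $\eps$-approximate local $\Phipl$-equilibrium in the sense of \Cref{def:localPhi}. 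The per-round cost is $\poly(d)$ and the number of rounds is $\poly(d, 1/\eps)$, matching the claim.

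The only substantive ingredient is the existence of an efficient linear-swap-regret minimizer over an \emph{arbitrary} polytope, which is precisely the contribution of \citet{Daskalakis24:Efficient} that I would cite rather than reprove. The sole point requiring care is that the per-round feedback is a genuinely linear function of $\vx_i$ --- which it is, since $u_i$ is linearized through its gradient --- so that the minimizer's ``linear swap'' guarantee matches the linear deviations $\phi_i \in \Philin(\cX_i,\cX_i)$ appearing in \Cref{def:localPhi} verbatim; everything else is the standard averaging argument, and no concavity of $u_i$ is used.
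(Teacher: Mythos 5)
Your proposal is correct and follows exactly the route the paper intends: the paper derives this corollary in one line from the existence of linear-swap-regret minimizers over arbitrary polytopes due to \citet{Daskalakis24:Efficient}, combined with the standard averaging argument converting vanishing $\Philin$-regret on the linearized (gradient) feedback into an approximate local $\Phipl$-equilibrium per \Cref{def:localPhi}. The only minor discrepancy is that you claim $\poly(d)$ per-round runtime, whereas the paper more cautiously states $\poly(d,1/\eps)$ since the ellipsoid-based per-iteration computation in \citet{Daskalakis24:Efficient} can depend on the target precision; this does not affect the validity of the argument.
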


\citet{Cai24:Tractable} also studied the problem of computing local $\Phipl$-equilibria in nonconcave games. They defined $\eps$-local $\Phipl$-equilibria instead by restricting the magnitudes of the deviations to the ``first-order'' regime where local deviations cannot change the gradients by too much. In particular, they assume that utility functions $u_i$ are smooth, in the sense that
\begin{align*}
    \norm{\grad_{\vx_i} u_i(\vx_i, \vx_{-i}) - \grad_{\vx_i} u_i(\vx_i', \vx_{-i})}_2 \le L \norm{\vx_i - \vx_i'} \quad \forall \vx_i, \vx_i' \in \cX_i, \forall \vx_{-i} \in \bigtimes_{i' \neq i} \cX_{i'},
\end{align*}
where $L > 0$ is a constant. Then, they restrict deviations to only slightly perturb the strategies, that is, for a given set $\Phi_i \subseteq \cX_i^{\cX_i}$, they define a set $$\Phi_{i}(\delta) := \{ \lambda \phi_i + (1 - \lambda) \Id : \phi_i \in \Phi_i, \lambda \le \delta / D_i \},$$ where $\Id : \cX \to \cX$ is the identity function and $D_i$ is the $\ell_2$-diameter of $\cX_i$, \ie, $\norm{\vx - \vx'}_2 \le D_i$ for all $\vx, \vx' \in \cX_i$. With this restriction, they show \citep[Lemma~1 and Theorem~10]{Cai24:Tractable} that $\Phi$-regret minimizers converge to $\Phi(\delta)$-equilibria, in the sense that
\begin{align*}
    \E_{\vx\sim\mu}[u_i(\phi_i(\vx_i), \vx_{-i}) - u_i(\vx)] \le \frac{\delta}{D_i}\frac{\phireg_i^{(T)}}{T}+\frac{\delta^2 L}{2},
\end{align*}
where $\phireg_i$ is the $\Phi_i$-regret of Player $i \in [n]$, for all players $i$ and deviations $\phi_i \in \Phi_i(\delta)$. Our results imply theirs, in the following sense.

\begin{proposition}
    \label{prop:localPhi}
    Any $\eps$-approximate local $\Phipl$-equilibrium $\mu$ (per~\Cref{def:localPhi}) satisfies
    $$\E_{\vx\sim\mu}[u_i(\phi_i(\vx_i), \vx_{-i}) - u_i(\vx)] \le \frac{\delta\eps}{D_i}+\frac{\delta^2 L}{2}$$
    for any player $i \in [n]$ and deviation $\phi_i \in \Phi_i(\delta)$.
\end{proposition}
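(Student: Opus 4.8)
The plan is to reduce the utility-difference bound to the first-order (gradient) condition of \Cref{def:localPhi} via the standard descent lemma, paying a quadratic remainder that is controlled by the smoothness constant $L$ and the size of the perturbation. First I would fix a player $i$ and a deviation $\phi_i \in \Phi_i(\delta)$, and use the very definition of $\Phi_i(\delta)$ to write $\phi_i = \lambda \phi_i' + (1-\lambda)\Id$ for some base deviation $\phi_i' \in \Phi_i$ and some $\lambda \le \delta/D_i$. The point of this decomposition is that $\phi_i(\vx_i) - \vx_i = \lambda\,(\phi_i'(\vx_i) - \vx_i)$, so that both the linear and the quadratic terms appearing below factor cleanly through $\lambda$.

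Next I would invoke smoothness. Fixing $\vx_{-i}$ and applying the $L$-Lipschitz-gradient assumption to the single-variable map $\vx_i \mapsto u_i(\vx_i, \vx_{-i})$, the descent lemma yields
\begin{equation*}
    u_i(\phi_i(\vx_i), \vx_{-i}) \le u_i(\vx) + \ip{\grad_{\vx_i} u_i(\vx), \phi_i(\vx_i) - \vx_i} + \frac{L}{2} \norm{\phi_i(\vx_i) - \vx_i}^2.
\end{equation*}
Substituting $\phi_i(\vx_i) - \vx_i = \lambda(\phi_i'(\vx_i) - \vx_i)$ and using $\norm{\phi_i'(\vx_i) - \vx_i} \le D_i$ (both points lie in $\cX_i$, whose $\ell_2$-diameter is $D_i$), the remainder is at most $\frac{L}{2}\lambda^2 D_i^2$. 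Taking expectations over $\vx \sim \mu$ and pulling the factor $\lambda$ out of the (now linear) first term then gives
\begin{equation*}
    \E_{\vx\sim\mu}[u_i(\phi_i(\vx_i), \vx_{-i}) - u_i(\vx)] \le \lambda\, \E_{\vx\sim\mu} \ip{\grad_{\vx_i} u_i(\vx), \phi_i'(\vx_i) - \vx_i} + \frac{L}{2}\lambda^2 D_i^2.
\end{equation*}

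Finally I would apply the hypothesis that $\mu$ is an $\eps$-approximate local $\Phipl$-equilibrium to the base deviation $\phi_i' \in \Phi_i$, which bounds the expectation in the linear term by $\eps$; together with $\lambda \le \delta/D_i$ this gives $\lambda \eps \le \delta\eps/D_i$ and $\frac{L}{2}\lambda^2 D_i^2 \le \frac{\delta^2 L}{2}$, establishing the claimed inequality. I do not anticipate a genuine obstacle here: the only points requiring care are that the Lipschitz-gradient hypothesis is stated per coordinate (holding $\vx_{-i}$ fixed), so the descent lemma must be applied to the one-variable restriction $u_i(\cdot, \vx_{-i})$ rather than to $u_i$ jointly; and that one uses the correct decomposition of the perturbed deviation so that the $\lambda$ and $\lambda^2$ factors land on the linear and quadratic terms respectively.
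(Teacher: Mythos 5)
Your proposal is correct and follows essentially the same route as the paper: decompose $\phi_i = \lambda\phi_i^* + (1-\lambda)\Id$, apply the descent lemma to $u_i(\cdot,\vx_{-i})$, bound the quadratic remainder by $\frac{L}{2}\lambda^2 D_i^2 \le \frac{\delta^2 L}{2}$, and invoke the equilibrium condition on the base deviation. If anything, your version is slightly more careful than the paper's, since you keep the factor $\lambda$ attached to the linear term until after taking expectations and only then use $\lambda \le \delta/D_i$ together with $\eps \ge 0$.
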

\begin{proof}
Write $\phi_i = \lambda \phi_i^* + (1 - \lambda) \Id$ for some $\phi_i^* \in \Phi_i$. Then,
    \begin{align*}
        u_i(\phi_i(\vx_i), \vx_{-i}) - u_i(\vx) &\le \ip{\grad_{\vx_i} u_i(\vx), \phi_i(\vx_i) - \vx_i} + \frac{L}{2} \norm{\phi_i(\vx_i) - \vx_i}_2^2 \\
        &\le \frac{\delta}{D_i} \ip{\grad_{\vx_i} u_i(\vx), \phi_i^*(\vx_i) - \vx_i} + \frac{\delta^2 L}{2},
    \end{align*}
    where the last inequality uses the fact that $\lambda \le \delta/D_i$ and therefore $\norm{\phi_i(\vx_i) - \vx_i}_2 \le \lambda \norm{\phi_i^*(\vx_i) - \vx_i}_2 \le \lambda D_i \le \delta$.
    Taking expectations over $\mu$ and applying the definition of $\eps$-approximate local $\Phipl$-equilibrium completes the proof.
\end{proof}

However, our results improve on theirs in several ways:
\begin{itemize}
    \item We believe that the formulation of local $\Phipl$-equilibria using gradients directly instead of restricting to small perturbations is more natural and more directly conveys what it means for a distribution to be a local $\Phipl$-equilibrium without introducing too many hyperparameters; one of the notions proposed by \citet[Definition 6]{Ahunbay25:First} also shares this advantage.
    \item Our results do not require the smoothness of the utility functions $u_i$.
    \item We have an ellipsoid-based algorithm that computes local $\Phipl$-equilibria with convergence rate depending on $\log(1/\eps)$, whereas no-regret algorithms only achieve $\poly(1/\eps)$ convergence rate.
    \item Although we do not explicitly state it here, \Cref{def:localPhi} and \Cref{cor:nonconcave ellipsoid} extend directly to the case where $\Phi_i = \Philin(\cX, \cX_i)$ (instead of $\Philin(\cX_i, \cX_i)$). Per \Cref{sec:appendix-joint}, this can yield an even smaller set of equilibria.
\end{itemize}

\end{document}